\batchmode
\makeatletter
\makeatother
\documentclass[12pt,english]{article}

\usepackage[]{graphicx}\usepackage[]{color}
\makeatletter
\def\maxwidth{ %
  \ifdim\Gin@nat@width>\linewidth
    \linewidth
  \else
    \Gin@nat@width
  \fi
}
\makeatother

\definecolor{fgcolor}{rgb}{0.345, 0.345, 0.345}

\usepackage{mathtools}

\usepackage{framed}				
\makeatletter
 {\par\unskip\endMakeFramed%
 \at@end@of@kframe}
\makeatother

\definecolor{shadecolor}{rgb}{.97, .97, .97}
\definecolor{messagecolor}{rgb}{0, 0, 0}
\definecolor{warningcolor}{rgb}{1, 0, 1}
\definecolor{errorcolor}{rgb}{1, 0, 0}
\usepackage{float}
\usepackage{alltt}
\usepackage{lmodern}

\usepackage[T1]{fontenc}
\usepackage[latin9]{inputenc}
\usepackage[letterpaper]{geometry}
\geometry{verbose,tmargin=1in,bmargin=1in,lmargin=1in,rmargin=1in}
\setcounter{secnumdepth}{2}
\setcounter{tocdepth}{2}
\usepackage{color}
\usepackage{babel}
\usepackage{amsmath}
\usepackage{amsthm}
\usepackage{amssymb}
\usepackage{tabularx}
\usepackage{bm}
\usepackage{pifont}

\usepackage{graphicx}
\usepackage{array}
\usepackage{tikz}
\usepackage[title]{appendix}
\newcolumntype{C}[1]{>{\centering\arraybackslash}m{#1}}
\usepackage{setspace}
\usepackage[ruled,vlined]{algorithm2e}
\onehalfspacing
\usepackage{breakcites}
\usepackage[unicode=true,pdfusetitle,
 bookmarks=true,bookmarksnumbered=false,bookmarksopen=false,
 breaklinks=true,pdfborder={0 0 1},backref=false,colorlinks=true]
 {hyperref}
 \hypersetup{urlcolor=blue}
\hypersetup{citecolor=blue}
 \usepackage{xr-hyper}
\usepackage{multirow}
\makeatletter
\theoremstyle{plain}
\newtheorem{assumption}{\protect\assumptionname}
\theoremstyle{plain}
\newtheorem{thm}{\protect\theoremname}
  \theoremstyle{remark}
  \newtheorem{rem}{\protect\remarkname}
  \theoremstyle{plain}
  \newtheorem{lem}{\protect\lemmaname}
  \theoremstyle{plain}
  \newtheorem{cor}{\protect\corollaryname}
    \theoremstyle{plain}
    \newtheorem{example}{\protect\examplename}
     \theoremstyle{plain}

\providecommand{\customgenericname}{}
\newcommand{\newcustomtheorem}[2]{%
  \newenvironment{#1}[1]
  {%
   \renewcommand\customgenericname{#2}%
   \renewcommand\theinnercustomgeneric{##1}%
   \innercustomgeneric
  }
  {\endinnercustomgeneric}
}
\numberwithin{equation}{section}
\newcustomtheorem{customthm}{Theorem}
\newcustomtheorem{customlem}{Lemma}
\newcustomtheorem{customassumption}{Assumption}
\newcustomtheorem{customprop}{Proposition}
\newcustomtheorem{customexample}{Example}
\newcustomtheorem{customdef}{Definition}
\newcustomtheorem{customcor}{Corollary}
\newcustomtheorem{customrem}{Remark}

\@ifundefined{date}{}{\date{}}
%
\newtheorem{appxlem}{Lemma}[section]
\newtheorem{appxthm}{Theorem}[section]

\newtheorem{appxrem}{Remark}[section]

\newcommand\independent{\protect\mathpalette{\protect\independenT}{\perp}}
\def\independenT#1#2{\mathrel{\rlap{$#1#2$}\mkern2mu{#1#2}}}

\usepackage{enumitem}
\newlist{steps}{enumerate}{1}
\setlist[steps, 1]{label = Step \arabic*}

\renewcommand{\qed}{\hfill
\small$\mathscr{Q.E.D.}$\normalsize}

\allowdisplaybreaks

\usepackage{bbm}
\usepackage{chngcntr}

\usepackage{indentfirst}
\usepackage{breakcites}
\makeatother
\usepackage{mathrsfs}
\SetKwInput{KwInput}{initialize}
\SetKwInput{KwOutput}{output}
  \providecommand{\lemmaname}{Lemma}
  \providecommand{\remarkname}{Remark}
\providecommand{\corollaryname}{Corollary}
\providecommand{\theoremname}{Theorem}
\providecommand{\examplename}{Example}
\providecommand{\assumptionname}{Assumption}
\providecommand{\propositionname}{Proposition}

\IfFileExists{upquote.sty}{\usepackage{upquote}}{}

\usepackage{natbib}
\usepackage{caption}
\bibliographystyle{econometrica}
\makeatletter
\newcommand*{\addFileDependency}[1]{
  \typeout{(#1)}
  \@addtofilelist{#1}
  \IfFileExists{#1}{}{\typeout{No file #1.}}
}
\makeatother
 

\begin{document}
\begin{titlepage}
\title{Regularized Quantile Regression with Interactive Fixed Effects
\thanks{This paper is based on the third chapter of my doctoral dissertation at Columbia. I thank Jushan Bai, Sokbae (Simon) Lee and Bernard Salani\'e, who were gracious with their advice, support and feedback. I have also greatly benefited from comments and discussions with Songnian Chen, Roger Koenker, Jos\'e Luis Montiel Olea, Roger Moon, Serena Ng,  J\"{o}rg Stoye, Peng Wang and participants at the Columbia Econometrics Colloquium. All errors are my own.}
}
\author{Junlong Feng\thanks{Department of Economics, Hong Kong University of Science and Technology, Hong Kong SAR. E-mail address: jlfeng@ust.hk.}
}
\date{\today}\maketitle
\begin{abstract}
\noindent  This paper studies large $N$ and large $T$ conditional quantile panel data models with interactive fixed effects. We propose a nuclear norm penalized estimator of the coefficients on the covariates and the low-rank matrix formed by the fixed effects. The estimator solves a convex minimization problem, not requiring pre-estimation of the (number of the) fixed effects. It also allows the number of covariates to grow slowly with $N$ and $T$. We derive an error bound on the estimator that holds uniformly in quantile level. The order of the bound implies uniform consistency of the estimator and is nearly optimal for the low-rank component. Given the error bound, we also propose a consistent estimator of the number of fixed effects at any quantile level. To derive the error bound, we develop new theoretical arguments under primitive assumptions and new results on random matrices that may be of independent interest. We demonstrate the performance of the estimator via Monte Carlo simulations.\\
\vspace{0in}\\
\noindent\textbf{Keywords:} Quantile regression, panel data models, interactive fixed effects, regularized regression.\\
\vspace{0in}\\
\noindent\textbf{JEL Codes:} C21, C23, C31, C33, C55\\
\bigskip
\end{abstract}
\setcounter{page}{0}
\thispagestyle{empty}
\end{titlepage}

\section{Introduction}

Panel data models are widely applied in economics and finance. Allowing for rich heterogeneity, interactive fixed effects are important components in such models in many applications. In applications such as asset pricing, it could be desirable to explain or forecast an outcome variable at certain quantile levels. However, the well studied mean regression with interactive fixed effects (e.g. \cite{pesaran2006estimation}, \cite{bai2009panel} and \cite{moon2015linear}) misses such distributional heterogeneity.

In this paper, we consider a panel data model where the conditional quantile of an outcome variable is linear in the covariates and in the product of time- and individual-fixed effects. These fixed effects are unobservables that may be correlated with the covariates. The number of covariates is allowed to grow slowly to infinity with $N$ and $T$. Meanwhile, we allow the coefficients, the set of the effective fixed effects and the realization of each of them to all be quantile level dependent, generating large modeling flexibility.

To estimate the model, this paper proposes a \textit{nuclear norm penalized estimator}. By deriving its theoretical error bound, we show that the estimator can consistently estimate the coefficients and the (realizations of) the interactive fixed effects uniformly in quantile level. The estimator solves a convex problem and computes fast in practice even in large panel data sets based on our proposed augmented Lagrangian multiplier algorithm. Implementing the estimator does not require pre-estimation of the number of fixed effects or the fixed effects themselves. 

To illustrate the estimator, let us consider a simple example where only the coefficients are quantile level dependent: For a panel data set $\{(Y_{it},X_{it}):i=1,...,N;t=1,...,T\}$, suppose the $u$-th conditional quantile of $Y_{it}$ given $p$ covariates $X_{it}$ and $r$ time- and individual-fixed effects $(F_{t},\Lambda_{i})$ is $q_{Y_{it}|X_{it},F_{t},\Lambda_{i}}(u)\coloneqq X_{it}'\beta_{0}(u)+F_{t}'\Lambda_{i}$
where both $F_{t}$ and $\Lambda_{i}$ are $r\times 1$ vectors. The fixed effects form an $N\times T$ matrix $L_{0}\coloneqq (\Lambda_{1},...,\Lambda_{N})'(F_{1},...,F_{T})$ whose rank is at most $r$. Thus, the dense matrix $L_{0}$ is low-rank when $r$ is small relative to $N$ and $T$. Exploiting such low-rankness, our estimator, inspired by the seminal work by \cite{candes2009exact}, jointly estimates $(\beta_{0}(u),L_{0})$ for a given quantile level $u\in (0,1)$ by solving
\begin{equation}\label{eq.estimator.intro}
\min_{\beta\in \mathcal{B},L\in\mathcal{L}} \frac{1}{NT}\sum_{i,t}\rho_{u}(Y_{it}-X_{it}'\beta-L_{it})+\lambda ||L||_{*}
\end{equation}
where $\rho_{u}$ is the standard check function in the quantile regression literature, $\lambda$ is the positive penalty coefficient, $||L||_{*}$ is the nuclear norm of the $N\times T$ matrix $L$, and $\mathcal{B}$ and $\mathcal{L}$ are convex parameter spaces about which we will be specific later. 

The key component of the estimator is the convex nuclear norm penalty. Summing the singular values of a matrix, the nuclear norm is to the rank, counting the nonzero singular values, what the convex $\ell_{1}$-norm is to the nonconvex $\ell_{0}$-norm of the vector of the singular values. Hence, the nuclear norm penalty can be viewed as the matrix counterpart of the LASSO penalty in regression with high-dimensional regressors. Being a convex surrogate of the rank functional, we show that this penalty is effective to deliver consistent estimates under low-rankness of $L_{0}$.  

The main benefits of setting up the minimization problem as in \eqref{eq.estimator.intro} are that the objective function is convex in $(\beta,L)$ and that one does not need to know $r$ before implementation. To highlight these benefits, let us consider a natural alternative estimator
\begin{equation}\label{eq.estimator.iter}
\min_{\beta\in \mathcal{B},\Lambda\in\Xi,F\in\mathcal{F}} \frac{1}{NT}\sum_{i,t}\rho_{u}(Y_{it}-X_{it}'\beta-\Lambda_{i}'F_{t})
\end{equation}
\cite{ando2019quantile} study a similar estimator where the coefficients are $i$-specific. The objective function in \eqref{eq.estimator.iter} is nonconvex in the parameters $(\beta,\Lambda,F)$. Due to nonsmoothness of the check function $\rho_{u}$, there does not exist a closed form solution to the minimization problem \eqref{eq.estimator.iter}. Implementation needs to be carried out iteratively. Then, nonconvexity leads to two potential issues. First, one may obtain a local minimum which can be arbitrarily far from the global one. Second, solving the minimization problem may be computationally intensive, especially in large panel data sets. In the simulation experiments in the paper, we will see that the penalized estimator we propose indeed outperforms this alternative estimator in the computation aspect as it saves computation time by much. Meanwhile, to make the alternative estimator \eqref{eq.estimator.iter} feasible, $r$ needs to be known or pre-estimated. This step results in additional computation burden and a mis-specified $r$ may lead to inconsistent estimates. 

Although as we show in this paper, the nuclear norm penalty slows down the rate of convergence of the coefficient estimator when the number of regressors is small, we find from the simulation experiments that its finite sample bias is actually comparable with the alternative estimator \eqref{eq.estimator.iter} even though we use the true $r$ for the latter. Moreover, we propose a consistent estimator of $r$ based on our penalized estimator. With this rank estimator, treating our consistent penalized estimator as an initial value for the iterative estimator \eqref{eq.estimator.iter} may potentially avoid the local-minimum problem, reduce computation time, and remove the bias in the penalized estimator. Finally, although the coefficient estimator may have a slower rate of convergence, we show that the error bound on the estimator of $L_{0}$ can be nearly optimal in squared Frobenius norm, not affected by penalization.

With the dense latent component and the nonsmooth objective function involved, deriving the estimator's uniform error bound is challenging. We prove new results on random matrices for this purpose. Moreover, we develop novel theoretical arguments which relax some usually made assumptions or replace some high-level technical conditions in the panel data quantile regression literature with primitive ones that are easier to interpret. These results will be introduced later in related sections and may be of independent interest. 

This paper adds to the literature of panel data quantile regression. Since \cite{koenker2004quantile}, panel data quantile regression began to draw increasing attention. \cite{abrevaya2008effects}, \cite{lamarche2010robust}, \cite{canay2011simple}, \cite{kato2012asymptotics}, \cite{galvao2013estimation} and \cite{galvao2016smoothed} study quantile regression with one-way or two-way fixed effects. \cite{harding2014estimating} consider interactive fixed effects with endogenous regressors. They require the factors to be pre-estimated or known. \cite{chen2019fixed} considers quantile regression with interactive fixed effects. They need to first estimate the time fixed effects, or, the factors, that are assumed to be quantile-level-independent. They then estimate the coefficients and the individual fixed effects via smoothed quantile regression. \cite{chen2019quantile} propose a quantile factor model without regressors. They estimate the factors and the factor loadings via nonconvex minimization similar to \eqref{eq.estimator.iter}. Pre-estimation of the number of factors is needed. \cite{ando2019quantile} consider quantile regression with heterogeneous coefficients and a factor structure. They propose both a frequentist and a Bayesian estimation procedure. The number of factors also needs to be estimated first, and the minimization problem is noncovex. Both \cite{chen2019quantile} and \cite{ando2019quantile} establish consistency pointwise in quantile level, while we focus on uniform consistency. On the technical side, both impose stronger assumptions on the conditonal density of the outcome variable than our paper\footnote{We will discuss these differences in detail in Appendix \ref{app.technical}.}. In our simulation study in Section \ref{3.sec5}, we compare our estimator with \cite{ando2019quantile} and find the estimates are comparable while our procedure is computationally more  efficient. 

Another literature this paper speaks to is on nuclear norm penalized estimation. This literature was initially motivated by low-rank matrix completion or recovery problems in computer science and statistics (e.g. \cite{candes2009exact}, \cite{ganesh2010dense}, \cite{zhou2010stable}, \cite{candes2011robust}, \cite{hsu2011robust}, \cite{negahban2011estimation}, \cite{agarwal2012noisy} and \cite{negahban2009unified} among others). In this literature, the outcome matrix is usually modeled as the sum of a low-rank matrix and some other matrices that are for instance, sparse or Gaussian. The primary goal is to estimate the low-rank or the sparse matrix. This setup is different from our paper. Nuclear norm penalized estimation and matrix completion related topics have also gained interest in econometrics recently. \cite{athey2018matrix}, \cite{moon2019nuclear}\footnote{\cite{moon2019nuclear} also briefly discuss nuclear norm penalized quantile regression with a single regressor as an extension. Using a different approach than this paper, they focus on pointwise (in quantile level) convergence rate of the coefficient estimator. In this paper, we obtain uniform rates for both the coefficients and the low-rank component. Also, the number of covariates can be more than one and growing to infinity slowly.} and \cite{chernozhukov2018inference} investigate nuclear norm penalized mean regression with interactive fixed effects. \cite{beyhum2019square} also consider mean regression with interactive fixed effects but they use a square-root nuclear norm penalty. \cite{bai2019robust} propose a nuclear norm regularized median regression for robust principal component anaylsis for fat tailed data. \cite{bai2019matrix} consider imputation of missing data and counterfactuals. \cite{bai2019rank} study penalized estimation for approximate factor models with singular values thresholding. \cite{chao2015factorisable} consider penalized multi-task quantile regression where there are multiple outcome variables and the coefficient matrix is low-rank. \cite{ma2020detecting} apply nuclear norm penalized logistic regression to the study of an undirected network formation model.

A recent paper by \cite{belloni2019high} studies quantile regression with both interactive fixed effects and high-dimensional regressors. This work was done in parallel and our paper is independent of it. In that paper, they use a nuclear norm constraint for the low-rank matrix and an additional $\ell_{1}$-norm constraint on the coefficients to deal with the high-dimensional regressors. In contrast, we focus on low-dimensional regressors, although we do allow the number of regressors to slowly grow to infinity. On the other hand, unlike our paper that derives a uniform error bound, they focus on convergence rate pointwise in quantile level. As aforementioned, achieving uniformity is challenging and requires the new results on random matrices developed in this paper. Moreover, some of our assumptions are weaker or more primitive. We defer a detailed discussion on this aspect until Appendix \ref{app.technical}. On the computation side, our algorithm differs from theirs and works fast in our simulation experiments. We view these two papers as complementary. 

The rest of the paper is organized as follows. Section \ref{3.sec2} introduces the model and the estimator. Section \ref{sec.preview} previews the main results and provides a proof sketch to highlight the challenges and the key theoretical arguments we develop. Section \ref{3.sec4} discusses the main results and their consequences. Section \ref{3.sec5} demonstrates a Monte Carlo simulation study by comparing our estimator with two alternative approaches. Section \ref{3.sec6} concludes. The algorithm and implementation details are in Appendix \ref{app.algorithm}. An alternative approach to proving consistency and a comparison of the assumptions in this paper with the most related literature are in Appendix \ref{app.lowerbound}. Appendix \ref{app.proof} collects all the proofs. Appendix \ref{app.simulation} presents some additional simulation results.

\subsection*{Notation}
Besides the nuclear norm $||\cdot||_{*}$, four additional matrix norms are used in the paper: Let $||\cdot||$, $||\cdot||_{F}$, $||\cdot||_{1}$, and $||\cdot||_{\infty}$ denote the spectral norm, the Frobenius norm, the $\ell_{1}$-norm and the maximum norm. When applied to a vector, the Frobenius norm is equal to the Euclidean norm. For two generic $N\times T$ matrices $A$ and $B$, $\left\langle A, B\right\rangle\coloneqq \sum_{i,t}A_{it}B_{it}$ denotes the inner product of $A$ and $B$. For two generic real numbers, $a\lor b$ and $a\land b$ return the maximum and the minimum of $a$ and $b$, respectively.

\section{The Model and the Estimator}\label{3.sec2}
We consider a panel data set $\{(Y_{it},X_{it}):i=1,...,N;t=1,...,T\}$ where $Y_{it}$ is a scalar outcome and $X_{it}$ is a $p\times 1$ vector of covariates. Let $Y=(Y_{it})_{i,t}$ and $X_{j}=(X_{j,it})_{i,t}$ ($j=1,...,p$) be $N\times T$ matrices of the outcome and the $j$-th covariate. Let $\mathcal{U}$ be a compact subset of $(0,1)$. For any $u\in\mathcal{U}$, there are $\bar{r}$ possibly $u$-dependent time- and individual-fixed effects. For $k=1,...,\bar{r}$, let $F_{k}(u)=(F_{1k}(u),...,F_{Tk}(u))'$ be the $k$-th time fixed effect. Let $\Lambda_{k}(u)=(\Lambda_{1k}(u),...,\Lambda_{Nk}(u))'$ be the $k$-th individual fixed effects.  Let $W=(X_{1},...,X_{p},\{F_{k}(u)\}_{k=1,...,\bar{r},u\in\mathcal{U}},\{\Lambda_{k}(u)\}_{k=1,...,\bar{r},u\in\mathcal{U}})$. Assume for all $u\in\mathcal{U}$, the conditional quantile of outcome $Y_{it}$ in matrix notation satisfies the following model with probability one:
\begin{align}
q_{Y|W}(u)=&\sum_{j=1}^{p}X_{j}\beta_{0,j}(u)+\sum_{k=1}^{\bar{r}}\mathbbm{1}_{k}(u)\Lambda_{k}(u)F_{k}(u)'\label{3.eq1}\\
\eqqcolon &\sum_{j=1}^{p}X_{j}\beta_{0,j}(u)+L_{0}(u)\label{3.eq2}
\end{align}
where $\mathbbm{1}_{k}(u)\in\{0,1\}$ determines whether the $k$-th fixed effect $F_{k}(u)$ or $\Lambda_{k}(u)$ affects the $u$-th conditional quantile of $Y$ at all. Model \eqref{3.eq1} allows both the set of the effective fixed effects and the realizations of them to depend on $u$. Throughout, we allow the fixed effects to be either random or deterministic. The covariates can be correlated with them when they are random. Similar setups of the fixed effects or factor structures in panel data quantile regression can be found in \cite{ando2019quantile} and \cite{chen2019quantile}. 

The fixed effects in equation \eqref{3.eq1} form an $N\times T$ matrix $L_{0}(u)\coloneqq \sum_{k=1}^{\bar{r}}\mathbbm{1}_{k}(u)\Lambda_{k}(u)F_{k}(u)'$. The rank of the matrix $L_{0}(u)$ is at most $r(u)\coloneqq\sum_{k=1}^{\bar{r}}\mathbbm{1}_{k}(u)\leq \bar{r}$ by construction. In this paper, we assume $\bar{r}$ is fixed. Thus, $L_{0}(u)$ is low-rank when $N$ and $T$ are large. 

Let $\beta_{0}(u)=(\beta_{0,j}(u))_{j=1,...,p}$. This paper focuses on consistently estimating $(\beta_{0}(u),L_{0}(u))$ uniformly in $u\in\mathcal{U}$. When $L_{0}(u)$ is random, consistency is in terms of its realization.

\begin{rem}
When $\mathcal{U}$ is a singleton containing $u$, the conditioning variables $W$ only contains the covariates and the fixed effects at $u$. The model is then in line with the models in the literature on panel data quantile regression that focuses on a fixed $u\in (0,1)$, for example \cite{harding2014estimating}, \cite{ando2019quantile} and \cite{chen2019quantile}. 
\end{rem}

Now let us present a few models which admit the conditional quantile function \eqref{3.eq1}.
\begin{example}[A location shift model with one-way fixed effects only]\label{eg1}
Suppose the outcome matrix $Y$ is determined by the following linear model with only individual fixed effects $\Lambda^{o}=(\Lambda_{1}^{o},...,\Lambda_{N}^{o})'$ (similarly, one can also consider a model with time fixed effects only):
\[
Y=\beta^{o}\cdot\bm{1}_{N\times T}+\sum_{j=1}^{p}X_{j}\beta_{0,j}+\Lambda^{o}\cdot\bm{1}_{1\times T}+\epsilon.
\]
where $\epsilon$ is an $N\times T$ error matrix, $\bm{1}_{N\times T}$ and $\bm{1}_{1\times T}$ are $N\times T$ and $1\times T$ matrices of ones. Assume $(\{X_{j}\}_{j},\Lambda^{o})\independent \epsilon$ while $\{X_{j}\}_{j}$ and $\Lambda^{o}$ can be arbitrarily correlated. Assume the $\epsilon_{it}$s are identically distributed on $\mathbb{R}$ and let $q_{\epsilon}(\cdot)$ denote their quantile function. Then the $u$-th conditional quantile of $Y$ is $q_{Y|W}(u)=q_{Y|\{X_{j}\}_{j},\Lambda(u)}(u)=\sum_{j=1}^{p}X_{j}\beta_{0,j}+\Lambda(u)F'$ with probability one for all $u\in (0,1)$ where $\Lambda(u)=\Lambda^{o}+(\beta^{o}+q_{\epsilon}(u))\cdot\bm{1}_{N\times 1}$ and $F=\bm{1}_{T\times 1}$. The fixed effects form an $N\times T$ rank one matrix for all $u\in (0,1)$ with identical columns. 
\end{example}

\begin{example}[A location-scale model with interactive fixed effects]\label{eg2}
Suppose 
\[Y=\sum_{j=1}^{p}X_{j}\beta^{a}_{0,j}+\sum_{k=1}^{\bar{r}_{1}}\Lambda_{k}^{a}F^{a'}_{k}+\left(\sum_{j=1}^{p}X_{j}\beta^{b}_{0,j}+\sum_{m=1}^{\bar{r}_{2}}\Lambda_{m}^{b}F^{b'}_{m}\right)\circ\epsilon.
\]
where $\circ$ denotes the Hadamard product of matrices. Assume that $(\{X_{j}\}_{j},\{F^{a}_{k},\Lambda_{k}^{a}\}_{k},\{F^{b}_{m},\Lambda_{m}^{b}\}_{m})\independent \epsilon$ and let $q_{\epsilon}(\cdot)$ be the quantile function of the identically distributed $\epsilon_{it}$s. If for all $i,t$ and $m$, and all $x,f_{m}^{b}$ and $\lambda^{b}_{m}$ in the support sets of $X_{it}$, $F_{tm}^{b}$ and $\Lambda_{im}^{b}$, the inequality $x'\beta^{b}_{0}+\sum_{m=1}^{\bar{r}_{2}}f_{m}^{b}\lambda_{m}^{b}>0$ holds, then for any $u\in (0,1)$, by letting $\beta_{0}(u)=\beta^{a}_{0}+\beta^{b}_{0}q_{\epsilon}(u)$, $\Lambda(u)=(\Lambda^{a}_{1},...,\Lambda^{a}_{\bar{r}_{1}},\Lambda^{b}_{1} q_{\epsilon}(u),...,\Lambda^{b}_{\bar{r}_{2}} q_{\epsilon}(u))$ and $F=(F^{a}_{1},...,F^{a}_{\bar{r}_{1}},F^{b}_{1},...,F^{b}_{\bar{r}_{2}})$, the $u$-th conditional quantile of $Y$ is
 $q_{Y|W}(u)=q_{Y|\{X_{j}\}_{j},\{F_{l},\Lambda_{l}(u)\}_{l}}(u)=\sum_{j=1}^{p}X_{j}\beta_{0,j}(u)+\sum_{l=1}^{\bar{r}_{1}+\bar{r}_{2}}\Lambda_{l}(u)F_{l}'$ with probability one. In this example, the coefficients and the individual fixed effects are $u$-dependent. 
\end{example}
\begin{example}[A random coefficient model with quantile-dependent fixed effects]\label{eg3}
Let $U_{it}\sim \text{Unif}[0,1]$ and
\[
Y_{it}=X_{it}'\beta_{0}(U_{it})+\sum_{k=1}^{\bar{r}-1}\mathbbm{1}_{k}(U_{it})F_{tk}^{o}(U_{it})\Lambda_{ik}^{o}(U_{it})+\epsilon_{it}.
\]
Assume that for every $k=1,...,\bar{r}-1$, the indicator function $\mathbbm{1}_{k}(u)$ is non-decreasing in $u$. For instance, $\mathbbm{1}_{1}(u)=1$, $\mathbbm{1}_{2}(u)=\mathbbm{1}(u>0.3)$, etc. Moreover, suppose for all $i,t$ and $k$ and for all $x,f_{k}(u)$ and $\lambda_{k}(u)$ in the support of $X_{it}$, $F_{tk}^{o}(u)$ and $\Lambda_{ik}^{o}(u)$, functions $x'\beta(u)$ and $f_{k}(u)\lambda_{k}(u)$ are all strictly increasing in $u$. Suppose  $\epsilon_{it}$ is generated by some strictly increasing function of $U_{it}$, $G^{-1}(U_{it})$. Finally, assume that $\{U_{it}\}_{i,t}$ are independent of $(\{X_{it},\{F_{tk}^{o}(u)\}_{k,u},\{\Lambda_{ik}^{o}(u)\}_{k,u}\}_{i,t})$. Then by strict monotonicity and independence, the $u$-th conditional quantile of $Y$ is $q_{Y|W}(u)=\sum_{j=1}^{p}X_{j}\beta_{0,j}(u)+\sum_{k=1}^{\bar{r}}\mathbbm{1}_{k}(u)\Lambda_{k}(u)F_{k}(u)'$ with probability one for all $u\in (0,1)$ where for $k<\bar{r}$, $F_{k}(u)=F^{o}_{k}(u)$ and $\Lambda_{k}(u)=\Lambda^{o}_{k}(u)$. For $k=\bar{r}$, $\Lambda_{\bar{r}}(u)=G^{-1}(u)\bm{1}_{N\times 1}$ while $\mathbbm{1}_{\bar{r}}(u)=1$ and $F_{\bar{r}}(u)=\bm{1}_{T\times 1}$ for all $u\in (0,1)$. In this example, the coefficients, fixed effects, and the set of the effective fixed effects all depend on $u$. We will revisit this example in our Monte Carlo experiment in Section \ref{3.sec5}.
\end{example}
Now we introduce our estimator of $(\beta_{0}(u),L_{0}(u))$. For a generic $N\times T$ matrix $Z$, define $\bm{\rho}_{u}(Z)\coloneqq \sum_{i,t}\rho_{u}(Z_{it})\equiv \sum_{i,t} Z_{it}(u-\mathbbm{1}(Z_{it}\leq 0))$. By exploiting the linearity of the conditional quantile function \eqref{3.eq2} in $(\beta_{0}(u),L_{0}(u))$ and the low-rankness of $L_{0}(u)$, this paper proposes the following \textit{nuclear norm penalized quantile regression estimator} to jointly estimate $\beta_{0}(u)$ and $L_{0}(u)$ for any $u\in\mathcal{U}$: 
\begin{equation}\label{3.eq3}
(\hat{\beta}(u),\hat{L}(u))\coloneqq\arg\min_{\beta\in\mathbb{R}^{p},L\in\mathcal{L}} \ \ \frac{1}{NT}\bm{\rho}_{u} \left(Y-\sum_{j=1}^{p} X_{j} \beta_{j}-L\right)+\lambda ||L||_{*}
\end{equation}
where $\lambda>0$. The parameter space $\mathcal{L}\coloneqq\{L\in\mathbb{R}^{N\times T}:||L||_{\infty}\leq \alpha_{NT}\}$ is convex and compact and $\alpha_{NT}\geq 1$ can be $(N,T)$-dependent. In particular, we allow $\alpha_{NT}$ to \textit{grow to infinity} with $N$ and $T$. We need $\alpha_{NT}$ for technical reasons to be discussed in Section \ref{secStrategy}. In Appendix \ref{app.second}, we show that we can drop $\alpha_{NT}$ to make $\mathcal{L}=\mathbb{R}^{N\times T}$ under a different set of assumptions. 

Two remarks on the estimator are in order .
\begin{rem}\label{nuclear norm}
The estimator does not directly penalize or constrain the rank of the estimated interactive fixed effect matrix to avoid nonconvexity. Instead, it seeks an $\hat{L}(u)$ that has a small nuclear norm. Intuitively, this is reasonable because the rank-$r(u)$ matrix $L_{0}(u)$ itself typically has a small nuclear norm by low-rankness\footnote{Suppose all the nonzero singular values of $L_{0}(u)$ are of the order of $\sqrt{NT}$, a valid assumption if elements in $L_{0}(u)$ are $O(1)$, then the order of $||L_{0}(u)||_{*}$ is only $\sqrt{NT}r(u)$, much smaller than that of a full-rank matrix, which can be as large as $\sqrt{NT}(N\land T)$}. To achieve a small nuclear norm, the penalty would shrink some of $\hat{L}(u)$'s singular values even though the rank of $\hat{L}(u)$ may still remain high since rank may increase dramatically even by a very small perturbation to $L_{0}(u)$. For instance, the $(r(u)+1)$-th to the $(N\land T)$-th singular values in $\hat{L}(u)$ can be nonzero, but they may have a smaller order than the first $r(u)$ singular values. This is shown in Section \ref{3.sec4} and is helpful to develop the estimator of $r(u)$ we propose.
\end{rem}
\begin{rem}\label{lambda}
The penalty coefficient $\lambda$ balances how small $||\hat{L}(u)||_{*}$ is and how well the estimator fits the data. When $\lambda=0$, the trivial solution to \eqref{3.eq3} is $(\hat{\beta}(u),\hat{L}(u))=(0,Y)$, provided that $Y\in\mathcal{L}$. This estimator fits the data perfectly but is inconsistent as long as the true coefficients $\beta_{0}(u)\neq 0$. On the other hand, when $\lambda$ is infinity, $\hat{L}(u)$ would be $0$ to set the nuclear norm penalty equal to 0. This could again be implausible because it is equivalent to ignoring $L_{0}(u)$ and estimating $\beta_{0}(u)$ simply by pooled quantile regression. As a result, the estimator $\hat{\beta}(u)$ would be inconsistent when the covariates are correlated with the fixed effects. In the next section, we will be precise about the appropriate order of $\lambda$ that guarantees uniform consistency of the estimator. 
\end{rem}

\section{Preview of the Main Results and Proof Strategy}\label{sec.preview}
In this section, we first briefly summarize the main theoretical results of this paper, then outline the proof strategy to highlight the challenges arising from the nonsmooth objective function and the dense low-rank common component $L_{0}(u)$. We also introduce and discuss the assumptions we make motivated by these challenges.

Under the assumptions to be introduced in this section, this paper shows that for some universal constant $C_{error}>0$, the estimator $(\hat{\beta}(u),\hat{L}(u))$ defined in equation \eqref{3.eq3} satisfies the following inequality with probability approaching one (w.p.a.1):
\begin{equation}\label{preview}
\sup_{u\in\mathcal{U}}\left(||\hat{\beta}(u)-\beta_{0}(u)||_{F}^{2}+\frac{1}{NT}||\hat{L}(u)-L_{0}(u)||_{F}^{2}\right)\leq \gamma^{2}
\end{equation}
where 
\begin{equation}\label{eq.t}
\gamma=C_{error}\alpha_{NT}^{2}\sqrt{\log(NT)}\left(\sqrt{\frac{p\log(pNT)}{NT}}\lor \sqrt{\frac{\bar{r}}{N\land T}}\right).
\end{equation}
The error bound implies uniform consistency of the estimator given a fixed $\bar{r}$ and a fixed or slowly growing $p$ and $\alpha_{NT}$. Based on the order of the error bound, we also propose a consistent estimator of the number of the effective fixed effects $r(u)$ for each $u\in\mathcal{U}$. We will discuss these result with more details in Section \ref{3.sec4}, but now let us first sketch how the error bound is derived.

To derive the error bound, it is helpful to first exploit some simple implications from the definition of the estimator to sharpen the space where the estimation errors $(\hat{\Delta}_{\beta}(u),\hat{\Delta}_{L}(u))\coloneqq (\hat{\beta}(u)-\beta_{0}(u),\hat{L}(u)-L_{0}(u))$ lie so that the analysis can be conducted in this smaller space instead of $\mathbb{R}^{p}\times \mathbb{R}^{N\times T}$. For this purpose, we make the following assumption.
\begin{assumption}\label{3.ass1}
i) Let $V(u)\coloneqq Y-\sum_{j=1}^{p}X_{j}\beta_{0,j}(u)-L_{0}(u)$. For all $u\in\mathcal{U}$, elements in matrix $V(u)$ are independent conditional on $W$. ii) There exists a universal constant $C_{X}>0$ such that $\max_{1\leq j\leq p}||X_{j}||_{F}^{2}\leq C_{X}NT$ w.p.a.1.
\end{assumption}
The independence requirement in part i) is for simplicity so that some inequalities for random matrices can be easily applied in the proof. The same assumption when $\mathcal{U}$ is a singleton so that $W$ only contains the covariates and the fixed effects at $u$ can be found in \cite{ando2019quantile} and \cite{chen2019quantile} as well. Moderate serial correlation in $V_{it}(u)$ can be allowed at a cost of more technical conditions. On the other hand, serial correlation in the covariates is indeed allowed, and $p$ is allowed to be growing in $N$ and $T$; part ii) in Assumption \ref{3.ass1} holds as long as for instance, for every $j\leq p$, Chebyshev's inequality holds for $\sum_{i,t}X_{j,it}^{2}/NT$ and its variance multiplied by $p$ is $o(1)$. 

It turns out that under Assumption \ref{3.ass1} alone, the estimation error $(\hat{\Delta}_{\beta}(u),\hat{\Delta}_{L}(u))$ lies in a cone uniformly in $u\in\mathcal{U}$ w.p.a.1. The cone has nice properties for deriving the error bound. To characterize the cone, let us introduce some notation. Let $R(u)\Sigma(u) S(u)'$ be a singular value decomposition of $L_{0}(u)$. Following \cite{candes2009exact}, let $\Phi(u)\coloneqq\{M\in \mathbb{R}^{N\times T}:\exists A\in\mathbb{R}^{r(u)\times T}\ \text{and}\ B\in\mathbb{R}^{N\times r(u)}\ s.t.\ M=R(u)A+BS(u)'\}$. Denote the orthogonal projection of a generic $N\times T$ matrix $W$ onto this space by $\mathcal{P}_{\Phi(u)}W$, then
\begin{equation}\label{projection}
\mathcal{P}_{\Phi(u)}W=R(u)R(u)'W+WS(u)S(u)'-R(u)R(u)'WS(u)S(u)'.
\end{equation}
We then have the following lemma.

\begin{lem}\label{3.lem1}
Under Assumption \ref{3.ass1}, if $L_{0}(u)\in\mathcal{L}$ for all $u\in\mathcal{U}$, then for $C_{X}$ defined in Assumption \ref{3.ass1}, there exists universal constants $C_{Cone}>0$ such that for $\lambda=2\sqrt{2C_{X}(N\lor T)}/(C_{Cone}NT)$, we have
\begin{equation}\label{coneineq}
\sup_{u\in\mathcal{U}}\left(||\hat{\Delta}_{L}(u)||_{*}-4||\mathcal{P}_{\Phi(u)}\hat{\Delta}_{L}(u)||_{*}-C_{Cone}\sqrt{p(N\land T)\log(pNT)}||\hat{\Delta}_{\beta}(u)||_{F}\right)\leq 0,w.p.a.1.
\end{equation}

\end{lem}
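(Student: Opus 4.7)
The plan is to combine the defining minimization inequality of $(\hat{\beta}(u),\hat{L}(u))$ with the standard subgradient inequality for the check function and the tangent-space decomposition of the nuclear norm, then bound the resulting stochastic residuals uniformly in $u$.

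First, I would use optimality. Since $(\beta_0(u),L_0(u))$ is feasible (given $L_0(u)\in\mathcal{L}$),
\[
\tfrac{1}{NT}\bigl[\bm{\rho}_u(V(u)-\textstyle\sum_j X_j\hat{\Delta}_{\beta,j}(u)-\hat{\Delta}_L(u)) - \bm{\rho}_u(V(u))\bigr] \leq \lambda\bigl(||L_0(u)||_*-||\hat{L}(u)||_*\bigr).
\]
Next, convexity of $\rho_u$ and the subgradient $\psi_u(x)\coloneqq u-\mathbbm{1}(x\leq 0)\in\partial\rho_u(x)$ give the lower bound
\[
\bm{\rho}_u(V(u)-\textstyle\sum_j X_j\hat{\Delta}_{\beta,j}(u)-\hat{\Delta}_L(u))-\bm{\rho}_u(V(u)) \geq -\langle \psi_u(V(u)),\textstyle\sum_j X_j\hat{\Delta}_{\beta,j}(u)+\hat{\Delta}_L(u)\rangle,
\]
where $\psi_u(V(u))$ is the $N\times T$ matrix with entries $\psi_u(V_{it}(u))$. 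The standard tangent-space inequality (cf.\ \cite{candes2009exact}) yields
\[
||\hat{L}(u)||_* - ||L_0(u)||_* \geq ||\mathcal{P}_{\Phi(u)^\perp}\hat{\Delta}_L(u)||_* - ||\mathcal{P}_{\Phi(u)}\hat{\Delta}_L(u)||_*.
\]
Combining these three inequalities reduces the problem to bounding the two inner products on the right-hand side.

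For the covariate term, Cauchy--Schwarz gives $|\langle\psi_u(V(u)),\sum_j X_j\hat{\Delta}_{\beta,j}\rangle|\leq ||\hat{\Delta}_\beta||_F\cdot\bigl(\sum_j\langle\psi_u(V(u)),X_j\rangle^2\bigr)^{1/2}$. Since $\psi_u(V_{it}(u))$ is bounded by $1$ and has conditional mean zero given $W$ (by the definition of conditional quantile), Hoeffding/Bernstein conditional on $W$ combined with Assumption \ref{3.ass1}(ii) and a union bound over $j=1,\dots,p$ delivers $\max_j|\langle\psi_u(V(u)),X_j\rangle|\lesssim\sqrt{C_X NT\log(pNT)}$ with high probability, producing the $\sqrt{p(N\land T)\log(pNT)}\,||\hat{\Delta}_\beta||_F$ term once divided by $\lambda NT$. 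For the low-rank term, the nuclear/spectral duality gives $|\langle\psi_u(V(u)),\hat{\Delta}_L\rangle|\leq ||\psi_u(V(u))||\cdot||\hat{\Delta}_L||_*$, and a matrix concentration bound for random matrices with independent bounded mean-zero entries (conditional on $W$) yields $||\psi_u(V(u))||\lesssim\sqrt{N\lor T}$ with high probability. Choosing $\lambda = 2\sqrt{2C_X(N\lor T)}/(C_{Cone}NT)$ exactly ensures that the spectral-norm contribution absorbs into the penalty side after writing $||\hat{\Delta}_L||_* = ||\mathcal{P}_{\Phi}\hat{\Delta}_L||_* + ||\mathcal{P}_{\Phi^\perp}\hat{\Delta}_L||_*$, and rearranging gives the claimed inequality with the factor $4$ (arising as $1+\text{const}$ after splitting).

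The main obstacle is uniformity in $u\in\mathcal{U}$. Both $||\psi_u(V(u))||$ and $\langle\psi_u(V(u)),X_j\rangle$ depend on $u$ not only through the smooth term $u$ but through the indicator $\mathbbm{1}(V_{it}(u)\leq 0)$, and $V(u)$ itself depends on $u$. A pointwise Bernstein/matrix-Bernstein bound is not enough. My plan is to exploit the monotonicity of $u\mapsto\mathbbm{1}(V_{it}(u)\leq 0)$ (each entry can flip at most once as $u$ varies, provided the conditional quantile function is strictly increasing in $u$) together with an $\varepsilon$-net on the compact set $\mathcal{U}$: on the grid, a union bound controls the suprema; between grid points, the monotone flip structure ensures only a small, controllable perturbation. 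This is precisely where the new random-matrix results advertised in the introduction enter, and it is the genuinely novel piece of work needed beyond the Cand\`es--Recht template.
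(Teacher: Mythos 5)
Your proposal follows essentially the same route as the paper: optimality plus the subgradient inequality for $\bm{\rho}_u$, the tangent-space decomposition of the nuclear norm, the bounds $\max_j|\langle\nabla\bm{\rho}_u(V(u)),X_j\rangle|\lesssim\sqrt{NT\log(pNT)}$ and $||\nabla\bm{\rho}_u(V(u))||\lesssim\sqrt{N\lor T}$, and the same choice of $\lambda$ to absorb the spectral-norm term and produce the factor $4$. Your plan for uniformity in $u$ (an $\varepsilon$-net on $\mathcal{U}$ combined with the monotone, at-most-one-flip structure of $u\mapsto\mathbbm{1}(V_{it}(u)\leq 0)$, with Hoeffding and bounded-entry spectral-norm concentration at the grid points) is exactly the argument carried out in the paper's Lemma \ref{3.lemA1}, so the proposal is correct and matches the paper's proof.
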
 
\begin{proof}
See Appendix \ref{app.sec3}.
\end{proof}
By Lemma \ref{3.lem1},  define cone $\mathcal{R}_{u}$ by \small
\begin{equation*}\label{cone}
\mathcal{R}_{u}\coloneqq \left\lbrace(\Delta_{\beta},\Delta_{L})\in\mathbb{R}^{p}\times\mathbb{R}^{N\times T}: ||\Delta_{L}||_{*}\leq 4||\mathcal{P}_{\Phi(u)}\Delta_{L}||_{*}+C_{Cone}\sqrt{p(N\land T)\log(pNT)}||\Delta_{\beta}||_{F}\right\rbrace.
\end{equation*}\normalsize
We thus have $(\hat{\Delta}_{\beta}(u),\hat{\Delta}_{L}(u))\in\mathcal{R}_{u}$ uniformly in $u\in\mathcal{U}$ w.p.a.1 under Assumption \ref{3.ass1} and under $||L_{0}(u)||_{\infty}\leq \alpha_{NT}$ for all $u\in\mathcal{U}$.

The key property of $\mathcal{R}_{u}$ is that for any element $(\Delta_{\beta},\Delta_{L})\in\mathcal{R}_{u}$ and for any $u\in\mathcal{U}$, $||\Delta_{L}||_{*}$ and $||\Delta_{L}||_{F}$ can be of the same order, a property that low-rank matrices also share\footnote{Note that this property is non-trivial because in general the nuclear norm $||\Delta_{L}||_{*}$ can be as large as $\sqrt{N\land T}||\Delta_{L}||_{F}$. }. To see why this is true, note that by equation \eqref{projection}, the rank of $\mathcal{P}_{\Phi(u)}\Delta_{L}$ is at most $3r(u)$ for all $u$. Hence, for all $u\in\mathcal{U}$,
\[||\mathcal{P}_{\Phi(u)}\Delta_{L}||_{*}\leq \sqrt{3r(u)}||\mathcal{P}_{\Phi(u)}\Delta_{L}||_{F}\leq \sqrt{3r(u)} ||\Delta_{L}||_{F}\leq \sqrt{3r(u)}||\Delta_{L}||_{*},
\]
where the first and the last inequalities are by the relationship between the nuclear norm and the Frobenius norm. The second inequality is due to $\left\langle\mathcal{P}_{\Phi(u)}\Delta_{L},\Delta_{L}-\mathcal{P}_{\Phi(u)}\Delta_{L}\right\rangle=0$ and by the Pythagoras formula. As a consequence, elements in $\mathcal{R}_{u}$ satisfy
\begin{equation}\label{equiorder}
\frac{1}{4\sqrt{3\bar{r}}}\left(||\Delta_{L}||_{*}-C_{Cone}\sqrt{p(N\land T)\log(pNT)}||\Delta_{\beta}||_{F}\right)\leq ||\Delta_{L}||_{F}\leq  ||\Delta_{L}||_{*}
\end{equation}
which implies that if $C_{Cone}\sqrt{p(N\land T)\log(pNT)}||\Delta_{\beta}||_{F}/||\Delta_{L}||_{*}=o(1)$, $||\Delta_{L}||_{*}$ and $||\Delta_{L}||_{F}$ are of the same order. Note that since the estimation error $(\hat{\Delta}_{\beta}(u),\hat{\Delta}_{L}(u))$ is in $\mathcal{R}_{u}$ w.p.a.1, $||\hat{\Delta}_{L}(u)||_{F}$ and $||\hat{\Delta}_{L}(u)||_{*}$ are also of the same order w.p.a.1. This implies that, although it is unclear whether the nuclear norm penalty makes some of the singular values of $\hat{\Delta}_{L}(u)$ to be exact zero so that $\hat{\Delta}_{L}(u)$, and in turn, $\hat{L}(u)$, is low-rank, at least the singular values of $\hat{\Delta}_{L}(u)$ must not be of the same order w.p.a.1.  

In the sense that the nuclear norm and the Frobenius norm of the matrix elements in $\mathcal{R}_{u}$ can be of the same order, the cone $\mathcal{R}_{u}$ in Lemma \ref{3.lem1} matches those obtained in the broad literature of nuclear norm penalized estimation under different objective functions (see e.g. \cite{agarwal2012noisy}, \cite{negahban2012restricted}, \cite{athey2018matrix}, \cite{chernozhukov2018inference}). Different from the mentioned literature, to establish uniformity in Lemma \ref{3.lem1} under the nonsmooth objective function, we derive new uniform bounds on some norms of random matrices whose entries are jump processes (see Lemma \ref{3.lemA1} in Appendix \ref{app.sec3} for details). These results may be of independent interest.

Under Lemma \ref{3.lem1}, we can conduct all the subsequent analysis conditional on the event that $(\hat{\Delta}_{\beta}(u),\hat{\Delta}_{L}(u))\in\mathcal{R}_{u}$ for all $u\in\mathcal{U}$ to exploit property \eqref{equiorder} of $\mathcal{R}_{u}$. Now let us sketch the derivation of the error bound \eqref{preview} to motivate the other assumptions we make, introduce the theoretical challenges, and discuss how we overcome them.

\subsection{Proof Strategy}\label{secStrategy}

 Let $\mathcal{D}\coloneqq \mathbb{R}^{p}\times \{\Delta_{L}\in\mathbb{R}^{N\times T}:||\Delta_{L}||_{\infty}\leq 2\alpha_{NT} \}$. Recall that $\mathcal{L}=\{L\in\mathbb{R}^{N\times T}:||L||_{\infty}\leq\alpha_{NT}\}$. If $||L_{0}(u)||_{\infty}\leq \alpha_{NT}$ for all $u\in\mathcal{U}$ w.p.a.1, then by $\hat{L}(u)\in\mathcal{L}$ and by Lemma \ref{3.lem1}, we have $(\hat{\Delta}_{\beta}(u),\hat{\Delta}_{L}(u))\in \mathcal{R}_{u}\cap\mathcal{D}$ for all $u\in\mathcal{U}$ w.p.a.1. For a generic sample $(Z_{it})_{i,t}$ and a function $f$, let $\mathbb{G}_{u}(f(Z_{it}))\coloneqq \sum_{i,t}[f(Z_{it})-\mathbb{E}(f(Z_{it})|W)]/\sqrt{NT}$ where recall $W\coloneqq (X_{1},...,X_{p},\{F_{k}(u)\}_{k=1,...,\bar{r},u\in\mathcal{U}},\{\Lambda_{k}(u)\}_{k=1,...,\bar{r},u\in\mathcal{U}})$. Since $\mathcal{R}_{u}$ is a cone and $\mathcal{D}$ is convex and contains zero, by convexity of the objective function in equation \eqref{3.eq3}, it is sufficient to show that it is zero probability to have a $u\in\mathcal{U}$ such that:

\begin{align}
0> \inf_{\substack{(\Delta_{\beta},\Delta_{L})\in\mathcal{R}_{u}\cap\mathcal{D}\\||\Delta_{\beta}||_{F}^{2}+\frac{1}{NT}||\Delta_{L}||_{F}^{2}=\gamma^{2}}}&\frac{1}{NT}\left[\bm{\rho}_{u} \big(V(u)-\sum_{j=1}^{p} X_{j} \Delta_{\beta,j}-\Delta_{L}\big)-\bm{\rho}_{u} \left(V(u)\right)\right]\notag\\
& +\lambda \left[||L_{0}(u)+\Delta_{L}||_{*}-||L_{0}(u)||_{*}\right]\notag\\
=\inf_{\substack{(\Delta_{\beta},\Delta_{L})\in\mathcal{R}_{u}\cap\mathcal{D}\\||\Delta_{\beta}||_{F}^{2}+\frac{1}{NT}||\Delta_{L}||_{F}^{2}=\gamma^{2}}}& \frac{1}{NT} \mathbb{E}\left[\bm{\rho}_{u} \big(V(u)-\sum_{j=1}^{p} X_{j} \Delta_{\beta,j}-\Delta_{L}\big)-\bm{\rho}_{u} (V(u))\Big|W(u)\right]\notag\\
&+\frac{1}{\sqrt{NT}}\mathbb{G}_{u}\left(\rho_{u} \big(V_{it}(u)-X_{it}' \Delta_{\beta}-\Delta_{L,it}\big)-\rho_{u} (V_{it}(u))\right)\notag\\
&+\lambda \left[||L_{0}(u)+\Delta_{L}||_{*}-||L_{0}(u)||_{*}\right]\label{eq.base}
\end{align}
where $(\Delta_{\beta},\Delta_{L})$ is a generic element in $\mathcal{R}_{u}\cap\mathcal{D}$ and recall that $V(u)\coloneqq Y-\sum_{j=1}^{p}X_{j}\beta_{0,j}(u)-L_{0}(u)$. 

We prove inequality \eqref{eq.base} can not hold for any $u\in\mathcal{U}$ in two steps:
\begin{enumerate}[leftmargin=1in, label=Step \arabic*]
\item\label{step1}: Find a positive quadratic lower bound on the conditional expectation.
\item\label{step2}: Find an upper bound on the absolute process $|\mathbb{G}_{u}|$ and the absolute penalty difference. 
\end{enumerate}
After we find these bounds, we conclude that inequality \eqref{eq.base} is impossible uniformly in $u\in \mathcal{U}$ if the lower bound in \ref{step1} is always greater in than the upper bound in \ref{step2}.
\subsubsection{Step 1}

In \ref{step1}, we are to lower bound the conditional expectation by $(||\Delta_{\beta}||_{F}^{2}+||\Delta_{L}||_{F}^{2}/NT)$ up to some multiplicative factor. The major theoretical difficulty arises from high-dimensionality of $\Delta_{L}$. Note that \ref{step1} is independent of the nuclear norm penalty and a similar difficulty exists even if we maintain the factor structure (e.g. \cite{ando2019quantile} and \cite{chen2019quantile}) instead of treating the interactive fixed effects as a single low-rank matrix. Therefore, the theoretical arguments developed in the paper to overcome the difficulty can be applied to other estimators for panel data quantile regression with interactive fixed effects or factor structures. 

To illustrate the difficulty, consider a simple case without covariates. The conditional expectation under consideration can then be simplified as $\mathbb{E}\left[\bm{\rho}_{u} \left(V(u)-\Delta_{L}\right)-\bm{\rho}_{u}(V(u))|W\right]$. By Knight's identity \citep{knight1998limiting} and by the definition of $V_{it}(u)$, it can be rewritten as 
\begin{equation}\label{eq.integral}
\sum_{i,t}\int_{0}^{\Delta_{L,it}}\left(F_{V_{it}(u)|W}(s)-F_{V_{it}(u)|W}(0)\right)ds
\end{equation}
where $F_{V_{it}(u)|W}$ is the conditional cumulative distribution function of $V_{it}(u)$. The key problem is as follows. The magnitude of some $\Delta_{L,it}$s can be arbitrarily large or grow to infinity with $N$ and $T$ even when $||\Delta_{L}||_{F}^{2}/NT\to 0$. Hence, if one adopts the standard argument in quantile regression to first-order Taylor expand $F_{V_{it}(u)|W}(s)$ around $0$, it is \textit{insufficient} to obtain a strictly positive lower bound on quantity \eqref{eq.integral} if one only assumes that the conditional density of $V_{it}(u)$ at zero, $f_{V_{it}(u)|W}(0)$, is continuous and positive uniformly in $i$, $t$ and $u$ almost surely. 

To overcome the difficulty, the literature on panel data quantile regression with factor structures or interactive fixed effects often assumes that elements in $L_{0}(u)$ (or the fixed effects) lie in a \textit{fixed} compact space and the conditional density $f_{V_{it}(u)|W}(s)$ is bounded away from 0 on \textit{any compact interval} of $s$ (e.g. \cite{ando2019quantile} and \cite{chen2019quantile}). Alternatively, \cite{belloni2019high} adopt higher order Taylor expansion by assuming differentiability of the conditional density with bounded derivatives. They also impose a high-level condition on the estimation error matrix 
which is not straightforward to interpret\footnote{A more detailed comparison between these approaches and ours is in Appendix \ref{app.technical}}.

In this paper, we develop two novel sets of theoretical arguments, or, approaches, to overcome the difficulty to achieve a positive quadratic lower bound. Under both approaches, $f_{V_{it}(u)|W}$ bounded away from zero in an arbitrarily small neighborhood of $0$ is sufficient. Meanwhile, we do \textit{not} need the conditional density function to be differentiable. Formally, we make the following assumption:
\begin{assumption}\label{3.ass2}
There exists a $\delta>0$ such that the conditional densities $f_{V_{it}(u)|W}$ satisfies
\begin{equation*}
\underline{f}\coloneqq \inf _{\substack{s\in[-\delta,\delta], u\in\mathcal{U}\\1\leq i\leq N,1\leq t\leq T}}f_{V_{it}(u)|W}(s)>0\ a.s.
\end{equation*}
\end{assumption}
Note that $\delta$ can be arbitrarily small. A sufficient condition for Assumption \ref{3.ass2} to hold is that $f_{V_{it}(u)|W}(0)>0$ uniformly in $i,t$ and $u\in\mathcal{U}$ and the functions $\{f_{V_{it}(u)|W}\}_{i,t,u}$ are equicontinuous at $0$ for all realizations of $W$. This assumption can be shown to be weaker than the assumptions on the conditional density in \cite{belloni2019high}, \cite{ando2019quantile} and \cite{chen2019quantile}. See Appendix \ref{app.technical} for a detailed discussion.

Under Assumption \ref{3.ass2}, our first approach requires a compact parameter space $\mathcal{L}$ for the matrix component as in equation \eqref{3.eq3}, but unlike \cite{ando2019quantile} and \cite{chen2019quantile}, the boundary $\alpha_{NT}$ is not fixed and is allowed to grow to infinity. Our second approach relaxes $\mathcal{L}$ to be $\mathbb{R}^{N\times T}$ at the cost of more technical conditions. 
 
We present the main idea behind our first approach now and discuss our second approach in Appendix \ref{app.second}. Since we are considering a case without the covariates, let us redefine $\mathcal{D}=\{\Delta_{L}\in\mathbb{R}^{N\times T}:||\Delta_{L}||_{\infty}\leq 2\alpha_{NT}\}$. We show in Appendix \ref{app.techlem} that each integral in the summation \eqref{eq.integral} is decreasing in the absolute upper limit. So each integral satisfies
\begin{align*}
\int_{0}^{\Delta_{L,it}}\left(F_{V_{it}(u)|Wu)}(s)-F_{V_{it}(u)|W}(0)\right)ds
\geq &\int_{0}^{(1\land \delta)\Delta_{L,it}/2\alpha_{NT}}\left(F_{V_{it}(u)|Wu)}(s)-F_{V_{it}(u)|W}(0)\right)ds
\end{align*}
where $\delta$ is the constant in Assumption \ref{3.ass2}. For the integral on the right side, even if $\Delta_{L,it}$ is diverging, $|\Delta_{L,it}|/2\alpha_{NT}\leq 1$ so $(1\land \delta)\Delta_{L,it}/2\alpha_{NT}$ must lie in the region where the conditional density is positive. Then first-order Taylor expanding $F_{V_{it}(u)|W}(s)$ around $0$ yields a strictly positive quadratic lower bound. The issue is thus resolved.


When there are covariates, the upper limits in the integrals in quantity \eqref{eq.integral} become $\Delta_{L,it}+X_{it}'\Delta_{\beta}$. To make this approach still valid at least in the ball $||\Delta_{\beta}||_{F}^{2}+||\Delta_{L}||_{F}^{2}/NT\leq \gamma^{2}$, we make the following assumption so that $X_{it}'\Delta_{\beta}$ is also bounded by some function of $\alpha_{NT}$ w.p.a.1. 
\begin{assumption}\label{3.ass3}
The $\alpha_{NT}$ in the parameter space $\mathcal{L}$ in equation \eqref{3.eq3} is no smaller than $1$ and can grow to infinity with $N$ and $T$. Meanwhile, i) $||L_{0}(u)||_{\infty}\leq \alpha_{NT}$ for all $u\in\mathcal{U}$ w.p.a.1 and ii) $\sum_{j=1}^{p}||X_{j}||_{\infty}=o_{p}(\alpha_{NT}/\gamma)$ for $\gamma$ defined in equation \eqref{eq.t}. 
\end{assumption}
Assumption \ref{3.ass3} is reasonably mild because we allow $\alpha_{NT}$ to be $(N,T)$-dependent and to grow to infinity. For instance, if $\alpha_{NT}$ has order $\log(NT)$, part i) is satisfied in all the models in Examples \ref{eg1} to \ref{eg3} if all the individual- and time-fixed effects are sub-Gaussian with $\bar{r}$ fixed. Note that this \textit{does not} rule out the case where $Y_{it}$ itself has a heavy tail. The restriction on the covariates' tails is even milder. For instance, if $\alpha_{NT}=\log(NT)$ and $p$ is fixed, given the order of $\gamma$ defined in equation \eqref{eq.t}, some heavy-tailed distributions are allowed for the covariates. 

With the covariates, recall that $\mathcal{D}\coloneqq \mathbb{R}^{p}\times \{\Delta_{L}\in\mathbb{R}^{N\times T}:||\Delta_{L}||_{\infty}\leq 2\alpha_{NT} \}$. We have the following lemma.

\begin{lem}\label{lowerbound}
 Under Assumptions \ref{3.ass2} and the event that $\sum_{j=1}^{p}||X_{j}||_{\infty}\gamma\leq \alpha_{NT}$ with $\alpha_{NT}\geq 1$, the following inequality holds almost surely:
\begin{align}
&\inf_{\substack{u\in\mathcal{U}\\||\Delta_{\beta}||_{F}\leq \gamma\\(\Delta_{\beta},\Delta_{L})\in\mathcal{D}}}\left(\mathbb{E}\big[\bm{\rho}_{u} \big(V(u)-\sum_{j=1}^{p} X_{j} \Delta_{\beta,j}-\Delta_{L}\big)-\bm{\rho}_{u} (V(u))\big|W\big]- \frac{C_{min}}{\alpha_{NT}^{2}}\big|\big|\sum_{j=1}^{p}X_{j}\Delta_{\beta,j}+\Delta_{L}\big|\big|_{F}^{2}\right)\notag\\
&\geq 0\label{eq.lowerbound}
\end{align}
where $C_{min}\coloneqq (1\land \delta)^{2}\underline{f}/18$, $\underline{f}$ and $\delta$ are defined in Assumption \ref{3.ass2}, and $\gamma$ is defined in equation \eqref{eq.t}.
\end{lem}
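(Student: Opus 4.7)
The plan is to invoke Knight's identity, reducing the conditional expectation to a sum of one-dimensional integrals of the form $g_{it}(w) := \int_{0}^{w}[F_{V_{it}(u)|W}(s)-u]\,ds$ evaluated at $w_{it} := X_{it}'\Delta_{\beta}+\Delta_{L,it}$. Because by construction the $u$-th conditional quantile of $V_{it}(u)$ is zero, so $F_{V_{it}(u)|W}(0)=u$, the linear (score) term from Knight's identity vanishes after taking the conditional expectation, and
\[
\mathbb{E}\left[\bm{\rho}_{u}\Big(V(u)-\sum_{j=1}^{p}X_{j}\Delta_{\beta,j}-\Delta_{L}\Big)-\bm{\rho}_{u}(V(u))\,\Big|\,W\right]=\sum_{i,t}g_{it}(w_{it}).
\]
Everything that follows is pointwise in $u$ and in $(\Delta_{\beta},\Delta_{L})$, so the infimum form of (\ref{eq.lowerbound}) is automatic.

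First I would obtain a uniform sup-norm bound on the $w_{it}$. On the stated event $\sum_{j}||X_{j}||_{\infty}\gamma\leq\alpha_{NT}$, for any $||\Delta_{\beta}||_{F}\leq\gamma$ and $(\Delta_{\beta},\Delta_{L})\in\mathcal{D}$, a trivial chain gives
\[
|X_{it}'\Delta_{\beta}|\leq ||\Delta_{\beta}||_{F}\sum_{j=1}^{p}||X_{j}||_{\infty}\leq\alpha_{NT},\qquad |\Delta_{L,it}|\leq 2\alpha_{NT},
\]
hence $|w_{it}|\leq 3\alpha_{NT}$. This is the crucial bound: it is the price we pay for allowing $\Delta_{L}$ to be large in sup-norm, and it is exactly what forces the $\alpha_{NT}^{2}$ denominator in the final constant.

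The core step is a rescale-and-integrate trick. Set $c:=(1\wedge\delta)/(3\alpha_{NT})\in(0,1]$. Because $F_{V_{it}(u)|W}(s)-u$ shares the sign of $s$, the map $w\mapsto g_{it}(w)$ is nonnegative and nondecreasing in $|w|$; since $cw_{it}$ has the same sign as $w_{it}$ but smaller modulus, $g_{it}(w_{it})\geq g_{it}(cw_{it})$. The integration domain in $g_{it}(cw_{it})$ is now contained in $[-(1\wedge\delta),1\wedge\delta]\subseteq[-\delta,\delta]$, so Assumption \ref{3.ass2} gives $|F_{V_{it}(u)|W}(s)-u|\geq\underline{f}|s|$ throughout. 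A one-line integration in each sign case then yields $g_{it}(cw_{it})\geq\underline{f}(cw_{it})^{2}/2$, whence
\[
g_{it}(w_{it})\geq\frac{\underline{f}\,c^{2}}{2}\,w_{it}^{2}=\frac{\underline{f}(1\wedge\delta)^{2}}{18\,\alpha_{NT}^{2}}\,w_{it}^{2}=\frac{C_{min}}{\alpha_{NT}^{2}}\,w_{it}^{2}.
\]
Summing over $(i,t)$ gives $\sum_{i,t}g_{it}(w_{it})\geq (C_{min}/\alpha_{NT}^{2})\,||\sum_{j}X_{j}\Delta_{\beta,j}+\Delta_{L}||_{F}^{2}$, which is exactly (\ref{eq.lowerbound}).

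The main obstacle is conceptual rather than calculational: Assumption \ref{3.ass2} only supplies density information on a tiny fixed neighborhood of zero, whereas $|w_{it}|$ can be as large as $3\alpha_{NT}$, which is unbounded. The rescaling dodge moves the integration back inside the $\delta$-neighborhood at the cost of a $c^{2}\sim\alpha_{NT}^{-2}$ factor, and monotonicity of $g_{it}$ in $|w|$ transfers the quadratic lower bound back to the original upper limit. Without this step one would be forced, as in prior work, either to impose a fixed compact range on $L_{0}(u)$ or to assume a density lower bound on arbitrarily large intervals.
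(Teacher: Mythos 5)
Your proof is correct and follows essentially the same route as the paper's: Knight's identity with the linear term vanishing conditionally on $W$, the bound $|X_{it}'\Delta_{\beta}+\Delta_{L,it}|\leq 3\alpha_{NT}$ on the stated event, rescaling the upper limit by $(1\land\delta)/(3\alpha_{NT})$ via monotonicity of the integral in $|w|$ (the paper's Lemma \ref{3.lemA3}, which you apply after conditioning rather than before — an immaterial reordering), and the density bound on $[-\delta,\delta]$ yielding exactly $C_{min}=(1\land\delta)^{2}\underline{f}/18$.
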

\begin{proof}
See Appendix \ref{app.sec3}.
\end{proof}
\begin{rem}
Lemma \ref{lowerbound} does not rely on Assumption \ref{3.ass1}, so independence among the $V_{it}(u)$s is not required.
\end{rem}
Finally, in order to obtain the error bounds on $\hat{\beta}(u)$ and $\hat{L}(u)$ separately, we impose the following identification assumption. 

\begin{assumption}\label{ass.id}
 i) There exists a universal constant $\sigma_{min}^{2}>0$ such that the smallest eigenvalue of $\sum_{i,t}(X_{it}X_{it}')/NT$ converges to $\sigma_{min}^{2}$ in probability. \newline
ii) There exists a universal constant $C_{RSC}>0$ such that the following holds w.p.a.1:
\begin{equation}\label{eq.rsc}
\inf_{\substack{(\Delta_{\beta},\Delta_{L})\in\mathcal{R}_{u}\\u\in\mathcal{U}}}\left(\big|\big|\sum_{j=1}^{p}X_{j}\Delta_{\beta,j}+\Delta_{L}\big|\big|_{F}^{2}- C_{RSC}\left(\big|\big|\sum_{j=1}^{p}X_{j}\Delta_{\beta,j}\big|\big|_{F}^{2}+||\Delta_{L}||_{F}^{2}\right)\right)\geq 0
\end{equation} 

\end{assumption}

Part i) in Assumption \ref{ass.id} guarantees that the individual coefficients $\beta_{0,j}(u)$s can be separately consistently estimated. Part ii) is a version of the widely adopted \textit{restricted strong convexity} condition in the literature on low-rank matrix recovery or nuclear norm penalized estimation (e.g. \cite{agarwal2012noisy}, \cite{negahban2011estimation,negahban2012restricted},  \cite{negahban2009unified}, \cite{belloni2019high}, \cite{chernozhukov2018inference}, etc.). It says that any linear combinations of the covariate matrices $X_{j}$s must lie sufficiently far from the matrix elements in the cone $\mathcal{R}_{u}$. The constant $C_{RSC}$ is determined by the joint distribution of the covariates and $L_{0}(u)$. When the covariates are more correlated with $L_{0}(u)$, this constant tends to become smaller. Consequently, the error bound on the estimation error would be larger. We can observe this pattern in the Monte Carlo simulations in Section \ref{3.sec5} and Appendix \ref{app.simulation} later. Restricted strong convexity can be interpreted as an identification condition. To see this, note that $(\Delta_{\beta},L_{0}(u))\in\mathcal{R}_{u}$ for all $\Delta_{\beta}\in\mathbb{R}^{p}$ because $\mathcal{P}_{\Phi(u)}L_{0}(u)=L_{0}(u)$ by construction. Thus by letting $\Delta_{L}=L_{0}(u)$, condition \eqref{eq.rsc} implies that $L_{0}(u)$ is not equal to any linear combination of the $X_{j}$s, a necessary condition in order to identify $L_{0}(u)$.  

Under Assumption \ref{ass.id}, the bound obtained in Lemma \ref{lowerbound}  can be shown to be further lower bounded by $(C_{\sigma}\sigma_{min}^{2}\land 1)C_{min}C_{RSC}(||\Delta_{\beta}||_{F}^{2}+||\Delta_{L}||_{F}^{2}/NT)/\alpha_{NT}^{2}$ for some $C_{\sigma}>0$ w.p.a.1 (see the proof of Theorem \ref{3.thm1} for details). \ref{step1} is thus completed.


\subsubsection{Step 2}
The key property we use to obtain a tight enough upper bound on the absolute empirical process and the penalty difference in equation \eqref{eq.base} is equation \eqref{equiorder} implied by Lemma \ref{3.lem1}. 

For illustrative purpose, let us again assume there are no covariates, so equation \eqref{equiorder} implies that $||\Delta_{L}||_{*}$ has the same order as $||\Delta_{L}||_{F}$, which, in the ball $||\Delta_{L}||_{F}^{2}/NT\leq\gamma^{2}$ we consider, is at most $\sqrt{NT}\gamma$. For the absolute penalty difference in equation \eqref{eq.base}, by triangle inequality, $|||L_{0}(u)+\Delta_{L}||_{*}-||L_{0}(u)||_{*}|$ is upper bounded by $||\Delta_{L}||_{*}\leq 4\sqrt{3r(u)}||\Delta_{L}||_{F}\leq 4\sqrt{3r(u)}\sqrt{NT}\gamma$ by property \eqref{equiorder}. Note that without the property, the upper bound would be $\sqrt{N\land T}\sqrt{NT}\gamma$ instead, much greater than the current one as $r(u)$ is fixed.

Property \eqref{equiorder} also helps to obtain a tight upper bound on the absolute process $|\mathbb{G}_{u}|$ in equation \eqref{eq.base}. Let $A$ denote a generic $N\times T$ matrix. When deriving an upper bound on $|\mathbb{G}_{u}|$, we often need to upper bound terms in the form of $|\sum_{i,t}A_{it}\Delta_{L,it}|$, or equivalently, $|\langle A,\Delta_{L}\rangle|$. There are at least two possible upper bounds for this inner product:
\begin{align}
|\langle A,\Delta_{L}\rangle|&\leq ||A||_{F}\cdot ||\Delta_{L}||_{F}\ \text{or}\label{eq.wrongrelax}\\
|\langle A,\Delta_{L}\rangle|&\leq ||A||\cdot ||\Delta_{L}||_{*}\label{eq.rightrelax}
\end{align}
where inequality \eqref{eq.wrongrelax} is by Cauchy-Schwartz and inequality \eqref{eq.rightrelax} is by Lemma 3.2 in \cite{candes2009exact}. In general, these two upper bounds can be of the same order; although $||A||$ can be as small as $||A||_{F}/\sqrt{N\land T}$, $||\Delta_{L}||_{*}$ can be as large as $\sqrt{N\land T}||\Delta_{L}||_{F}$ without any restrictions. However, now in the ball $||\Delta_{L}||_{F}\leq \sqrt{NT}\gamma$, under equation \eqref{equiorder} without covariates, $||\Delta_{L}||_{*}$ can be bounded by $\sqrt{4\bar{r}}\sqrt{NT}\gamma$. As $\bar{r}$ is fixed, the order of the right side of inequality \eqref{eq.rightrelax} can then be only $1/(\sqrt{N\land T})$ of that of inequality \eqref{eq.wrongrelax}, providing a tight enough upper bound on the inner product.

Finally, in order to bound the absolute process and the penalty difference uniformly in $u\in\mathcal{U}$, we make the following assumption on smoothness in our estimands. This assumption is trivially satisfied if $\mathcal{U}$ is a singleton.
\begin{assumption}\label{3.ass4}
There exist $\zeta_{1},\zeta_{2}>0$ such that 
\begin{align}
||\beta_{0}(u')-\beta_{0}(u)||_{F}\leq &\zeta_{1}|u'-u|,\forall u,u'\in\mathcal{U},\label{smoothb}\\
\frac{1}{\sqrt{NT}}||L_{0}(u')-L_{0}(u)||_{F}\leq &\zeta_{2}|u'-u|,\forall u,u'\in\mathcal{U}\ w.p.a.1.\label{smoothL}
\end{align}
\end{assumption}
The smoothness assumption for $\beta_{0}(\cdot)$, equation \eqref{smoothb}, is the same as in \cite{belloni2011l1}. The smoothness condition on $L_{0}(\cdot)$, equation \eqref{smoothL}, is a matrix counterpart. One can verify that $L_{0}(\cdot)$ in Examples \ref{eg1} satisfies condition \eqref{smoothL} if the quantile function of $\epsilon$, $q_{\epsilon}(\cdot)$, is Lipschitz continuous. For Example \ref{eg2}, $L_{0}(\cdot)$ satisfies condition \eqref{smoothL} if  $q_{\epsilon}(\cdot)$ is Lipschitz continuous and if $\sum_{i,t}(\sum_{m=1}^{\bar{r}_{2}}F_{tk}^{b}\Lambda^{b}_{ik})/NT$ converges in probability to a constant. Note that condition \eqref{smoothL} rules out \textit{some} cases where the set of the effective fixed effects changes on $\mathcal{U}$. To see this, suppose in Example \ref{eg3}, there exists a jumping point $u_{0}\in \mathcal{U}$ such that  $r(u)<r(u')$ for any $u<u_{0}\leq u'$. Suppose the individual- and the time-fixed effects do not depend on $u$ and the first $r(u)$ of them at $u'$ are the same as those at $u$. Then 
\[\frac{1}{\sqrt{NT}}||L_{0}(u')-L_{0}(u)||_{F}=\sqrt{\frac{1}{NT}\sum_{i,t}\left(\sum_{k=r(u)+1}^{r(u')}F_{tk}\Lambda_{ik}\right)^{2}}
\]
which may converge in probability to a positive constant if the law of large number holds for it.
Nevertheless, this assumption is not restrictive even in this situation when there are only a finite number of such jumping points in $(0,1)$: Let $\{u_{0,k}:k=1,2,...,K\}$ $(K<\infty)$ be the set of such jumping points with $u_{0,k}<u_{0,k+1}$ for all $k=1,...,K-1$. If Assumption \ref{3.ass4} holds for compact interval $\mathcal{U}_{k}\subset (u_{0,k},u_{0,k+1})$ for each $k$, we can then establish uniform error bound over each $\mathcal{U}_{k}$ and the uniform bound over $\bigcup_{k=1}^{K}\mathcal{U}_{k}$ is immediately obtained. 

We have the following lemma bounding the absolute process in equation \eqref{eq.base}. The bound on the penalty difference is straightforward and is thus omitted here.

\begin{lem}\label{3.lem3}
Under Assumption \ref{3.ass1} and Assumption \ref{3.ass4}, if $\alpha_{NT}$ and $p$ are such that $\gamma=o(1)$ where $\gamma$ is defined in equation \eqref{eq.t}, then there exists a universal constant $C_{sup}>0$
such that
\begin{align*}
\mathbb{P}\Bigg(\sup_{\substack{u\in\mathcal{U}\\(\Delta_{\beta},\Delta_{L})\in\mathcal{R}_{u}\\||\Delta_{\beta}||_{F}^{2}+\frac{1}{NT}||\Delta_{L}||_{F}^{2}\leq\gamma^{2}}} &\big|\mathbb{G}_{u}\big(\rho_{u} \big(V_{it}(u)-X_{it}' \Delta_{\beta}-\Delta_{L,it}\big)-\rho_{u} (V_{it}(u))\big)\big|\\
\leq &C_{sup}\log(NT)\left(\sqrt{p\log(pNT)}\lor\sqrt{\bar{r}(N\lor T)}\right)\gamma\Bigg)\to 1
\end{align*}
The formula of $C_{sup}$ is in the proof.
\end{lem}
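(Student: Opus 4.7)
The plan is to linearize the nonsmooth quantile process via symmetrization and contraction, bound the resulting linear process by exploiting trace-duality together with the cone constraint from Lemma \ref{3.lem1}, and finally lift the pointwise bound to a uniform one over $u\in\mathcal{U}$ using the smoothness in Assumption \ref{3.ass4} together with the paper's uniform random-matrix bounds (Lemma \ref{3.lemA1} in the Appendix). For the first step, conditional on $W$, write $g_{it}\coloneqq X_{it}'\Delta_{\beta}+\Delta_{L,it}$ and observe that $\rho_{u}(V_{it}(u)-g_{it})-\rho_{u}(V_{it}(u))$ is $1$-Lipschitz in $g_{it}$ and vanishes at $g_{it}=0$. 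Standard symmetrization followed by the Ledoux--Talagrand contraction principle bounds the expected supremum of the absolute process by a constant multiple of $(NT)^{-1/2}\mathbb{E}\sup|\sum_{i,t}\epsilon_{it}(X_{it}'\Delta_{\beta}+\Delta_{L,it})|$ with i.i.d.\ Rademacher signs $\epsilon_{it}$ independent of everything else.

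For the second step, decompose the linear process as $\sum_{j}\Delta_{\beta,j}\langle\epsilon,X_{j}\rangle+\langle\epsilon,\Delta_{L}\rangle$. For the covariate piece, Cauchy--Schwarz bounds it by $||\Delta_{\beta}||_{F}\sqrt{\sum_{j}\langle\epsilon,X_{j}\rangle^{2}}$; each inner product $\langle\epsilon,X_{j}\rangle$ is sub-Gaussian conditional on $W$ with parameter $||X_{j}||_{F}\leq\sqrt{C_{X}NT}$ by Assumption \ref{3.ass1}(ii), so a union bound gives $\max_{j}|\langle\epsilon,X_{j}\rangle|\lesssim\sqrt{NT\log(pNT)}$ with high probability, producing a covariate contribution of order $\sqrt{pNT\log(pNT)}\,\gamma$ on the ball. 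For the matrix piece, trace-duality combined with the cone inequality obtained from Lemma \ref{3.lem1}, namely $||\Delta_{L}||_{*}\leq 4\sqrt{3\bar{r}}\,||\Delta_{L}||_{F}+C_{Cone}\sqrt{p(N\land T)\log(pNT)}\,||\Delta_{\beta}||_{F}$, together with the standard spectral-norm bound $||\epsilon||\lesssim\sqrt{N\lor T}$ for a Rademacher matrix, gives a matrix contribution of order $\sqrt{\bar{r}(N\lor T)NT}\,\gamma+\sqrt{pNT\log(pNT)}\,\gamma$. Dividing by $\sqrt{NT}$ and taking the maximum of the two contributions yields precisely the $(\sqrt{p\log(pNT)}\lor\sqrt{\bar{r}(N\lor T)})\gamma$ scaling.

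For the third step, a Talagrand-type concentration inequality upgrades the in-expectation bound to a high-probability one at a constant cost. To obtain uniformity in $u\in\mathcal{U}$, I would place an $(NT)^{-k}$-net on $\mathcal{U}$ of polynomial cardinality; Assumption \ref{3.ass4} ensures that both $\beta_{0}(\cdot)$ and $L_{0}(\cdot)/\sqrt{NT}$ are Lipschitz in $u$, so the smooth portion of the process varies by only $O((NT)^{-k}(\zeta_{1}+\zeta_{2}))$ between adjacent net points, which is negligible for $k$ large enough. The remaining discontinuous portion---arising from jumps in $\mathbb{1}(V_{it}(u)\leq s)$ as $u$ varies---is precisely what Lemma \ref{3.lemA1} is designed to control uniformly. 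A union bound over the net contributes the extra $\log(NT)$ factor appearing in the stated bound. The principal obstacle is this final step: even after symmetrization, the relevant random matrices have entries that jump discontinuously in $u$, so standard chaining arguments tailored to Lipschitz processes do not apply; securing uniform control with only a logarithmic inflation requires the dedicated random-matrix results (of independent interest) developed in Lemma \ref{3.lemA1}.
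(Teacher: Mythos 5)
Your Steps 1 and 2 essentially reproduce the paper's treatment of the symmetrized linear part (the terms $\mathcal{B}_{1}^{0}$ and $\mathcal{B}_{2}^{0}$ in the proof in Appendix \ref{app.sec3}): symmetrization, a $\sqrt{p}$--Cauchy--Schwarz/union bound for the covariate piece, and trace duality combined with the cone bound $||\Delta_{L}||_{*}\leq 4\sqrt{3\bar{r}}\,||\Delta_{L}||_{F}+C_{Cone}\sqrt{p(N\land T)\log(pNT)}\,||\Delta_{\beta}||_{F}$ and $||\bm{\varepsilon}||\lesssim\sqrt{N\lor T}$ for the matrix piece; these give the correct orders. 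The genuine gap is in your Step 3, the uniformity over $u$. First, the contraction maps $g\mapsto \rho_{u}(V_{it}(u)-g)-\rho_{u}(V_{it}(u))$ depend on $u$ (through $\rho_{u}$ and through $V_{it}(u)$), so the Ledoux--Talagrand principle cannot be invoked once with the supremum over $u$ inside: it must be applied at fixed (net) values of $u$, and you are then left with controlling what changes when an arbitrary $u$ is replaced by a net point $\bar{u}$. Second, your proposed control of that change does not work as described. Assumption \ref{3.ass4} is an aggregate (Frobenius-norm) Lipschitz condition, so it gives no elementwise bound on $L_{0,it}(u)-L_{0,it}(\bar{u})$ and hence no per-observation ``smooth portion'' of size $O((NT)^{-k})$; and Lemma \ref{3.lemA1} concerns the subgradient matrices $\nabla\bm{\rho}_{u}(V(u))$ evaluated at the true parameter (it is used only to establish the cone in Lemma \ref{3.lem1}); it says nothing about increments in $u$ of the empirical process of the Knight-remainder terms $\delta_{it}(X_{it}'\Delta_{\beta}+\Delta_{L,it},u)$ uniformly over the ball and the cone, which is exactly the troublesome object.

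The paper closes this gap with a different device: since $V_{it}(u)=V_{it}(\bar{u})-X_{it}'(\beta_{0}(u)-\beta_{0}(\bar{u}))-(L_{0,it}(u)-L_{0,it}(\bar{u}))$, the process at $u$ is rewritten exactly as a difference of processes at the net point $\bar{u}$ evaluated at the shifted arguments $\Delta_{\beta}+\beta_{0}(u)-\beta_{0}(\bar{u})$ and $\Delta_{L}+L_{0}(u)-L_{0}(\bar{u})$. Assumption \ref{3.ass4} then enters in precisely the aggregate form needed, to show that the shifted parameters remain in a slightly enlarged ball and cone (the net has coarse mesh $\epsilon=\gamma$, so the enlargement costs only constants); the contraction principle is applied at the fixed $\bar{u}$ to reduce conditional exponential moments to those of the $u$-free linear process, and a Chernoff/union bound over the roughly $1/\gamma$ net points yields the extra logarithmic factor. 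If you want to repair your outline you need this shifting identity (or an equivalent), not a fine net plus a ``smooth/discontinuous'' split plus Lemma \ref{3.lemA1}. A secondary caveat: the claim that a Talagrand-type inequality upgrades the in-expectation bound ``at a constant cost'' is also not automatic here, because on the ball the summands have envelope of order $\sqrt{NT}\gamma$ (entries of $\Delta_{L}$ need not be small); the paper sidesteps this by bounding conditional exponential moments of the symmetrized process directly rather than appealing to a bounded-class concentration inequality.
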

\begin{proof}
See Appendix \ref{app.sec3}.
\end{proof}
\section{Main Result and Its Consequences}\label{3.sec4}
In this section, we formally present the main result on the error bound on $(\hat{\beta}(u),\hat{L}(u))$. We also derive a consistent estimator of the number of fixed effects $r(u)$, i.e., the rank of $L_{0}(u)$. 

Under the assumptions proposed in Section \ref{preview}, we have the following theorem.


\begin{thm}\label{3.thm1}
Under Assumptions \ref{3.ass1}-\ref{3.ass4} and the condition in Lemma \ref{3.lem3}, if $\lambda$ is the same as in Lemma \ref{3.lem1}, then there exists a universal constant $C_{error}>0$ such that the following holds w.p.a.1. 
\small\begin{align}
\sup_{u\in\mathcal{U}}\ \ ||\hat{\beta}(u)-\beta_{0}(u)||_{F}^{2}+\frac{1}{NT}||\hat{L}(u)-L_{0}(u)||_{F}^{2}
\leq& \gamma^{2}\coloneqq C_{error}^{2} \alpha_{NT}^{4}\log(NT)\left(\frac{p\log(pNT)}{NT}\lor \frac{\bar{r}}{N\land T}\right)\label{3.eq3.1}
\end{align}
\normalsize The constant $C_{error}$ is positively associated with $1/ \sigma_{min}^{2}$, $1/\underline{f}$ and $1/C_{RSC}$. Its exact formula is in the proof.
\end{thm}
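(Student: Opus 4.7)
The plan is to turn the already-prepared ingredients from Section~\ref{sec.preview} into a contradiction argument. By convexity of the objective in (\ref{3.eq3}) in $(\beta,L)$, if the quadratic form $\|\hat\Delta_\beta(u)\|_F^2+\|\hat\Delta_L(u)\|_F^2/NT$ exceeded $\gamma^2$ at some $u$, then rescaling the deviation back to the sphere $\|\Delta_\beta\|_F^2+\|\Delta_L\|_F^2/NT=\gamma^2$ would make the objective difference nonpositive, i.e.\ the infimum in (\ref{eq.base}) would be $\leq 0$. I therefore take as the single target: show that w.p.a.1, uniformly in $u\in\mathcal{U}$, that infimum is strictly positive. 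Lemma~\ref{3.lem1} restricts the infimization to $\mathcal{R}_u\cap\mathcal{D}$ w.p.a.1, and Assumption~\ref{3.ass3} ensures $\sum_j\|X_j\|_\infty\gamma\leq\alpha_{NT}$ w.p.a.1, which is exactly the event needed to apply Lemma~\ref{lowerbound}.

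The first substantive step is the quadratic lower bound on the conditional-expectation term. Lemma~\ref{lowerbound} gives
\[
\frac{1}{NT}\,\mathbb{E}\big[\cdots\big|W\big]\;\geq\;\frac{C_{\min}}{\alpha_{NT}^2\,NT}\Big\|\textstyle\sum_{j}X_j\Delta_{\beta,j}+\Delta_L\Big\|_F^2 .
\]
Applying restricted strong convexity (Assumption~\ref{ass.id}(ii)) splits this into $\tfrac{C_{\min}C_{RSC}}{\alpha_{NT}^2NT}(\|\sum_j X_j\Delta_{\beta,j}\|_F^2+\|\Delta_L\|_F^2)$, and the minimum-eigenvalue condition (Assumption~\ref{ass.id}(i)) converts $\|\sum_j X_j\Delta_{\beta,j}\|_F^2/NT$ into a multiple of $\sigma_{\min}^2\|\Delta_\beta\|_F^2$ with high probability. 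Combining yields, on the sphere,
\[
\frac{1}{NT}\,\mathbb{E}\big[\cdots\big|W\big]\;\geq\;\frac{C_*\,\gamma^2}{\alpha_{NT}^2},\qquad C_*\;=\;c\cdot(\sigma_{\min}^2\wedge 1)\,C_{\min}\,C_{RSC},
\]
for a numerical constant $c>0$. This is the engine term in $\gamma$.

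The second step controls the two remaining pieces. For the empirical process, Lemma~\ref{3.lem3} directly gives an upper bound of order $\log(NT)\bigl(\sqrt{p\log(pNT)/NT}\vee\sqrt{\bar r/(N\wedge T)}\bigr)\gamma$ on $|\mathbb{G}_u|/\sqrt{NT}$, uniformly in $u\in\mathcal{U}$ and over the sphere inside $\mathcal{R}_u$. For the penalty difference, reverse triangle gives $\bigl|\|L_0(u)+\Delta_L\|_*-\|L_0(u)\|_*\bigr|\leq\|\Delta_L\|_*$, and the cone definition together with (\ref{equiorder}) bounds $\|\Delta_L\|_*\leq 4\sqrt{3\bar r}\,\|\Delta_L\|_F+C_{Cone}\sqrt{p(N\wedge T)\log(pNT)}\,\|\Delta_\beta\|_F$; multiplying by $\lambda\asymp\sqrt{N\vee T}/(NT)$ and using $\|\Delta_L\|_F\leq\sqrt{NT}\gamma$, $\|\Delta_\beta\|_F\leq\gamma$ delivers a bound of the same order $\bigl(\sqrt{\bar r/(N\wedge T)}\vee\sqrt{p\log(pNT)/NT}\bigr)\gamma$ (absorbing the logarithm).

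The final step is the balancing argument. Summing both upper bounds gives $C_{up}\,\log(NT)\bigl(\sqrt{p\log(pNT)/NT}\vee\sqrt{\bar r/(N\wedge T)}\bigr)\gamma$, while the lower bound is $C_*\gamma^2/\alpha_{NT}^2$. Choosing
\[
C_{error}\;\geq\;C_{up}/C_*
\]
in the definition of $\gamma$ in (\ref{3.eq3.1}) makes the lower bound strictly larger than the upper bound, producing the contradiction uniformly in $u\in\mathcal{U}$ and giving the stated inequality w.p.a.1. The main obstacle, and where the $\alpha_{NT}^4$ in the final rate comes from, is the bookkeeping of the $\alpha_{NT}^{-2}$ prefactor in the quadratic lower bound: the curvature of the expected check loss is only $O(1/\alpha_{NT}^2)$ because $\Delta_L$ may be as large as $\alpha_{NT}$ in sup-norm and Assumption~\ref{3.ass2} only controls the density near zero. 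Matching $\gamma^2/\alpha_{NT}^2$ against $\gamma$ multiplied by the process/penalty rate forces an extra factor of $\alpha_{NT}^2$ in $\gamma$, and the stated positive association of $C_{error}$ with $1/\sigma_{\min}^2$, $1/\underline f$ and $1/C_{RSC}$ is read off the formula $C_{error}\geq C_{up}/C_*$.
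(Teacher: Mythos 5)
Your proposal is correct and follows essentially the same route as the paper's own proof: the convexity/rescaling contradiction on the sphere $\|\Delta_\beta\|_F^2+\|\Delta_L\|_F^2/NT=\gamma^2$ restricted to $\mathcal{R}_u\cap\mathcal{D}$, the quadratic lower bound from Lemma \ref{lowerbound} combined with Assumption \ref{ass.id} (yielding the $C_{min}C_{RSC}(\sigma_{min}^2\wedge 1)\gamma^2/\alpha_{NT}^2$ term), the uniform process bound from Lemma \ref{3.lem3}, the penalty bound via the cone property \eqref{equiorder} with the stated $\lambda$, and the final choice of $C_{error}$ proportional to $C_{sup}/(C_{min}C_{RSC}(\sigma_{min}^2\wedge 1))$, including the correct accounting of the $\alpha_{NT}^2$ factor in $\gamma$.
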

\begin{proof}
See Appendix \ref{app.sec4}.
\end{proof}

Theorem \ref{3.thm1} implies uniform consistency of our estimator of $\beta_{0}(u)$ and $L_{0}(u)$ over $\mathcal{U}$ for a fixed $\bar{r}$ and slowly growing $\alpha_{NT}$ and $p$. The key determinants of the rate of convergenge are $p\log(pNT)/NT$ and $1/(N\land T)$. This part matches the nuclear norm penalized mean regression literature (\cite{athey2018matrix}, \cite{moon2019nuclear} and \cite{chernozhukov2018inference}). We will discuss this part with more details in this section. Other determinants of the error bound include the number of fixed effects ($\bar{r}$), the quality of lower-bounding the conditional expectation in \ref{step1} ($\alpha_{NT}^{2}$ and $\underline{f}$), and the strength of identification ($\sigma_{min}^{2}$ and $C_{RSC}$). The way that these parameters affect the error bound is expected: A larger $\bar{r}$ results in a relatively higher-rank common component, making the estimation problem more difficult. A greater quadratic lower bound (a greater $\underline{f}$ or a smaller $\alpha_{NT}$) implies that the objective function tends to be more sensitive to perturbations around the global minimum. Finally, stronger identification (a larger $C_{RSC}$ or a larger $\sigma_{min}^{2}$) makes it easier to separate the estimation errors $\hat{\Delta}_{\beta,j}(u)$s and $\hat{\Delta}_{L}(u)$. 

\subsection{On the Rate of $\hat{L}(u)$ and A Rank Estimator}\label{sec4.1}
\cite{agarwal2012noisy} provide a minimax result that implies the (near) optimality of the error bound on $\hat{L}(u)$ obtained in Theorem \ref{3.thm1} when $\alpha_{NT}=\log(NT)$ and $p\log(pNT)/NT=O(\bar{r}/(N\land T))$. In \cite{agarwal2012noisy}, they study a model where an observable $N\times T$ matrix $Y$ is the sum of a low-rank matrix, a sparse matrix and a noise matrix of i.i.d. Gaussian entries. To apply their result to our model, consider a special case where $u=0.5$ and $Y=L_{0}+V$ where $V$ is an $N\times T$ matrix of i.i.d. $N(0,v^{2})$ entries. This model both satisfies the conditional quantile model \eqref{3.eq2} studied in this paper at $u=0.5$ with $\beta(0.5)=0$ and $q_{Y|L_{0}}(0.5)=L_{0}$, and also satisfies their setup with the sparse component being exactly zero. Their Theorem 2 (p.1195) shows that the lower bound on the minimax risk in the squared Frobenius norm over the family $\{L_{0}:\text{rank}(L_{0})\leq \bar{r},||L_{0}||_{\infty}\leq\alpha_{NT}\}$ has the order $\bar{r}/(N\land T)$. Comparing the order of the lower bound and our upper bound on the estimation error under the aforementioned order of $p$, they are equal up to a factor of $\alpha_{NT}^{4}\log(NT)$.

From the error bound on $\hat{L}(u)$, we can obtain the order of the singular values of the estimation error $\hat{\Delta}_{L}(u)$ by Weyl's theorem. Let $\sigma_{1}(u)\geq\cdots\geq \sigma_{r(u)}(u)>0$ be the nonzero singular values of $L_{0}(u)$, and $\hat{\sigma}_{1}(u)\geq\cdots\geq \hat{\sigma}_{N\land T}(u)$ be the singular values of $\hat{L}(u)$. We have the following corollary.
\begin{cor}\label{cor.sv}
Under the conditions in Theorem \ref{3.thm1}, the following holds w.p.a.1:
\begin{align}
&\sup_{u\in\mathcal{U}}\left\lbrace\max\left\lbrace|\hat{\sigma}_{1}(u)-\sigma_{1}(u)|,...,|\hat{\sigma}_{r(u)}(u)-\sigma_{r(u)}(u)|,\hat{\sigma}_{r(u)+1}(u),...,\hat{\sigma}_{N\land T}(u)\right\rbrace\right\rbrace\notag\\
\leq &\sqrt{NT}\gamma\coloneqq C_{error} \alpha_{NT}^{2}\sqrt{\log(NT)\left(p\log(pNT)\lor \bar{r}(N\lor T)\right)}.\label{eq.cor}
\end{align}
where $C_{error}$ is the same as in Theorem \ref{3.thm1}.
\end{cor}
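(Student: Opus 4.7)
The plan is to read off the singular-value bounds directly from the Frobenius-norm error bound in Theorem \ref{3.thm1} by invoking Weyl's inequality for singular values. Recall that for any two $N\times T$ matrices $A$ and $B$, Weyl's inequality states that $|\sigma_k(A)-\sigma_k(B)|\leq \|A-B\|$ for every $k=1,\ldots,N\land T$, where $\|\cdot\|$ is the spectral norm. Since the spectral norm is dominated by the Frobenius norm, the squared-Frobenius control of $\hat{L}(u)-L_0(u)$ transfers immediately into uniform control of each singular value of $\hat{L}(u)$.

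Concretely, I would proceed in three short steps. First, Theorem \ref{3.thm1} yields, on an event of probability tending to one and uniformly in $u\in\mathcal{U}$, the inequality $\|\hat{L}(u)-L_0(u)\|_F^2/(NT)\leq \gamma^2$, and hence $\|\hat{L}(u)-L_0(u)\|\leq \|\hat{L}(u)-L_0(u)\|_F\leq \sqrt{NT}\,\gamma$. Second, because $L_0(u)=\sum_{k=1}^{\bar{r}}\mathbbm{1}_k(u)\Lambda_k(u)F_k(u)'$ has rank at most $r(u)$ by construction, its ordered singular values satisfy $\sigma_k(u)=0$ for all $k>r(u)$. Applying Weyl's inequality with $A=\hat{L}(u)$ and $B=L_0(u)$ yields $|\hat{\sigma}_k(u)-\sigma_k(u)|\leq \sqrt{NT}\,\gamma$ for $k\leq r(u)$, and $\hat{\sigma}_k(u)=|\hat{\sigma}_k(u)-0|\leq \sqrt{NT}\,\gamma$ for $k>r(u)$. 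Third, taking the maximum over $k$ and the supremum over $u\in\mathcal{U}$ on the same probability-one event gives \eqref{eq.cor}, once one checks that $\sqrt{NT}\,\gamma$ equals $C_{error}\alpha_{NT}^{2}\sqrt{\log(NT)(p\log(pNT)\lor \bar{r}(N\lor T))}$; this is a direct algebraic rearrangement using $NT/(N\land T)=N\lor T$.

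There is no real obstacle: all randomness has been absorbed into the probability-one event of Theorem \ref{3.thm1}, and Weyl's inequality is applied pointwise in $u$ before taking the supremum, so uniformity in $u$ is inherited automatically. The only modeling subtlety worth flagging is that we bound the spectral norm by the Frobenius norm rather than controlling the spectral norm directly; in general this is loose by a factor of up to $\sqrt{N\land T}$, but it is the natural choice here because the rate delivered by Theorem \ref{3.thm1} is a Frobenius-norm rate and a sharper spectral-norm bound is not available from the arguments in Sections 3.1.1--3.1.2.
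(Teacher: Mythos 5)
Your proposal is correct and matches the paper's own proof essentially line for line: Weyl's inequality, the bound $\|\hat{\Delta}_{L}(u)\|\leq\|\hat{\Delta}_{L}(u)\|_{F}$, the observation that $\sigma_{k}(u)=0$ for $k>r(u)$, and substitution of the uniform Frobenius rate from Theorem \ref{3.thm1} together with $NT/(N\land T)=N\lor T$. No gaps.
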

\begin{proof}
See Appendix \ref{app.sec4}.
\end{proof}
Note that the nonzero singular values of $L_{0}(u)$ has order $\sqrt{NT}$ if $L_{0}(u)$ is formed by strong factors and factor loadings or if elements in $L_{0}(u)$ are $O(1)$. Although Theorem \ref{3.thm1} and Corollary \ref{cor.sv} are silent about whether the estimated low-rank component $\hat{L}(u)$ is low-rank or not, Corollary \ref{cor.sv} says that for large enough $N$ and $T$, as long as $p$ and $\alpha_{NT}$ are such that $\gamma=o(1)$, there does exist an arbitrarily large gap between the largest $r(u)$ and the remaining $((N\land T)-r(u))$ singular values of $\hat{L}(u)$. Specifically, the first $r(u)$ singular values of $\hat{L}(u)$ are of the order of $\sqrt{NT}$ while the other singular values are of the order of $\sqrt{NT}\gamma$. This confirms the intuition in Remark \ref{nuclear norm}.

This implication naturally leads to an estimator of $r(u)$. Let $\hat{r}(u)=\sum_{k}\mathbbm{1}(\hat{\sigma}_{k}(u)\geq C_{r})$ for an $(N,T)$-dependent $C_{r}$ such that $\sqrt{NT}\gamma=o(C_{r})$ and $C_{r}=o(\sqrt{NT})$. The following corollary establishes consistency of this estimator.
\begin{cor}\label{cor.rank}
Under the conditions in Theorem \ref{3.thm1}, for any $u\in\mathcal{U}$, suppose all the nonzero singular values of $L_{0}(u)$ are of the order of $\sqrt{NT}$ w.p.a.1, then $\mathbb{P}\left(\hat{r}(u)=r(u)\right)\to 1$.
\end{cor}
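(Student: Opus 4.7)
The plan is to deduce the corollary directly from the singular-value error bound in Corollary \ref{cor.sv}, which already controls both the gap between the estimated and true nonzero singular values and the magnitude of the ``spurious'' singular values past index $r(u)$. Since the threshold rule $\hat{r}(u)=\sum_{k}\mathbbm{1}(\hat{\sigma}_{k}(u)\geq C_{r})$ uses a cutoff $C_{r}$ placed in the scale gap $\sqrt{NT}\gamma \ll C_{r} \ll \sqrt{NT}$, the result will follow by showing the event $\{\hat{r}(u)=r(u)\}$ is implied by the high-probability event of Corollary \ref{cor.sv} together with the assumption that the nonzero $\sigma_{k}(u)$'s are all of order $\sqrt{NT}$.

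First I would fix $u\in\mathcal{U}$ and work on the intersection of (i) the event from Corollary \ref{cor.sv} that $\max\{|\hat{\sigma}_{k}(u)-\sigma_{k}(u)|:k\leq r(u)\}\vee \max\{\hat{\sigma}_{k}(u):k>r(u)\}\leq \sqrt{NT}\gamma$ and (ii) the event (w.p.a.1 by hypothesis) that $\sigma_{k}(u)\geq c\sqrt{NT}$ for $k\leq r(u)$ for some constant $c>0$. Both events hold w.p.a.1, so their intersection does too. On this intersection, for any $k\leq r(u)$ the reverse triangle inequality gives
\begin{equation*}
\hat{\sigma}_{k}(u)\;\geq\;\sigma_{k}(u)-\sqrt{NT}\gamma\;\geq\;c\sqrt{NT}-\sqrt{NT}\gamma.
\end{equation*}
Since $\gamma=o(1)$ (guaranteed by the conditions in Theorem \ref{3.thm1}, otherwise the theorem would not yield consistency), and since $C_{r}=o(\sqrt{NT})$ by choice, for all sufficiently large $N,T$ the right-hand side exceeds $C_{r}$. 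Hence all of the first $r(u)$ singular values of $\hat{L}(u)$ cross the threshold.

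Next, for $k>r(u)$, Corollary \ref{cor.sv} directly yields $\hat{\sigma}_{k}(u)\leq \sqrt{NT}\gamma$. Because $C_{r}$ was chosen with $\sqrt{NT}\gamma=o(C_{r})$, for large enough $N,T$ we have $\hat{\sigma}_{k}(u)<C_{r}$, so none of these singular values are counted. Combining the two parts, on the intersected event the indicator sum equals exactly $r(u)$ for all sufficiently large $N,T$, which gives $\mathbb{P}(\hat{r}(u)=r(u))\to 1$.

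There is essentially no genuine obstacle here; the corollary is a clean consequence of Corollary \ref{cor.sv} once a valid threshold is chosen. The only delicate point worth being explicit about in the write-up is the existence of a valid $C_{r}$: the ratio $(\sqrt{NT}\gamma)/\sqrt{NT}=\gamma\to 0$ under the conditions of Theorem \ref{3.thm1}, so any sequence such as $C_{r}=\sqrt{NT\gamma}$ (or any geometric mean of the two scales) satisfies both $\sqrt{NT}\gamma=o(C_{r})$ and $C_{r}=o(\sqrt{NT})$, confirming that the threshold rule is well-defined under the stated assumptions.
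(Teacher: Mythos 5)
Your proof is correct and follows essentially the same route as the paper: invoke the uniform singular-value bound of Corollary \ref{cor.sv} together with the assumption that the nonzero singular values are of order $\sqrt{NT}$, and note that the threshold $C_{r}$ (with $\sqrt{NT}\gamma=o(C_{r})$ and $C_{r}=o(\sqrt{NT})$) separates the two scales for large $N,T$. The paper only checks the boundary indices $\hat{\sigma}_{r(u)}(u)$ and $\hat{\sigma}_{r(u)+1}(u)$, which by monotonicity of the singular values is equivalent to your check over all $k$, so there is no substantive difference.
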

\begin{proof}
See Appendix \ref{app.sec4}.
\end{proof}
\begin{rem}
When the rank is stable in a range of quantile levels, one can estimate the rank at multiple quantile levels to improve efficiency.
\end{rem}
\begin{rem}
Consistency can be shown to hold uniformly in $u\in\mathcal{U}$, i.e., $\mathbb{P}(\sup_{u\in\mathcal{U}}|\hat{r}(u)-r(u)|=0)\to 1$, if the required order of the nonzero singular values of $L_{0}(u)$ holds uniformly in $u$ as well.
\end{rem}
\subsection{On the Rate of $\hat{\beta}(u)$}
Theorem \ref{3.thm1} implies uniform consistency of $\hat{\beta}(u)$ if $\alpha_{NT}$ and $p$ are not too large such that $\gamma=o(1)$. When $\bar{r}/(N\land T)=O(p\log(pNT)/NT)$, the rate of convergence of $\hat{\beta}(u)$ is nearly optimal under $\alpha_{NT}=O(\log(NT))$. However, the rate of convergence of $\hat{\beta}(u)$ is slower than optimal when the number of covariates is small such that $p\log(pNT)/(NT)=o(\bar{r}/(N\land T))$. This is because of penalization and because the estimation error of the coefficient estimator and that of the low-rank matrix estimator are not orthogonal under the check function-involved objective function. In this case, the penalized estimator $\hat{\beta}(u)$ can be used as a first-step estimator; using it as an initialization with the rank estimator proposed in Section \ref{sec4.1}, one can adopt the iterative estimator \eqref{eq.estimator.iter} described in the Introduction without penalization. Although this is back to a nonconvex problem, the initial value is already lying in a small neighborhood of the true parameter by uniform consistency. Some rounds of iterations even before convergence is reached may correct the penalization bias and achieve the optimal rate (see \cite{moon2019nuclear} and \cite{chernozhukov2018inference} for mean regressions). The benefits of adopting such a two-step procedure instead of a fully iterative approach are three-fold. First, it provides a consistent initial guess of $\beta_{0}(u)$, potentially avoiding the issue of convergence to a local minimum. Second, as a by-product, the penalized estimation step also provides a rank estimator which is needed for the iterative approach. Third, from the Monte Carlos in Section \ref{3.sec5} and Appendix \ref{app.simulation}, we can see that the penalized estimator saves a significant amount of computation time than the fully iterative one.

\section{Monte Carlo Simulations}\label{3.sec5}
In this section, we illustrate the finite sample performance of our estimator using Monte Carlo simulations. 
\subsection{Data Generating Process}
We consider the following data generating process, which is a special case of Example \ref{eg3} and is adapted from \cite{ando2019quantile}:
\begin{equation*}
Y_{it}=\sum_{j=1}^{3}X_{j,it}\beta_{j}(U_{it})+\sum_{k=1}^{3}\mathbbm{1}_{k}(U_{it})F_{kt}\Lambda_{ki}(U_{it})+\varepsilon_{it}
\end{equation*}
where the $U_{it}$s are independently drawn from $\text{Unif}[0,1]$. The coefficients satisfy
\begin{align*}
\beta_{1}(U_{it})=\beta_{3}(U_{it})=-1+0.1U_{it},\ \ \ \beta_{2}(U_{it})=1+0.1U_{it}
\end{align*}

The indicator functions $\mathbbm{1}_{k}(\cdot):(0,1)\mapsto\{0,1\},k=1,2,3,$ satisfy
\begin{align*}
\mathbbm{1}_{1}(u)=1, \forall u\in (0,1),\ \ \ \ \mathbbm{1}_{2}(u)=\mathbbm{1}(u>0.3),\ \ \ \mathbbm{1}_{3}(u)=\mathbbm{1}(u>0.7).
\end{align*}

We draw the time fixed effects $F_{1t},F_{2t}$ and $F_{3t}$ independently from $\text{Unif}[0,2]$. We generate the individual fixed effects as $\Lambda_{ki}(U_{it})=\chi_{ki}+0.1U_{it}$ where the $\chi_{ki}$s are independently drawn from $\text{Unif}[0,1]$ for $k=1,2,3$. The covariates are generated by:
\begin{align*}
X_{j,it}=\eta_{j,it}+\phi\cdot \left( F_{jt}^{2}+\chi_{ji}^{2}\right),j=1,2,3
\end{align*}
where $\eta_{j,it}$ are independent $\text{Unif}[0,2]$ for all $j=1,2,3$. The parameter $\phi\in\{0.1,0.2,0.3\}$ governs the correlation between the covariates and the fixed effects. Finally, $\varepsilon_{it}$ is generated by $G^{-1}(U_{it})$ where $G$ is the cumulative distribution function of either the standard normal distribution or student's $t$-distribution with degree of freedom 2. 

Since $X_{j,it}$ and $F_{kt}$ are positive for all $i,t,j$ and $k$ almost surely and $\beta_{j}(\cdot)$, $\Lambda_{ki}(\cdot)$ and $G^{-1}(\cdot)$ are strictly increasing  (almost surely) for all $i,j$ and $k$, $Y_{it}$ is strictly increasing in $U_{it}$ almost surely. Let $\textbf{1}_{N\times T}$ be an $N\times T$ matrix of all ones. The $u$-th conditional quantile of $Y$ is thus $q_{Y|W}(u)=\sum_{j=1}^{3}X_{j}\beta_{j}(u)+L_{0}(u)$ where
\begin{align*}
L_{0}(u)=\begin{cases}
G^{-1}(u)\textbf{1}_{N\times T}+\Lambda_{1}(u)F_{1}',\text{ if }u\leq 0.3\\
G^{-1}(u)\textbf{1}_{N\times T}+\Lambda_{1}(u)F_{1}'+\Lambda_{2}(u)F_{2}',\text{ if }0.3< u\leq 0.7\\
G^{-1}(u)\textbf{1}_{N\times T}+\Lambda_{1}(u)F_{1}'+\Lambda_{2}(u)F_{2}'+\Lambda_{3}(u)F_{3}',\text{ if }u> 0.7
\end{cases}
\end{align*}

From the model, one can see that the rank of $L_{0}(u)$ and the set of effective fixed effects vary in $u$. Also, the model allows the covariates to be correlated with $L_{0}(u)$. Higher correlation (greater $\phi$) would make it more difficult to separately estimate $\beta(u)$s and the common component $L_{0}(u)$ since it tends to yield a smaller restricted strong convexity constant $C_{RSC}$. Finally, $Y_{it}$ is allowed to have heavy tails because $\varepsilon_{it}$ can be student's $t$-distributed.

\subsection{Evaluate the Performance}
To illustrate the performance of the estimator, we conduct Monte Carlo simulations with various sample sizes at $\phi\in\{0.1,0.2,0.3\}$ and $u\in\{0.2,0.5,0.8\}$ with $G$ equal to the cumulative distribution function of either the standard normal or student's \textit{t}-distribution with degree of freedom 2. As for the sample size, $(N,T)\in \{(200,200),(300,300),(400,400),(500,500),\newline(200,300),(300,200), (200,400),(400,200),(200,500),(500,200)\}$. The first four sample sizes with $N=T$ allow us to see the convergence of the estimator. The other six sample sizes with $N\neq T$ and $N\land T=200$ allow us to test the theory which suggests that the rate of convergence does not depend on $N\lor T$ under a fixed $p$.

We use multiple measures to evaluate the estimator's performance. For the three coefficients $\beta_{j}(u)$s, we compute their average squared bias $\text{Bias}_{\beta}^{2}$ and variance $\text{Var}_{\beta}$ over 100 simulation replications\footnote{Specifically, $\text{Bias}_{\beta}^{2}\coloneqq\sum_{j=1}^{3}\left(\sum_{b=1}^{100}(\hat{\beta}_{j,b}(u)-\beta_{j}(u))/100\right)^{2}/3$ where $\hat{\beta}_{j,b}(u)$ is the estimator of $\beta_{j}(u)$ in the $b$-th simulation. The variance $\text{Var}_{\beta}\coloneqq\sum_{j=1}^{3}\left[\sum_{b=1}^{100}\hat{\beta}_{j,b}(u)^{2}/100-\left(\sum_{b=1}^{100}\hat{\beta}_{j,b}(u)/100\right)^{2}\right]/3$.}. For the low-rank component, we compute the average squared Frobenius norm of the estimation error:
\begin{align*}
\text{MSE}_{L}\coloneqq \frac{1}{100}\sum_{b=1}^{100}\left(\frac{\sum_{i,t}\left(\hat{L}_{it,b}(u)-L_{0,it}(u)\right)^{2}}{NT}\right)
\end{align*}
where $\hat{L}_{b}(u)$ is the estimated $L_{0}(u)$ in the $b$-th simulation ($b=1,...,100$). Besides these measures, we also look at the mean squared error of the estimated conditional quantile function $\hat{q}_{Y|W}(u)$, defined as follows \citep{ando2019quantile}:
\begin{align*}
\text{MSE}_{q}\coloneqq \frac{1}{100}\sum_{b=1}^{100}\left(\frac{\big|\big|\sum_{j=1}^{3}X_{j}\left(\hat{\beta}_{j,b}(u)-\beta_{j}(u)\right)+\left(\hat{L}_{b}(u)-L_{0}(u)\right)\big|\big|_{F}^{2}}{NT}\right)
\end{align*}

Note that $\text{MSE}_{q}$ can be small even if $\text{MSE}_{L}$ is large; the latter is also affected by $C_{RSC}$, which in turn, is affected by the correlation between the covariates and $L_{0}(u)$ (determined by parameter $\phi$). Hence, by comparing $\text{MSE}_{q}$ with $\text{MSE}_{L}$ across different values of $\phi$, we can see how restricted strong convexity affects the results.

Finally, we also record the average computation time of the estimator.

To make comparison between the nuclear norm penalized estimator proposed in this paper and alternative estimators, we compute the following iterative estimator adapted from \cite{ando2019quantile} as introduced in the Introduction, and the pooled estimator:
\begin{align}
(\hat{\beta}^{It}(u),\hat{\Lambda}(u),\hat{F})=&\arg\min_{\beta,F,\Lambda}\frac{1}{NT}\sum_{i,t}\rho_{u}\left(Y_{it}-X_{it}'\beta-\Lambda_{i}'F_{t}\right),\hat{L}^{It}(u)=\hat{\Lambda}(u)\hat{F}'\label{estimatorIt}\\
\hat{\beta}^{Po}(u)=&\arg\min_{\beta}\frac{1}{NT}\sum_{i,t}\rho_{u}\left(Y_{it}-X_{it}'\beta\right)\label{estimatorPo}
\end{align}
The finite sample bias of these two estimators provide us with two benchmarks for bias analysis; the bias of any reasonably estimator should be close to the iterative estimator and at least smaller than the pooled estimator. As discussed in the Introduction, $(\hat{\beta}^{It}(u),\hat{\Lambda}(u),\hat{F})$ is obtained by a nonconvex problem which needs the number of fixed effects $r(u)$ to be known. Nonconvexity may result in convergence to local minima, but when the global minimum is indeed achieved and $r(u)$ is correctly set, its finite sample bias is presumably small as there is no penalization. On the other hand, the pooled estimator $\hat{\beta}^{Po}$ completely ignores the fixed effects. Since the covariates are correlated with the fixed effects by construction, $\hat{\beta}^{Po}$ is inconsistent. By comparing the bias of our penalized estimator with these two estimators, we can see what regularization buys and costs.

\subsection{Implementation}
For our penalized estimator, we present the details about our algorithm in Appendix \ref{app.algorithm}. The algorithm is adapt from the Augmented Lagrangian Multiplier method proposed in \cite{lin2010augmented}, \cite{candes2011robust} and \cite{yuan2013sparse}. The method was originally designed for $u=0.5$ with no covariates. We extend it to accommodate any $u\in(0,1)$ with covariates. From the simulation results, the new algorithm works fast and well. For $\lambda$, here we simply set it to be $\lambda=\log(NT)\sqrt{N\lor T}/(3.6NT)$. This choice of $\lambda$ slightly slows down the theoretical rate of convergence because of the additional $\log(NT)$, but it works no matter what $C_{Cone}$ is. Alternatively, one may use cross-validation or adopt the BIC criterion proposed in \cite{belloni2019high} to select $\lambda$. 

For the iterative estimator, we compute $(\hat{\beta}^{It}(u),\hat{\Lambda}(u),\hat{F})$ by iteratively running quantile regression to update $\beta(u),\Lambda(u)$ and $F$ using the \textit{true} number of the fixed effects $r(u)$. This algorithm is adapted from \cite{ando2019quantile} which was for $i$-specific $\beta(u)$. Specicically, for each $i$, obtain $\Lambda_{i}(u)$ by quantile regression using the time series variation by fixing $\beta$ and $F$. For each $t$, update $F_{t}$ by quantile regression using the cross-sectional variation by fixing $\Lambda(u)$ and $\beta$. Then fixing $\Lambda(u)$ and $F$, update $\beta$ by pooled quantile regression. Iterate until converged. For initialization, for each $u$, use $\hat{\beta}^{Po}(u)$ as the initial value of $\beta(u)$, obtain the residual matrix $R$, and initialze the $F$ matrix by the eigenvector matrix of $R'R$ multiplied by $\sqrt{T}$. The termination criterion for the iteration, adopted from \cite{ando2019quantile}, is set to be the same as that for the nuclear norm penalized estimator (see Appendix \ref{app.algorithm} for details) to make the results, especially the computation time, comparable. 
\subsection{Results}
Tables \ref{tab.n.0.2} and \ref{tab.t.0.2} present the results for $\phi=0.2$ with standard normal and student $t$-distributed $\varepsilon_{it}$, respectively. The results for $\phi=0.1$ and $0.3$ are provided in Appendix \ref{app.simulation}.
The Columns Nu, It and Po report results of our penalized estimator, the iterative estimator \eqref{estimatorIt} and the pooled estimator \eqref{estimatorPo}. The column Time reports on average how many seconds each estimator takes. All experiments were performed in MATLAB 2019b under Windows 10 on a desktop computer with 10-core 2.8GHz Intel i9 processor and 16GB RAM using parallel computing. From the results, we have the following key observations.

First, for the penalized estimator (Columns Nu), we can see that in all specifications, the bias and variance of $\hat{\beta}(u)$ shrink as $N$ and $T$ \textit{both} increase for all $u$ considered (see the $\text{Bias}^{2}_{\beta}\times 100$ and $\text{Var}_{\beta}\times 10^{4}$ columns). The same holds for the estimation errors of the estimated low-rank common component $\hat{L}(u)$ and the conditional quantile function $\hat{q}_{Y|W}(u)$. In particular, the results are robust to heavy-tailed data (Table \ref{tab.t.0.2}). On the other hand, when $N\land T$ is fixed, increasing $N\lor T$ does not lead to big improvement in the penalized estimator's performance. This echoes Theorem \ref{3.thm1} which says the rate of convergence is dominated by $1/(N\land T)$ for a fixed $p$. Meanwhile, we can see that the average squared estimation error of the conditional quantile function, $\text{MSE}_{q}$, is smaller than that of the low-rank common component, $\text{MSE}_{L}$. This is due to the separation issue and restricted strong convexity, arisen from the correlation between the covariate matrices and the low-rank component. Together with the results in Appendix \ref{app.simulation}, we can see that $\text{MSE}_{L}$ becomes smaller and closer to $\text{MSE}_{q}$ when $\phi$, and thus the correlation between the covariates and the low-rank component, gets smaller. This is because a smaller $\phi$ tends to yield a larger restricted strong convexity constant $C_{RSC}$.

Next, let us compare the performance of the different estimators we consider. Comparing the pooled estimator (Column Po) and the penalized estimator, we can see that even though the latter is biased due to regularization, the bias is still much smaller than that of the pooled estimator, whose bias does not shrink at all as the sample size increases due to endogeneity. Now let us compare our penalized estimator with the iterative estimator (Columns It), which seems not to suffer from the local minima issue in these experiments. The penalized estimator has similar performance to the iterative one with relatively larger bias and smaller variance
    \begin{table}[H]
\centering
\caption{Standard Normal Error. Parameter $\phi=0.2$}\label{tab.n.0.2}
\begin{tabular}{ccccccccccccc}
\hline
\hline
&&\multicolumn{3}{c}{$\text{Bias}^{2}_{\beta}\times 10^{2}$}&  \multicolumn{2}{c}{$\text{Var}_{\beta}\times 10^{4}$} &\multicolumn{2}{c}{$\text{MSE}_{L}$}&\multicolumn{2}{c}{$\text{MSE}_{q}$}&\multicolumn{2}{c}{\begin{tabular}{@{}c@{}}Time\\(second)
\end{tabular}}\\
\cline{2-13}
$u$& $(N,T)$ & Nu & It & Po  & Nu & It  &Nu & It   & Nu & It & Nu & It \\
\hline
\multirow{10}{*}{0.2}&$(200,200)$&$2.67$&$0.08$&$11.5$&$4.57$&$2.07$&$0.32$&$0.05$&$0.24$&$0.04$&$1$&$55$\tabularnewline
&$(300,300)$&$1.97$&$0.04$&$11.9$&$2.03$&$0.86$&$0.22$&$0.03$&$0.14$&$0.02$&$3$&$153$\tabularnewline
&$(400,400)$&$1.48$&$0.03$&$11.7$&$1.04$&$0.48$&$0.18$&$0.02$&$0.09$&$0.02$&$9$&$409$\tabularnewline
&$(500,500)$&$1.19$&$0.02$&$11.8$&$0.55$&$0.27$&$0.15$&$0.01$&$0.07$&$0.01$&$18$&$742$\tabularnewline
\cline{2-13}
&$(200,300)$&$2.62$&$0.07$&$11.9$&$3.36$&$1.19$&$0.29$&$0.04$&$0.18$&$0.03$&$2$&$87$\tabularnewline
&$(300,200)$&$2.59$&$0.07$&$11.8$&$3.48$&$1.27$&$0.29$&$0.04$&$0.18$&$0.03$&$2$&$87$\tabularnewline
\cline{2-13}
&$(200,400)$&$2.50$&$0.06$&$11.7$&$2.68$&$0.99$&$0.28$&$0.03$&$0.16$&$0.03$&$3$&$133$\tabularnewline
&$(400,200)$&$2.49$&$0.05$&$11.8$&$2.51$&$0.97$&$0.28$&$0.03$&$0.16$&$0.03$&$3$&$133$\tabularnewline
\cline{2-13}
&$(200,500)$&$2.43$&$0.05$&$11.8$&$2.15$&$0.83$&$0.28$&$0.03$&$0.15$&$0.03$&$4$&$189$\tabularnewline
&$(500,200)$&$2.45$&$0.05$&$11.9$&$$2.35$$&$0.76$&$0.28$&$0.03$&$0.15$&$0.03$&$4$&$188$\tabularnewline
\hline
\multirow{10}{*}{0.5}&$(200,200)$&$0.31$&$0.12$&$22.0$&$3.14$&$3.49$&$0.36$&$0.16$&$0.26$&$0.14$&$1$&$46$\tabularnewline
&$(300,300)$&$0.18$&$0.05$&$22.0$&$1.10$&$1.88$&$0.19$&$0.13$&$0.13$&$0.12$&$3$&$123$\tabularnewline
&$(400,400)$&$0.13$&$0.03$&$21.9$&$0.58$&$1.12$&$0.13$&$0.12$&$0.09$&$0.11$&$7$&$303$\tabularnewline
&$(500,500)$&$0.11$&$0.02$&$22.0$&$0.32$&$0.75$&$0.10$&$0.12$&$0.07$&$0.11$&$13$&$467$\tabularnewline
\cline{2-13}
&$(200,300)$&$0.24$&$0.08$&$21.9$&$1.65$&$2.56$&$0.21$&$0.14$&$0.14$&$0.13$&$1$&$73$\tabularnewline
&$(300,200)$&$0.24$&$0.07$&$22.1$&$1.66$&$2.44$&$0.21$&$0.15$&$0.14$&$0.14$&$1$&$67$\tabularnewline
\cline{2-13}
&$(200,400)$&$0.22$&$0.06$&$22.6$&$1.28$&$2.13$&$0.17$&$0.14$&$0.10$&$0.13$&$2$&$109$\tabularnewline
&$(400,200)$&$0.22$&$0.06$&$22.2$&$1.26$&$1.81$&$0.18$&$0.14$&$0.10$&$0.13$&$2$&$112$\tabularnewline
\cline{2-13}
&$(200,500)$&$0.20$&$0.06$&$21.7$&$1.07$&$1.93$&$0.15$&$0.13$&$0.08$&$0.12$&$3$&$148$\tabularnewline
&$(500,200)$&$0.20$&$0.05$&$22.1$&$0.98$&$1.70$&$0.15$&$0.13$&$0.08$&$0.12$&$3$&$151$\tabularnewline
\hline
\multirow{10}{*}{0.8}&$(200,200)$&$0.22$&$0.31$&$35.6$&$6.36$&$5.19$&$1.01$&$0.60$&$0.73$&$0.48$&$2$&$75$\tabularnewline
&$(300,300)$&$0.11$&$0.11$&$34.9$&$2.43$&$1.95$&$0.48$&$0.29$&$0.35$&$0.25$&$7$&$170$\tabularnewline
&$(400,400)$&$0.06$&$0.05$&$34.8$&$1.01$&$1.06$&$0.29$&$0.21$&$0.22$&$0.19$&$18$&$455$\tabularnewline
&$(500,500)$&$0.04$&$0.03$&$35.3$&$0.61$&$0.72$&$0.20$&$0.17$&$0.16$&$0.17$&$34$&$783$\tabularnewline
\cline{2-13}
&$(200,300)$&$0.17$&$0.20$&$34.5$&$3.59$&$3.27$&$0.66$&$0.41$&$0.47$&$0.34$&$4$&$114$\tabularnewline
&$(300,200)$&$0.17$&$0.17$&$34.9$&$3.41$&$2.71$&$0.67$&$0.41$&$0.47$&$0.34$&$4$&$105$\tabularnewline
\cline{2-13}
&$(200,400)$&$0.15$&$0.16$&$35.6$&$2.74$&$2.59$&$0.57$&$0.35$&$0.40$&$0.29$&$5$&$159$\tabularnewline
&$(400,200)$&$0.16$&$0.12$&$34.7$&$2.68$&$2.41$&$0.57$&$0.33$&$0.40$&$0.29$&$5$&$147$\tabularnewline
\cline{2-13}
&$(200,500)$&$0.14$&$0.15$&$34.8$&$2.19$&$2.07$&$0.53$&$0.32$&$0.37$&$0.27$&$8$&$200$\tabularnewline
&$(500,200)$&$0.14$&$0.09$&$34.9$&$1.96$&$1.82$&$0.53$&$0.30$&$0.38$&$0.26$&$8$&$199$\tabularnewline
\hline
\multicolumn{13}{l}{\footnotesize \textit{Note}: Columns Nu, It and Po report the results of the nuclear norm penalized estimator proposed in}\tabularnewline
\multicolumn{13}{l}{\footnotesize this paper, the iterative estimator \eqref{estimatorIt} and the pooled estimator \eqref{estimatorPo}, averaged over 100 simulations.}
     \end{tabular}
    \end{table}
\noindent  in many cases  (e.g. $u=0.5$ in all cases). Admittedly, the penalized estimates of the coefficients tend to have a larger mean squared error since the theoretical convergence rate is slow. However, this is not always the case especially when the sample size is relatively small. For instance, when $u=0.8$, $(N,T)=(200,200)$ for standard normal error, and $(N,T)=(200,200),(300,300)$, etc. for student $t$-distributed error, the penalized estimator

     \begin{table}[H]
\centering
\caption{Student $t$-Distributed Error (Degree of Freedom$=2$). Parameter $\phi=0.2$}\label{tab.t.0.2}
\begin{tabular}{ccccccccccccc}
\hline
\hline
&&\multicolumn{3}{c}{$\text{Bias}^{2}_{\beta}\times 10^{2}$}&  \multicolumn{2}{c}{$\text{Var}_{\beta}\times 10^{4}$} &\multicolumn{2}{c}{$\text{MSE}_{L}$}&\multicolumn{2}{c}{$\text{MSE}_{q}$}&\multicolumn{2}{c}{\begin{tabular}{@{}c@{}}Time\\(second)
\end{tabular}}\\
\cline{2-13}
$u$& $(N,T)$ & Nu & It & Po  & Nu & It &Nu & It   & Nu & It & Nu & It \\
\hline
\multirow{10}{*}{0.2}&$(200,200)$&$3.61$&$0.17$&$10.0$&$7.98$&$4.52$&$0.48$&$0.11$&$0.42$&$0.10$&$1$&$48$\tabularnewline
&$(300,300)$&$2.84$&$0.08$&$10.4$&3.67&1.68&0.34&0.06&0.25&0.05&4&151\tabularnewline
&$(400,400)$&2.18&0.05&10.2&1.92&1.02&0.27&0.04&0.17&0.04&13&404\tabularnewline
&$(500,500)$&1.79&0.03&10.3&1.14&0.65&0.23&0.03&0.14&0.03&26&730\tabularnewline
\cline{2-13}
&$(200,300)$&3.58&0.14&10.5&5.58&3.10&0.42&0.08&0.30&0.07&2&85\tabularnewline
&$(300,200)$&3.48&0.14&10.3&6.42&3.18&0.41&0.08&0.30&0.07&2&81\tabularnewline
\cline{2-13}
&$(200,400)$&3.33&0.12&10.3&4.60&2.48&0.39&0.07&0.26&0.06&3&128\tabularnewline
&$(400,200)$&3.30&0.11&10.2&4.40&2.21&0.39&0.07&0.26&0.06&3&124\tabularnewline
\cline{2-13}
&$(200,500)$&3.29&0.09&10.5&3.95&1.79&0.39&0.06&0.25&0.06&6&185\tabularnewline
&$(500,200)$&3.31&0.10&10.4&4.39&1.79&0.39&0.06&0.25&0.06&5&184\tabularnewline
\hline
\multirow{10}{*}{0.5}&$(200,200)$&0.38&0.14&22.6&4.49&4.50&0.45&0.18&0.32&0.16&1&45\tabularnewline
&$(300,300)$&0.22&0.05&22.9&1.23&2.10&0.24&0.14&0.16&0.13&4&116\tabularnewline
&$(400,400)$&0.15&0.02&22.5&0.73&1.28&0.16&0.13&0.11&0.13&10&267\tabularnewline
&$(500,500)$&0.13&0.02&22.5&0.47&0.85&0.13&0.12&0.08&0.12&19&418\tabularnewline
\cline{2-13}
&$(200,300)$&0.31&0.10&23.3&2.21&2.51&0.27&0.16&0.16&0.14&2&72\tabularnewline
&$(300,200)$&0.31&0.10&22.8&2.24&2.71&0.27&0.16&0.17&0.15&2&70\tabularnewline
\cline{2-13}
&$(200,400)$&0.26&0.09&22.5&1.55&2.12&0.21&0.15&0.12&0.13&3&114\tabularnewline
&$(400,200)$&0.27&0.06&22.7&1.87&2.18&0.21&0.16&0.12&0.15&3&92\tabularnewline
\cline{2-13}
&$(200,500)$&0.27&0.06&23.0&1.33&2.15&0.19&0.15&0.10&0.14&4&139\tabularnewline
&$(500,200)$&0.24&0.06&22.6&1.16&2.23&0.18&0.15&0.10&0.14&5&133\tabularnewline
\hline
\multirow{10}{*}{0.8}&$(200,200)$&0.31&0.82&35.0&11&11&1.30&1.06&0.92&0.83&2&74\tabularnewline
&$(300,300)$&0.17&0.20&36.3&3.73&3.53&0.66&0.51&0.47&0.45&6&257\tabularnewline
&$(400,400)$&0.10&0.09&35.8&1.84&1.95&0.41&0.30&0.30&0.28&18&528\tabularnewline
&$(500,500)$&0.08&0.06&36.3&1.17&1.34&0.30&0.24&0.22&0.23&34&882\tabularnewline
\cline{2-13}
&$(200,300)$&0.26&0.50&35.6&6.20&6.68&0.87&0.79&0.59&0.65&3&152\tabularnewline
&$(300,200)$&0.27&0.43&35.9&5.63&6.49&0.87&0.78&0.59&0.66&3&135\tabularnewline
\cline{2-13}
&$(200,400)$&0.25&0.35&36.5&4.19&5.23&0.75&0.64&0.50&0.55&5&219\tabularnewline
&$(400,200)$&0.27&0.30&35.9&4.38&5.02&0.75&0.64&0.50&0.56&5&212\tabularnewline
\cline{2-13}
&$(200,500)$&0.24&0.31&36.0&3.19&4.05&0.69&0.56&0.47&0.48&8&302\tabularnewline
&$(500,200)$&0.22&0.19&36.2&3.19&3.52&0.67&0.51&0.46&0.46&8&283\tabularnewline
\hline
\multicolumn{13}{l}{\footnotesize \textit{Note}: Columns Nu, It and Po report the results of the nuclear norm penalized estimator proposed in}\tabularnewline
\multicolumn{13}{l}{\footnotesize this paper, the iterative estimator \eqref{estimatorIt} and the pooled estimator \eqref{estimatorPo}, averaged over 100 simulations.}
     \end{tabular}
    \end{table}
\noindent has smaller bias and smaller mean squared error in the coefficients than the iterative estimator.

In terms of computation speed, the penalized estimator in these experiments is much faster than the iterative one. The former rarely takes more than half a minute and in most cases only a few seconds. But the iterative estimator takes about 20-40 times longer, and when the sample size is large ($N=T=500$), it can take almost 15 minutes. Note that when we compute the iterative estimator, we treat the number of fixed effects as known. In practice, this additional parameter also needs to be estimated, so the actual computation time can be even longer.


\section{Concluding Remarks}\label{3.sec6}
  
In this paper, we study a conditional quantile panel data model with interactive fixed effects. By exploiting the low-rankness of the matrix formed by the interactive fixed effects, we propose a nuclear norm penalized estimator. The estimator jointly estimates the coefficients and the low-rank matrix by solving a convex problem. We derive a uniform error bound on the estimator and in turn, establish uniform consistency. Based on the error bound, we also construct a consistent estimator of the number of fixed effects at any given quantile level. From the Monte Carlo simulations, our estimator performs well and computation is much more efficient than a related iterative estimator. 

We conjecture that after the penalized estimator and the rank estimator are obtained, by using them as the initial value, a few rounds of iterations based on the iterative estimator's minimization problem would remove the bias and restore the rate of convergence of the coefficient estimator that is slowed down by penalization. Inference may also be available based on such post-penalization procedures. We leave these for future work.

\newpage
\begin{appendices}
\section{Implementation of the Estimator}\label{app.algorithm}
We adapt an augmented Lagrange multiplier (ALM) algorithm introduced in \cite{lin2010augmented}, \cite{candes2011robust} and \cite{yuan2013sparse}. Rewrite the minimization problem \eqref{3.eq3} as\footnote{In theory, the estimator \eqref{3.eq3} proposed in Section \ref{3.sec2} solves a constrained minimization problem with $||L||_{\infty}\leq \alpha_{NT}$. However, since $\alpha_{NT}$ is allowed to grow to infinity with $N$ and $T$, in practice we can set $\alpha_{NT}$ as a large positive number, solve an unconstrained problem using the algorithm in this appendix, and check whether the obtained $\hat{L}(u)$ satisfies the constraint. Moreover, under the alternative assumptions in Appendix \ref{app.second}, the constraint in the minimization problem can be dropped.}
\begin{align*}
\min_{L,V,\beta}\ \ &\frac{1}{\lambda NT}\bm{\rho}_{u}(V)+||L||_{*}\\
s.t.\ \ & \sum_{j=1}^{p}X_{j}\beta_{j}+L+V=Y
\end{align*}
The ALM method is based on the augmented Lagrangian
\begin{equation}
l(L,\beta,V,H)=\frac{1}{\lambda NT}\bm{\rho}_{u}(V)+||L||_{*}+\left\langle H,Y-\sum_{j=1}^{p}X_{j}\beta_{j}-L-V\right\rangle+\frac{\mu}{2}||Y-\sum_{j=1}^{p}X_{j}\beta_{j}-L-V||_{F}^{2}
\end{equation}
where $H\in\mathbb{R}^{N\times T}$ is the Lagrangian multiplier of the linear constraint $\sum_{j=1}^{p}X_{j}\beta_{j}+L+V=Y$ and $\mu>0$ is the penalty parameter for the violation of the constraint. By separability of the parameters in $l$, the ALM method iteratively updates $L,\beta,V$ and $H$ one at a time until converged. Given the $k$-th step $L^{(k)},\beta^{(k)}$, $V^{(k)}$ and $H^{(k)}$, ALM updates $L$, $\beta$ and $V$ by the first order condition and $H$ as follows:
\begin{align}
&L\text{-minimization: }0\in \nabla||L^{(k+1)}||_{*}-\left(H^{(k)}-\mu\left(V^{(k)}+\sum_{j=1}^{p}X_{j}\beta_{j}^{(k)}+L^{(k+1)}-Y\right)\right)\label{Lstep}\\
&V\text{-minimization: }0\in \frac{1}{\lambda NT}\nabla \bm{\rho}_{u}(V^{(k+1)})-\left(H^{(k)}-\mu\left(V^{(k+1)}+\sum_{j=1}^{p}X_{j}\beta_{j}^{(k)}+L^{(k+1)}-Y\right)\right)\label{Vstep}\\
&\beta\text{-minimization: }0=\left\langle H^{(k)},X_{j} \right\rangle+\mu \left\langle Y-\sum_{j=1}^{p}X_{j}\beta_{j}^{(k+1)}-L^{(k+1)}-V^{(k+1)}, X_{j}\right\rangle ,\forall j=1,...,p\label{bstep}\\
& H\text{-minimization: } 	H^{(k+1)}=H^{(k)}-\mu\left(V^{(k+1)}+\sum_{j=1}^{p}X_{j}\beta_{j}^{(k+1)}+L^{(k+1)}-Y\right)\label{Wstep}
\end{align}
where $\nabla$ denotes the subgradient operator. It can be verified that the three first order conditions \eqref{Lstep} to \eqref{bstep}  have explicit solutions:

For equation \eqref{Lstep}, let $R^{(k)}\text{diag}(\{\sigma^{(k)}_{j}\}_{j})S^{(k)'}$ be a singular value decomposition of the matrix $(Y-V^{(k)}-\sum_{j=1}^{p}X_{j}\beta_{j}^{(k)}+H^{(k)}/\mu)$. According to \cite{yuan2013sparse}, the solution to equation \eqref{Lstep} is
\begin{equation}\label{Lupdate}
L^{(k+1)}=R^{(k)}\text{diag}\left(\max\left\lbrace\sigma^{(k)}_{j}-\frac{1}{\mu},0\right\rbrace\right)S^{(k)'}
\end{equation}

For equation \eqref{Vstep}, let $\Gamma^{(k+1)}_{V}=H^{(k)}/\mu-\sum_{j=1}^{p}X_{j}\beta_{j}^{(k)}-L^{(k+1)}+Y$. For every $i=1,...,N$ and $t=1,...,T$, $\left(\nabla\bm{\rho}_{u}\left(V^{(k+1)}\right)\right)_{it}=u\mathbbm{1}(V_{it}^{(k+1)}>0)+(u-1)\mathbbm{1}(V_{it}^{(k+1)}<0)$. It can be verified that the following is a solution:
\begin{equation}\label{Vupdate}
V^{(k+1)}_{it}=\begin{cases}
\max\left\lbrace\Gamma^{(k+1)}_{V,it}-\frac{u}{\mu\lambda NT},0\right\rbrace,& \text{if }\Gamma^{(k+1)}_{V,it}\geq 0\\
-\max\left\lbrace -\Gamma^{(k+1)}_{V,it}-\frac{1-u}{\mu\lambda NT},0\right\rbrace,& \text{if }\Gamma^{(k+1)}_{V,it}<0
\end{cases}
\end{equation}

For equation \eqref{bstep}, it is the first order condition of a least square problem. Let $\Gamma^{(k+1)}_{\beta}=Y-L^{(k+1)}-V^{(k+1)}+H^{(k)}/\mu$. Define the $NT\times p$ matrix $\bm{X}=(\text{vec}(X_{1}),...,\text{vec}(X_{p}))$. Then,
\begin{equation}\label{bupdate}
\beta^{(k+1)}=\left(\bm{X}'\bm{X}\right)^{-1}\left(\bm{X}'\text{vec}\left(\Gamma^{(k+1)}_{\beta}\right)\right)
\end{equation}

Finally, following \cite{yuan2013sparse}, we set $\mu=0.25NT/||Y||_{1}$. The termination criterion\footnote{We also experimented with different termination criteria for instance by including $||V^{(k+1)}-V^{(k)}||_{F}^{2}/NT$ and/or $||H^{(k+1)}-H^{(k)}||_{F}^{2}/NT$ (as in \cite{lin2010augmented} and \cite{candes2011robust}). The results (including computation time) are almost the same, so we adopt the current criterion so that we can compare the penalized estimator with the iterative estimator \eqref{estimatorIt} under the same termination criterion as the latter does not involve $V$ and $H$.} is set as $||\beta^{(k+1)}-\beta^{(k)}||_{F}^{2}/p+||L^{(k+1)}-L^{(k)}||_{F}^{2}/NT\leq 10^{-6}$. The following algorithm summarizes these steps.
\begin{algorithm}
\SetAlgoLined
\KwInput{$\beta^{0}=\bm{0},V^{0}=H^{0}=\bm{0},\mu=0.25NT/||Y||_{1}, \lambda=\log(NT)\sqrt{N\lor T}/(3.6NT)$.}
\While{not converged}{
compute $L^{(k+1)}$ as \eqref{Lupdate}\;
compute $V^{(k+1)}$ as \eqref{Vupdate}\;
compute $\beta^{(k+1)}$ as \eqref{bupdate}\;
compute $H^{(k+1)}$ as \eqref{Wstep}\;
}
\KwOutput{$\beta,L$.}
\caption{Nuclear Norm Penalized Quantile Regression by ALM}
\end{algorithm}

\section{More on the Lower Bound in \ref{step1}}\label{app.lowerbound}
In this appendix, we first introduce a second set of assumptions under which a similar quadratic lower bound as in Lemma \ref{lowerbound} can be obtained for \ref{step1}. Under these assumptions, we can drop the constraint in the minimization problem \eqref{3.eq3} that defines our estimator. We then compare the assumptions in this paper with those in \cite{ando2019quantile}, \cite{belloni2019high} and \cite{chen2019quantile}. 

\subsection{Dropping the Constraint in Equation \eqref{3.eq3} }\label{app.second}
In this section, we maintain Assumption \ref{3.ass2} on the conditional density and add a new assumption so that the requirement $||\hat{L}(u)||_{\infty}\leq \alpha_{NT}$ can be dropped while a similar lower bound as in Lemma \ref{lowerbound} can still be obtained. To illustrate the intuition, let us consider the case without covariates. 

Recall in Section \ref{secStrategy}, our goal is to lower bound the following quantity by $||\Delta_{L}||_{F}^{2}$ multiplied by some constants for all $\Delta_{L}\in\mathcal{D}\coloneqq \{\Delta_{L}\in\mathbb{R}^{N\times T}:||\Delta_{L}||_{\infty}\leq 2\alpha_{NT}\}$:
\begin{equation}\label{eq.display}
\sum_{i,t}\int_{0}^{\Delta_{L,it}}\left(F_{V_{it}(u)|W}(s)-F_{V_{it}(u)|W}(0)\right)ds
\end{equation}
We only focus on $\mathcal{D}$ because we can show that the estimation error $\hat{\Delta}_{L}(u)$ lies in $\mathcal{D}$ uniformly in $u\in\mathcal{U}$ w.p.a.1 under the constrained parameter space $\mathcal{L}$ and by $||L_{0}(u)||_{\infty}\leq \alpha_{NT}$ w.p.a.1 by Assumption \ref{3.ass3}. 

Now that we are to replace the constrained parameter space $\mathcal{L}$ with $\mathbb{R}^{N\times T}$, $\hat{\Delta}_{L}(u)$ may no longer lie in $\mathcal{D}$ w.p.a.1. We need to constrain $\hat{\Delta}_{L}$ in a  different set. For any $\Delta_{L}\in\mathbb{R}^{N\times T}$, let $\mathcal{P}_{\Omega}\Delta_{L}$ be an $N\times T$ matrix whose $(i,t)$-th element is $\mathbbm{1}(|\Delta_{L,it}|\leq 2\alpha_{NT})\cdot\Delta_{L,it}$. Let $\mathcal{P}_{\Omega^{\perp}}\Delta_{L}\coloneqq \Delta-\mathcal{P}_{\Omega}\Delta_{L}$. By construction, $||\Delta_{L}||_{F}^{2}=||\mathcal{P}_{\Omega}\Delta_{L}||_{F}^{2}+||\mathcal{P}_{\Omega^{\perp}}\Delta_{L}||_{F}^{2}$. Let $||\mathcal{P}_{\Omega}\Delta_{L}||_{F}^{2}=C_{sm}||\Delta_{L}||_{F}^{2}$ where $C_{sm}$ is in $[0,1]$ and may depend on $N$ and $T$. Note that $\mathcal{D}$ is equivalent to the set of matrices whose $C_{sm}$ equals 1. Quantity \eqref{eq.display} is equal to
\begin{align}
&\sum_{i,t}\int_{0}^{\Delta_{L,it}}\left(F_{V_{it}(u)|W}(s)-F_{V_{it}(u)|W}(0)\right)ds\notag\\
=&\sum_{\{i,t:|\Delta_{L,it}|\leq 2\alpha_{NT}\}}\int_{0}^{\Delta_{L,it}}\left(F_{V_{it}(u)|W}(s)-F_{V_{it}(u)|W}(0)\right)ds\notag\\
&+\sum_{\{i,t:|\Delta_{L,it}|> 2\alpha_{NT}\}}\int_{0}^{\Delta_{L,it}}\left(F_{V_{it}(u)|W}(s)-F_{V_{it}(u)|W}(0)\right)ds\notag
\end{align}
with probability one. For the first sum on the right side, we can lower bound it by $C_{min}||\mathcal{P}_{\Omega}\Delta_{L}||_{F}^{2}/\alpha_{NT}^{2}$ using the same argument as in Section \ref{secStrategy} (the proof is similar to that of Lemma \ref{lowerbound} and is thus omitted). For the second term, now that $\Delta_{L,it}$ can be unbounded, the conditional density $f_{V_{it}(u)|W}$ may be arbitrarily close to zero. Hence, it can only be lower bounded by $0$. 
Yet as long as $||\mathcal{P}_{\Omega}\Delta_{L}||_{F}^{2}$ is of a nonnegligible proportion of $||\Delta_{L}||_{F}^{2}$, we can still lower bound \eqref{eq.display} be $||\Delta||_{F}^{2}$ multiplied by some constants.

Formally, assume there exists a universal constant $C_{sm}>0$ such that for all $u\in\mathcal{U}$, we have $\hat{\Delta}_{L}(u)\in\mathcal{D}^{(2)}$ where $\mathcal{D}^{(2)}$ is the following cone:
\begin{equation}\label{eq.secondkey}
\mathcal{D}^{(2)}\coloneqq\left\lbrace\Delta_{L}\in\mathbb{R}^{N\times T}:||\mathcal{P}_{\Omega}\Delta_{L}||_{F}^{2}\geq C_{sm} ||\Delta_{L}||_{F}^{2}\right\rbrace,
\end{equation}
We can then restrict our analysis within $\mathcal{D}^{(2)}$ and lower bound \eqref{eq.display} for all $\Delta_{L}\in\mathcal{D}^{(2)}$ by
\[ C_{min}||\mathcal{P}_{\Omega}\Delta_{L}||_{F}^{2}/\alpha_{NT}^{2}+0\geq C_{sm}C_{min}||\Delta_{L}||_{F}^{2}/\alpha_{NT}^{2}.\] Then we can obtain an error bound on the estimator which has the same order as that in Theorem \ref{3.thm1} since the quadratic lower bound has the same order.

Both $\mathcal{D}$ (adopted in the main text) and $\mathcal{D}^{(2)}$ here limit the spikiness of the matrices $\Delta_{L}$s in them. Set $\mathcal{D}$ restricts the \textit{magnitude} of the large elements in $\Delta_{L}\in\mathcal{D}$. In contrast, by definition \eqref{eq.secondkey}, set $\mathcal{D}^{(2)}$ restricts the \textit{number} of large elements in $\Delta_{L}\in\mathcal{D}^{(2)}$. For instance, on the sphere $||\Delta_{L}||_{F}^{2}=NT\gamma^{2}$ where $\gamma$ is the same as in Section \ref{sec.preview}, definition \eqref{eq.secondkey} allows elements in $\Delta_{L}$ to be as large as $\sqrt{NT}\gamma$, greater than $2\alpha_{NT}$ if $\alpha_{NT}=O(\log(NT))$, but the number of such large elements is at most $O(1)$ .

When there are covariates, complications arise due to restricted strong convexity and $C_{sm}$ not only needs to be bounded away from zero but also needs to be sufficiently large. A sufficient condition is that $C_{sm}\to 1$ as $N$ and $T$ grow to infinity. Specifically, we have the following theorem.
\begin{appxthm}\label{appx.thmmain}  Let $\lambda$ be the same as in Lemma \ref{3.lem1}.
Under Assumptions \ref{3.ass1} to \ref{3.ass4} and the condition in Lemma \ref{3.lem3}, if w.p.a.1, $\hat{\Delta}_{L}(u)\in\mathcal{D}^{(2)}$ defined in equation \eqref{eq.secondkey} with $C_{sm}\to 1$ as $N$ and $T$ grow to infinity, then for $C_{error,2}=2.5C_{error}$ where $C_{error}$ is the constant in Theorem \ref{3.thm1}, the following estimator 
\begin{equation*}
(\hat{\beta}(u),\hat{L}(u))=\arg\min_{\beta\in\mathbb{R}^{p},L\in\mathbb{R}^{N\times T}}\frac{1}{NT}\bm{\rho}_{u}(Y-\sum_{j=1}^{p}X_{j}\beta_{j}-L)+\lambda||L||_{*}
\end{equation*}
satisfies 
\small\begin{align*}
\sup_{u\in\mathcal{U}}\ \ ||\hat{\beta}(u)-\beta_{0}(u)||_{F}^{2}+\frac{1}{NT}||\hat{L}(u)-L_{0}(u)||_{F}^{2}
\leq& C_{error,2}^{2} \alpha_{NT}^{4}\log(NT)\left(\frac{p\log(pNT)}{NT}\lor \frac{\bar{r}}{N\land T}\right)\ w.p.a.1.
\end{align*}\normalsize
\end{appxthm}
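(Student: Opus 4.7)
The plan is to follow the same two-step architecture used to prove Theorem \ref{3.thm1}, making targeted adjustments wherever the constrained parameter space $\mathcal{L}$ was used. First, I note that Lemma \ref{3.lem1}'s cone characterization $(\hat{\Delta}_{\beta}(u),\hat{\Delta}_{L}(u)) \in \mathcal{R}_u$ continues to hold w.p.a.1: its proof only invokes optimality of $(\hat{\beta}(u),\hat{L}(u))$ against the feasible comparison point $(\beta_0(u),L_0(u))$ and the subgradient structure of the nuclear norm, and since $L_0(u)\in\mathbb{R}^{N\times T}$ trivially, removing the compact constraint does not break it. By hypothesis, $\hat{\Delta}_L(u)\in\mathcal{D}^{(2)}$ uniformly in $u$ w.p.a.1, so I restrict subsequent analysis to $\mathcal{R}_u\cap(\mathbb{R}^p\times\mathcal{D}^{(2)})$. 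To justify the usual convexity-based reduction to the sphere $\|\Delta_\beta\|_F^2+\|\Delta_L\|_F^2/(NT)=\gamma^2$, I check that $\mathcal{D}^{(2)}$ is preserved under scaling down by $t\in(0,1]$: the set $\Omega$ only grows as $t\Delta_L$ has entries $\le 2\alpha_{NT}$ more often than $\Delta_L$, so $\|\mathcal{P}_\Omega(t\Delta_L)\|_F^2/\|t\Delta_L\|_F^2 \geq \|\mathcal{P}_\Omega\Delta_L\|_F^2/\|\Delta_L\|_F^2\geq C_{sm}$.

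The core adaptation is in \ref{step1}. I split each integral $\int_0^{X_{it}'\Delta_\beta+\Delta_{L,it}}(F_{V_{it}(u)|W}(s)-F_{V_{it}(u)|W}(0))\,ds$ according to whether $(i,t)\in\Omega=\{(i,t):|\Delta_{L,it}|\leq 2\alpha_{NT}\}$. On $\Omega$, the argument of Lemma \ref{lowerbound} applies essentially verbatim (noting Assumption \ref{3.ass3} bounds $|X_{it}'\Delta_\beta|$ by a function of $\alpha_{NT}$ on the sphere $\|\Delta_\beta\|_F\le\gamma$), giving a pointwise lower bound of $(C_{min}/\alpha_{NT}^2)(X_{it}'\Delta_\beta+\Delta_{L,it})^2$; summing yields $(C_{min}/\alpha_{NT}^2)\|\mathcal{P}_\Omega(\sum_j X_j\Delta_{\beta,j}+\Delta_L)\|_F^2$. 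On $\Omega^c$, the integrand and upper limit share the same sign, so each integral is non-negative and I simply lower bound by zero.

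The key step is then to pass from $\|\mathcal{P}_\Omega A\|_F^2$ to $\|A\|_F^2$ (with $A\coloneqq\sum_j X_j\Delta_{\beta,j}+\Delta_L$). Writing $\|\mathcal{P}_\Omega A\|_F^2=\|A\|_F^2-\|\mathcal{P}_{\Omega^c}A\|_F^2$ and using the triangle inequality, $\|\mathcal{P}_{\Omega^c}A\|_F^2\leq 2\|\mathcal{P}_{\Omega^c}\Delta_L\|_F^2+2\|\mathcal{P}_{\Omega^c}(\sum_j X_j\Delta_{\beta,j})\|_F^2$. The first term is at most $2(1-C_{sm})\|\Delta_L\|_F^2$, which is $o(\|\Delta_L\|_F^2)$ since $C_{sm}\to 1$. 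For the second term, the cardinality bound $|\Omega^c|\leq(1-C_{sm})\|\Delta_L\|_F^2/(4\alpha_{NT}^2)$ (implied by every entry on $\Omega^c$ exceeding $2\alpha_{NT}$ in magnitude) combined with the uniform bound $|X_{it}'\Delta_\beta|^2=o_p(\alpha_{NT}^2)$ from Assumption \ref{3.ass3} makes it also $o(\|\Delta_L\|_F^2)$. Consequently $\|\mathcal{P}_\Omega A\|_F^2\geq(1-o(1))\|A\|_F^2$, and Assumption \ref{ass.id} ii) (restricted strong convexity) delivers a positive quadratic lower bound of the same order as in Lemma \ref{lowerbound}, at the cost of a multiplicative factor that tends to one (absorbed in $C_{error,2}=2.5C_{error}$).

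\ref{step2} requires no modification: Lemma \ref{3.lem3}'s bound on the absolute empirical process and the triangle-inequality bound on the nuclear-norm penalty difference use only the cone inclusion from Lemma \ref{3.lem1} and Frobenius/nuclear norms of the errors, never the constraint $\|\hat L(u)\|_\infty\leq\alpha_{NT}$. Balancing the Step 1 lower bound against the Step 2 upper bound uniformly in $u\in\mathcal{U}$, exactly as in the proof of Theorem \ref{3.thm1}, rules out the sphere event and hence yields the claimed bound. The main obstacle is the cross term $\|\mathcal{P}_{\Omega^c}(\sum_j X_j\Delta_{\beta,j})\|_F^2$: it is not automatically small, and taming it is precisely where the combination of the cardinality bound on $\Omega^c$ afforded by $C_{sm}\to 1$ and the tail control on the covariates in Assumption \ref{3.ass3} is used; this is also what forces the somewhat stronger hypothesis $C_{sm}\to 1$ rather than merely $C_{sm}$ bounded away from zero.
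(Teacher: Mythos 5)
Your proposal is correct, and it follows the paper's architecture: only \ref{step1} needs modification, the cone Lemma \ref{3.lem1}, the process bound of Lemma \ref{3.lem3}, the penalty bound, and the final balancing argument of Theorem \ref{3.thm1} are reused unchanged, and the scaling-down check you make for $\mathcal{D}^{(2)}$ is exactly what legitimizes the reduction to the sphere (the paper simply calls $\mathcal{D}^{(2)}$ a cone; your verification is the more careful statement of the needed property). The one place where you genuinely diverge is the passage from the projected quadratic term to the full one. The paper lower-bounds the restricted sum of integrals by $\frac{C_{min}}{\alpha_{NT}^{2}}\|\sum_{j}X_{j}\Delta_{\beta,j}+\mathcal{P}_{\Omega}\Delta_{L}\|_{F}^{2}$ and then uses the elementary inequality $\|a+b\|_{F}^{2}\geq 0.5\|a\|_{F}^{2}-\|b\|_{F}^{2}$ with $b=-\mathcal{P}_{\Omega^{\perp}}\Delta_{L}$, followed by restricted strong convexity and $C_{sm}\to 1$, which leaves the coefficient $0.5C_{RSC}+C_{sm}-1\geq 0.4C_{RSC}$; that hard factor $0.4$ is precisely what produces $C_{error,2}=2.5\,C_{error}$. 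You instead start from the sharper quantity $\|\mathcal{P}_{\Omega}(\sum_{j}X_{j}\Delta_{\beta,j}+\Delta_{L})\|_{F}^{2}$, use the exact orthogonal split into $\Omega$ and $\Omega^{c}$, and kill the cross term $\|\mathcal{P}_{\Omega^{c}}(\sum_{j}X_{j}\Delta_{\beta,j})\|_{F}^{2}$ via the cardinality bound $|\Omega^{c}|\leq(1-C_{sm})\|\Delta_{L}\|_{F}^{2}/(4\alpha_{NT}^{2})$ together with the $o_{p}(\alpha_{NT})$ bound on $|X_{it}'\Delta_{\beta}|$ from Assumption \ref{3.ass3}. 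This buys a $1-o(1)$ multiplicative loss rather than $0.4$, so the stated constant $2.5\,C_{error}$ holds with room to spare; it also makes rigorous a point the paper's displayed chain glosses over, namely that its intermediate bound keeps the full covariate term on $\Omega^{c}$ even though the dropped integrals there only supply nonnegativity. The cost of your route is the extra (but available) use of Assumption \ref{3.ass3} at this step, whereas the paper's $0.5$-trick needs no control of the covariates off $\Omega$.
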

\normalsize
\begin{proof}
See Appendix \ref{app.techsecond}.
\end{proof}
\begin{appxrem} The difference between the estimator defined in Theorem \ref{appx.thmmain} and the one defined by equation \eqref{3.eq3} lies in the parameter space of $L$.
\end{appxrem}
\subsection{Comparison of Different Approaches to \ref{step1}}\label{app.technical}
In this section, we compare the assumptions needed to obtain a quadratic lower bound for the purpose of \ref{step1} in our paper with those in \cite{ando2019quantile}, \cite{belloni2019high} and \cite{chen2019quantile}.  To highlight the differences, we still consider the case where there are no covariates. Also, since these mentioned papers all focus on consistency pointwise in $u\in\mathcal{U}$, in the following discussion we also drop the requirements on uniformity in $u$ in our assumptions.

First, let us summarize the assumptions needed in our two approaches to a quadratic lower bound. Recall that without the covariates, the set $\mathcal{D}$ in the main text is defined as $\{\Delta_{L}\in\mathbb{R}^{N\times T}:||\Delta_{L}||_{\infty}\leq 2\alpha_{NT}\}$.
\begin{itemize}
\item Approach 1 (adopted in the main text).
\begin{itemize}
 \item On the conditional density of $V_{it}(u)$: Assumption \ref{3.ass2}.
\item On the \textit{magnitude} of large elements in $\hat{\Delta}_{L}(u)$: Assumption \ref{3.ass3} and the constraint in the estimator's definition \eqref{3.eq3}. They imply $\hat{\Delta}_{L}(u)\in\mathcal{D}$ for all $u\in\mathcal{U}$ w.p.a.1. 
\end{itemize}
\item Approach 2 (introduced in Appendix \ref{app.second}).
\begin{itemize}
 \item On the conditional density of $V_{it}(u)$: Assumption \ref{3.ass2}.
\item On the \textit{number} of large elements in $\hat{\Delta}_{L}(u)$: Assumption \ref{3.ass3} and $\hat{\Delta}_{L}(u)\in\mathcal{D}^{(2)}$ for all $u\in\mathcal{U}$ w.p.a.1 with $C_{sm}\to 1$. 
\end{itemize}
\end{itemize}

\cite{ando2019quantile} and \cite{chen2019quantile} impose stronger assumptions on the conditional density $f_{V_{it}(u)|W}$. They both assume that the conditional density function is continuous and for \textit{any} compact set $S$, there exists an $S$-dependent constant $\underline{f}_{S}$ such that the density $f_{V_{it}(u)|W}(s)\geq \underline{f}_{S}>0$ for all $s\in S$ and all $i$ and $t$. Note that this assumption implies our Assumption \ref{3.ass2} by choosing $S=[-\delta,\delta]$ for any $\delta>0$. This stronger assumption can help obtain a quadratic lower bound for our purpose by a simpler argument if $||L_{0}(u)||\leq \alpha_{NT}$ and $||L||_{\infty}\leq \alpha_{NT}$ are still imposed\footnote{Indeed, these two papers assume elements in $L_{0}(u)$ lie in a fixed compact space, i.e. $\alpha_{NT}$ is fixed, not $(N,T)$-dependent.}. To see this, by $|\Delta_{L,it}|\leq 2\alpha$, their assumption implies that there exists a constant $\underline{f}_{\alpha_{NT}}>0$ such that \eqref{eq.display} is lower bounded by $\underline{f}_{\alpha_{NT}}\sum_{i,t}\Delta_{L,it}^{2}/2$ by directly applying first-order Taylor expansion. Similar to our Approach 1, the lower bound also depends on $\alpha_{NT}$ via $\underline{f}_{\alpha_{NT}}$.


Now let us turn to \cite{belloni2019high}. Their approach is more similar to our Approach 2 because they also restrict the \textit{number} of large elements in $\hat{\Delta}_{L}(u)$. Again, since they only focus on pointwise consistency, we compare our related assumptions with theirs by dropping the required uniformity in $u$. Like our approaches, their assumptions to achieve \ref{step1} also consist of two parts:

First, on the conditional density of $V_{it}(u)$, their Assumption 1 (ii) requires that for all $i$ and $t$, the conditional density $f_{V_{it}(v)|W}(v)$ is bounded away from $0$ at $v=0$ by $\underline{f}$ and bounded from above uniformly in $v$ and in the realization of $W$. Meanwhile, the derivative of the conditional density function $\partial f_{V_{it}(u)|W}(v)/\partial v$ is assumed to be continuous and bounded in absolute value by $\bar{f}'$ uniformly in $v$, $i$, $t$ and in the realization of $W$ as well. These two requirements are stronger than our Assumption \ref{3.ass2}; noting that uniform boundedness of both a set of functions and of their derivatives implies equicontinuity, our Assumption \ref{3.ass2} holds under their Assumption 1(ii).

Second, on the \textit{number} of large elements in $\hat{\Delta}_{L}(u)$, their Assumption 3 and equation (25) essentially require that $\hat{\Delta}_{L}(u)\in \mathcal{D}^{(3)}$ where 
\begin{equation}\label{eq.belloni}
\mathcal{D}^{(3)}\in\left\lbrace\Delta_{L}\in\mathbb{R}^{N\times T}:\frac{\underline{f}}{2}||\Delta_{L}||_{F}^{2}-\frac{\bar{f}'}{3}\sum_{i,t}|\Delta_{L,it}|^{3}\geq 0\right\rbrace
\end{equation}
and the constants $\underline{f}$ and $\bar{f}'$ are introduced in the previous paragraph. By the inequality in \eqref{eq.belloni} and by their assumption on the conditional density, they lower bound  \eqref{eq.display} for $\Delta_{L}\in\mathcal{D}^{(3)}$ by second-order Taylor expansion:
\begin{align*}
\sum_{i,t}\int_{0}^{\Delta_{L,it}}\left(F_{V_{it}(u)|W}(s)-F_{V_{it}(u)|W}(0)\right)ds
\geq \frac{\underline{f}}{4}||\Delta||_{F}^{2}+\left (\frac{\underline{f}}{4}||\Delta||_{F}^{2}-\frac{\bar{f}'}{6}\sum_{i,t}|\Delta_{L,it}|^{3}\right )\geq \frac{\underline{f}}{4}||\Delta||_{F}^{2}
\end{align*}

The set $\mathcal{D}^{(3)}$ serves a similar purpose as $\mathcal{D}^{(2)}$ in our Approach 2. Both restrict the number of large elements in the matrices in these sets. Yet the condition on $\mathcal{D}^{(3)}$ is more restrictive in the sense that large elements allowed in $\mathcal{D}^{(3)}$ are fewer than $\mathcal{D}^{(2)}$. To see this, suppose $||\Delta_{L}||_{F}$ has order $\nu_{NT}$ and let $\alpha_{NT}\to\infty $ and $\alpha_{NT}=o(\nu_{NT})$. For large element $\Delta_{L,it}$ of order $\delta_{NT}\geq 2\alpha_{NT}$, in any matrix in $\mathcal{D}^{(2)}$, there can be as many as $o(\nu_{NT}^{2}/\delta_{NT}^{2})$ of such elements while $C_{sm}\to 1$ still holds. But in any matrix in $\mathcal{D}^{(3)}$, there can be only $O(\nu_{NT}^{2}/\delta_{NT}^{3})$ of them.

Comparing $\mathcal{D}$, $\mathcal{D}^{(2)}$ and $\mathcal{D}^{(3)}$, note that $\hat{\Delta}_{L}(u)\in\mathcal{D}$ for all $u\in\mathcal{U}$ w.p.a.1 can be guaranteed under primitive conditions: Assumption \ref{3.ass3} and the constraint in the definition of the estimator \eqref{3.eq3}. However, $\hat{\Delta}_{L}(u)\in\mathcal{D}^{(2)}$ in our Approach 2 and $\hat{\Delta}_{L}(u)\in\mathcal{D}^{(3)}$ in \cite{belloni2019high} are high level conditions. After the estimator is obtained, one can verify whether any of these conditions is met as they do not involve the true parameter $L_{0}(u)$.
  
To sum up, to obtain a quadratic lower bound in \ref{step1}, we need to i) make assumptions on the conditional density $f_{V_{it}(u)|W}$ and ii) to either restrict the \textit{magnitude} of large elements in $\hat{\Delta}_{L}(u)$  (Approach 1) or the \textit{number} of them (Approach 2). Our assumption on the conditional density seems to be the weakest in the discussed literature. For large elements in $\hat{\Delta}_{L}(u)$, our Approach 1 is under more primitive conditions while the restriction $\mathcal{D}^{(2)}$ in our Approach 2 is milder than the restriction $\mathcal{D}^{(3)}$ in \cite{belloni2019high}. On the other hand, in both of our two approaches, we need Assumption \ref{3.ass3}, while the approach in \cite{belloni2019high} is free of it. Finally, since they focus on high-dimensional regressors, some of the discussed relaxations in our approaches may not apply there. We view our weaker assumptions as the advantages gained by exploiting low dimensionality of the regressors, and all three approaches are complementary.

\section{Proofs}\label{app.proof}
\subsection{Proofs of the Lemmas in Section \ref{sec.preview} }\label{app.sec3}
\subsubsection{Proof of Lemma \ref{3.lem1}}
Recall that $V(u)\coloneqq Y-q_{Y|W}(u)$. Let $\nabla\bm{\rho}_{u}(V(u))$ be an $N\times T$ subgradient matrix of $\bm{\rho}_{u}(\cdot)$ evaluated at $V(u)$. With probability one, the $(i,t)$-th element of $\nabla\bm{\rho}_{u}(V(u))$ is
\begin{equation*}
\left(\nabla\bm{\rho}_{u}\left(V(u)\right)\right)_{it}=u\mathbbm{1}(V_{it}(u)>0)+(u-1)\mathbbm{1}(V_{it}(u)<0).
\end{equation*}
These elements are bounded and independent with mean $0$ conditional on $W$ by Assumption \ref{3.ass3} and by the definition of $V(u)$ \footnote{Conditional mean zero is obtained by noting that $\Pr(V_{it}(u)<0|W)=u$ almost surely by definition.}. We introduce the following lemma for $\nabla\bm{\rho}_{u}(V(u))$. 
The proof is in Appendix \ref{app.techlem}.
\begin{appxlem}\label{3.lemA1}
Under Assumption \ref{3.ass1}, there exists a universal constant $C_{op}>2$ such that the following inequalities hold w.p.a.1:
\begin{align}
\sup_{u\in\mathcal{U}}\max_{1\leq j\leq p}\big|\big\langle \nabla\bm{\rho}_{u}(V(u)),X_{j}\big\rangle\big|&\leq \sqrt{2C_{X}NT\log(pNT)},\label{3.eqA1}\\
\sup_{u\in\mathcal{U}}|| \nabla\bm{\rho}_{u}(V(u))||&\leq C_{op}\sqrt{N\lor T}\label{3.eqA2}
\end{align}
where $C_{X}$ is defined in Assumption \ref{3.ass1}.
\end{appxlem}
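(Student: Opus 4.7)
The plan is to prove each of the two inequalities by first obtaining a pointwise-in-$u$ tail bound conditional on $W$ via a standard concentration inequality, and then upgrading to a supremum bound over $u\in\mathcal{U}$ by a grid-and-oscillation argument that exploits the smoothness stated in Assumption \ref{3.ass4}. Throughout, the starting observation is that conditional on $W$, the entries $(\nabla\bm{\rho}_u(V(u)))_{it}=u-\mathbbm{1}(V_{it}(u)\leq 0)$ are independent (by Assumption \ref{3.ass1}(i)), mean-zero (since $\Pr(V_{it}(u)\leq 0\mid W)=u$ a.s.), and bounded in $[-1,1]$.

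For inequality (\ref{3.eqA1}), fix $u$ and $j$. The inner product $\langle\nabla\bm{\rho}_u(V(u)),X_j\rangle=\sum_{i,t}(\nabla\bm{\rho}_u(V(u)))_{it}X_{j,it}$ is a sum of independent mean-zero summands with ranges $[-|X_{j,it}|,|X_{j,it}|]$ conditional on $W$. Hoeffding's inequality then gives a sub-Gaussian tail with variance proxy $\sum_{i,t}X_{j,it}^2\le C_XNT$ on the event of Assumption \ref{3.ass1}(ii). Choosing the deviation level $\sqrt{2C_XNT\log(pNT)}$ makes each tail $O(1/(pNT))$; a union bound over $j\le p$ yields the pointwise-in-$u$ statement with probability $1-O(1/(NT))$.

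For inequality (\ref{3.eqA2}), fix $u$. Conditional on $W$, $\nabla\bm{\rho}_u(V(u))$ is a random matrix with independent, bounded, mean-zero entries of variance at most one. Applying a standard operator-norm bound for such matrices (e.g.\ a Bernstein-type tail for sums of independent rank-one terms, or a Bandeira--van Handel type inequality) delivers $\|\nabla\bm{\rho}_u(V(u))\|\le C_{op}\sqrt{N\lor T}$ with probability exponentially close to one in $N\lor T$, for a universal constant $C_{op}>2$.

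To extend both pointwise bounds to $\sup_{u\in\mathcal{U}}$, I will discretize $\mathcal{U}$ with a polynomial-in-$NT$ grid $\{u_\ell\}$ and union-bound the pointwise estimate over $\ell$, $j$, leaving a residual oscillation $\sup_{|u-u'|\le \eta}\bigl\|\nabla\bm{\rho}_{u'}(V(u'))-\nabla\bm{\rho}_u(V(u))\bigr\|_{\star}$ where $\|\cdot\|_\star$ is the inner-product-with-$X_j$ in (\ref{3.eqA1}) and the spectral norm in (\ref{3.eqA2}). This difference decomposes into (a) a smooth drift part of order $|u'-u|$ coming from the $u$-shift in $(\nabla\bm{\rho}_u)_{it}=u-\mathbbm{1}(V_{it}(u)\le 0)$, easily controlled by Cauchy--Schwarz using $\|X_j\|_F^2\le C_XNT$ or the trivial spectral bound, and (b) a jump part from the entries whose sign flips, i.e.\ $(i,t)$ with $V_{it}(u)\in[-\Delta_{it},\Delta_{it}]$ where $\Delta_{it}=O(|u'-u|)$ by the Lipschitz smoothness of $\beta_0(\cdot)$ and $L_0(\cdot)$ in Assumption \ref{3.ass4} (together with Assumption \ref{3.ass1}(ii) to control $|X_{it}'(\beta_0(u')-\beta_0(u))|$). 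Choosing $\eta$ comparable to $1/(NT)^{c}$ for a sufficiently large $c$ makes the total deterministic oscillation dominated by the pointwise bounds, so the union bound over $O((NT)^c)$ grid points absorbs the loss.

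The main technical obstacle is the oscillation control in (b): because the subgradient is a jump process in $u$, one cannot simply differentiate. I will bound the number of sign flips across a short interval $[u,u']$ by showing via Assumption \ref{3.ass2} that each entry is unlikely to be within distance $\Delta_{it}$ of zero, so that with probability approaching one only a manageable number of indicators flip; combined with the trivial $|X_{j,it}|\le \sum_j\|X_j\|_\infty$ bound this keeps the oscillation term negligible relative to $\sqrt{NT\log(pNT)}$ and $\sqrt{N\lor T}$ respectively, completing the two-stage argument and establishing Lemma \ref{3.lemA1}.
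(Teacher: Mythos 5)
Your pointwise-in-$u$ steps (Hoeffding for the inner products on the event $\max_j\|X_j\|_F^2\le C_XNT$, a bounded-entry operator-norm bound for the spectral norm) and the grid-plus-oscillation outline coincide with the paper's strategy. The genuine gap is in how you control the oscillation term, which is exactly the part the paper treats as the new technical content. First, your jump-window claim $\Delta_{it}=O(|u'-u|)$ does not follow from the assumptions you cite: Assumption \ref{3.ass4} bounds $\|L_0(u')-L_0(u)\|_F/\sqrt{NT}$, not $\|L_0(u')-L_0(u)\|_\infty$, so individual entries may move by as much as $\sqrt{NT}\,\zeta_2|u'-u|$; likewise Assumption \ref{3.ass1}(ii) bounds $\|X_j\|_F$, not $|X_{j,it}|$ or $\|X_j\|_\infty$, so neither the window nor the per-flip contribution $|X_{j,it}|$ is controlled the way you need. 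Second, Assumption \ref{3.ass2} is a \emph{lower} bound on the conditional density, so it cannot make a sign flip ``unlikely''; an upper bound on the local mass would be required, and none is assumed. Third, the lemma is stated under Assumption \ref{3.ass1} alone, whereas your argument leans on Assumptions \ref{3.ass2} and \ref{3.ass4} in an essential way; and with a grid of size $(NT)^c$ the union bound at the threshold $\sqrt{2C_XNT\log(pNT)}$ fails (the per-point tail $1/(pNT)$ times $p(NT)^c$ points diverges for $c>1$), so you would have to inflate the constant, contradicting the displayed inequality \eqref{3.eqA1}.

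The fix, which is what the paper does, needs neither density nor smoothness assumptions. Since $V_{it}(u)=Y_{it}-q_{Y_{it}|W}(u)$ is strictly decreasing in $u$, the map $u\mapsto\mathbbm{1}(V_{it}(u)<0)$ is monotone, so over an interval $[u_k-\varepsilon,u_k+\varepsilon]$ the jump matrix $\Xi_2$ is sandwiched entrywise (with a common sign) between $0$ and the endpoint matrices $\Xi_2^{(1)},\Xi_2^{(2)}$; this collapses the supremum over the continuum to a maximum over grid points, so only a coarse grid ($\varepsilon=1/\sqrt{NT}$ or $1/\sqrt{N\lor T}$) is needed. Moreover $\mathbb{E}[\Xi^{(1)}_{2,it}\mid W]=\varepsilon$ exactly, because the flip event is precisely $\{q_{Y_{it}|W}(u_k-\varepsilon)\le Y_{it}<q_{Y_{it}|W}(u_k)\}$, whose conditional probability is $\varepsilon$ by definition of the conditional quantile; centering and applying Hoeffding (after aligning signs with $|X_j|$) handles \eqref{3.eqA1}, and the entrywise monotonicity of the spectral norm for same-signed matrices plus the bounded-entry operator-norm bound handles \eqref{3.eqA2}. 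Without this monotone sandwich (or an equivalent device making the flip probability exactly $|u'-u|$ and reducing the continuum to the grid), your step (b) does not go through.
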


In what follows, the derivation is under the event that inequalities \eqref{3.eqA1} and \eqref{3.eqA2} hold. Since $||L_{0}(u)||_{\infty}\leq \alpha$ for all $u\in\mathcal{U}$, $L_{0}(u)$ is a feasible solution to the minimization problem \eqref{3.eq3}. Then by the definition of $(\hat{\beta}(u),\hat{L}(u))$, the following inequality holds with probability one:\small
\begin{align}
\sup_{u\in\mathcal{U}}\left(\frac{1}{NT}\left[\bm{\rho}_{u} \left(V(u)-\sum_{j=1}^{p} X_{j} \hat{\Delta}_{\beta,j}(u)-\hat{\Delta}_{L}(u)\right)-\bm{\rho}_{u} (V(u))\right]+\lambda\left( ||\hat{L}(u)||_{*}- ||L_{0}(u)||_{*}\right)\right)\leq 0\label{eq.A3}
\end{align}\normalsize
where $\hat{\Delta}_{\beta,j}(u)\coloneqq \hat{\beta}_{j}(u)-\beta_{0,j}(u)$ and $\hat{\Delta}_{L}(u)\coloneqq \hat{L}(u)-L_{0}(u)$.

Let us first consider $ \left[\bm{\rho}_{u} \big(V(u)-\sum_{j=1}^{p} X_{j} \hat{\Delta}_{\beta,j}(u)-\hat{\Delta}_{L}(u)\big)-\bm{\rho}_{u} (V(u))\right]/NT$. With probability one,
\begin{align}
&\frac{1}{NT}\left[\bm{\rho}_{u} \big(V(u)-\sum_{j=1}^{p} X_{j} \hat{\Delta}_{\beta,j}(u)-\hat{\Delta}_{L}(u)\big)-\bm{\rho}_{u} (V(u))\right]\notag\\
\geq& -\frac{1}{NT}\big|\big\langle \nabla\bm{\rho}_{u}(V(u)),\sum_{j=1}^{p} X_{j} \hat{\Delta}_{\beta,j}(u)+\hat{\Delta}_{L}(u)\big\rangle\big|\notag\\
\geq&-\frac{1}{NT}||\hat{\Delta}_{\beta}(u)||_{1}\max_{1\leq j\leq p}\big|\big\langle \nabla\bm{\rho}_{u}(V(u)), X_{j}\big\rangle\big|-\frac{1}{NT}||\nabla\bm{\rho}_{u}(V(u))||\cdot ||\hat{\Delta}_{L}(u)||_{*}\notag\\
\geq &-\sqrt{\frac{2C_{X}\log(pNT)}{NT}}||\hat{\Delta}_{\beta}(u)||_{1}-\frac{C_{op}\sqrt{N\lor T}}{NT} ||\hat{\Delta}_{L}(u)||_{*}\notag\\
\geq & -\sqrt{\frac{2C_{X}p\log(pNT)}{NT}}||\hat{\Delta}_{\beta}(u)||_{F}-\frac{C_{op}\sqrt{N\lor T}}{NT} ||\hat{\Delta}_{L}(u)||_{*}\label{eq.A4}
\end{align}
The first inequality is by the definition of subgradient. The first term in the second inequality is elementary. The second term is from Lemma 3.2 in \cite{candes2009exact} which says for any two matrices $A$ and $B$ of the same size, $|\langle A,B\rangle|\leq ||A||\cdot||B||_{*}$. The penultimate inequality is by inequalities \eqref{3.eqA1} and \eqref{3.eqA2} in Lemma \ref{3.lemA1}.

Next, consider $\lambda\big( ||\hat{L}(u)||_{*}- ||L_{0}(u)||_{*}\big)$. Let $\mathcal{P}_{\Phi(u)^{\perp}}$ be the orthogonal projection onto the orthogonal complement of $\Phi(u)$. By construction, $\mathcal{P}_{\Phi(u)^{\perp}}L_{0}(u)=0$. Moreover, for any $N\times T$ matrix $M$, $||\mathcal{P}_{\Phi(u)}M+\mathcal{P}_{\Phi(u)^{\perp}}M||_{*}=||\mathcal{P}_{\Phi(u)}M||_{*}+||\mathcal{P}_{\Phi(u)^{\perp}}M||_{*}$ since $\mathcal{P}_{\Phi(u)}M$ and $\mathcal{P}_{\Phi(u)^{\perp}}M$ have orthogonal singular vectors to each other. Hence, by $\hat{L}(u)=L_{0}(u)+\hat{\Delta}_{L}(u)$, with probability one,
\begin{align}
||\hat{L}(u)||_{*}- ||L_{0}(u)||_{*}= & ||\mathcal{P}_{\Phi(u)}L_{0}(u)+\mathcal{P}_{\Phi(u)}\hat{\Delta}_{L}(u)||_{*}+||\mathcal{P}_{\Phi(u)^{\perp}}\hat{\Delta}_{L}(u)||_{*}- ||\mathcal{P}_{\Phi(u)}L_{0}(u)||_{*}\notag\\
\geq &||\mathcal{P}_{\Phi(u)^{\perp}}\hat{\Delta}_{L}(u)||_{*}-||\mathcal{P}_{\Phi(u)}\hat{\Delta}_{L}(u)||_{*}\label{eq.A5}
\end{align}

Combining equations \eqref{eq.A3}, \eqref{eq.A4} and \eqref{eq.A5}, we have shown that\small
\begin{align}
\sup_{u\in\mathcal{U}}\Bigg(\left(\lambda-\frac{C_{op}\sqrt{N\lor T}}{NT}\right)||\mathcal{P}_{\Phi(u)^{\perp}}\hat{\Delta}_{L}(u)||_{*}
- & \sqrt{\frac{2C_{X}p\log(pNT)}{NT}}||\hat{\Delta}_{\beta}(u)||_{F}\notag\\
-&\left(\lambda+\frac{C_{op}\sqrt{N\lor T}}{NT}\right)||\mathcal{P}_{\Phi(u)}\hat{\Delta}_{L}(u)||_{*}\Bigg)
\leq 0\label{eq.A51}
\end{align}\normalsize
holds with probability one under the event that equations \eqref{3.eqA1} and \eqref{3.eqA2} hold. Let $C_{Cone}=\sqrt{2C_{X}}/C_{op}$. Then by the choice of $\lambda$ in Lemma \ref{3.lem1}, we have the following:
\begin{align*}
&\sup_{u\in\mathcal{U}}\left(||\hat{\Delta}_{L}(u)||_{*}-4||\mathcal{P}_{\Phi(u)}\hat{\Delta}_{L}(u)||_{*}-C_{Cone}\sqrt{p(N\land T)\log(pNT)}||\hat{\Delta}_{\beta}(u)||_{F}\right)\\
= & \sup_{u\in\mathcal{U}}\left(||\mathcal{P}_{\Phi(u)^{\perp}}\hat{\Delta}_{L}(u)||_{*}-3||\mathcal{P}_{\Phi(u)}\hat{\Delta}_{L}(u)||_{*}-C_{Cone}\sqrt{p(N\land T)\log(pNT)}||\hat{\Delta}_{\beta}(u)||_{F}\right)\leq  0
\end{align*}
where the inequality is by equation \eqref{eq.A51}. Hence, inequality \eqref{coneineq} holds w.p.a.1 by noting that inequalities \eqref{3.eqA1} and \eqref{3.eqA2} hold w.p.a.1 by Lemma \ref{3.lemA1}.\qed

\subsubsection{Proof of Lemma \ref{lowerbound}}

To prove the lemma, we need the following result which helps to handle the high-dimensional $\Delta_{L}$. Its proof is in Appendix \ref{app.techlem}.
\begin{appxlem}\label{3.lemA3}
For all $w_{1},w_{2}\in\mathbb{R}$ and all $ \kappa\in(0,1]$,
\begin{equation*}
\int_{0}^{w_{2}}\big(\mathbbm{1}(w_{1}\leq z)-\mathbbm{1}(w_{1}\leq 0)\big)dz\geq \int_{0}^{\kappa w_{2}}\big(\mathbbm{1}(w_{1}\leq z)-\mathbbm{1}(w_{1}\leq 0)\big)dz\geq 0
\end{equation*}
\end{appxlem}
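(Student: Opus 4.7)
The plan is to reduce both inequalities to a single pointwise sign property of the integrand. Set $g(z) \coloneqq \mathbbm{1}(w_{1} \leq z) - \mathbbm{1}(w_{1} \leq 0)$. Since the map $z \mapsto \mathbbm{1}(w_{1} \leq z)$ is nondecreasing in $z$, we immediately have $g(z) \geq 0$ whenever $z \geq 0$ and $g(z) \leq 0$ whenever $z \leq 0$; equivalently, $z \cdot g(z) \geq 0$ for every $z \in \mathbb{R}$. This is the only ingredient the argument needs.

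For the right-hand inequality $\int_{0}^{\kappa w_{2}} g(z)\,dz \geq 0$, I would split on the sign of $w_{2}$. If $w_{2} \geq 0$, then $\kappa w_{2} \geq 0$ and $g$ is nonnegative throughout $[0, \kappa w_{2}]$, so the integral is nonnegative. If $w_{2} < 0$, then $\kappa w_{2} \leq 0$, and $\int_{0}^{\kappa w_{2}} g = -\int_{\kappa w_{2}}^{0} g \geq 0$, because $g \leq 0$ on $[\kappa w_{2}, 0]$ and reversing the limits flips the sign.

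For the left-hand inequality I would use additivity of the integral to write
\[
\int_{0}^{w_{2}} g(z)\,dz \;-\; \int_{0}^{\kappa w_{2}} g(z)\,dz \;=\; \int_{\kappa w_{2}}^{w_{2}} g(z)\,dz.
\]
Because $\kappa \in (0,1]$, the points $\kappa w_{2}$ and $w_{2}$ lie on the same side of zero with $|\kappa w_{2}| \leq |w_{2}|$, so the interval of integration is contained in $[0, \infty)$ when $w_{2} \geq 0$ and in $(-\infty, 0]$ when $w_{2} < 0$. In the former case $g \geq 0$ on the interval and $\kappa w_{2} \leq w_{2}$, giving a nonnegative integral; in the latter case $g \leq 0$ but $\kappa w_{2} \geq w_{2}$, so reversing the limits contributes a second sign flip and the integral is again nonnegative.

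There is no real obstacle here: the entire content is the monotonicity of $z \mapsto \mathbbm{1}(w_{1} \leq z)$, and the only care required is the bookkeeping for the $w_{2} < 0$ case, where the integration limits appear in the ``wrong'' order, together with the trivial observation that $\kappa w_{2}$ lies between $0$ and $w_{2}$ whenever $\kappa \in (0,1]$, regardless of the sign of $w_{2}$.
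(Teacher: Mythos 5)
Your proof is correct and follows essentially the same route as the paper's: both rest on the monotonicity of $z\mapsto \mathbbm{1}(w_{1}\leq z)$, write the difference of the two integrals as $\int_{\kappa w_{2}}^{w_{2}}$ of the integrand, and handle the sign flip in the $w_{2}<0$ case by reversing the limits. Your write-up is, if anything, slightly more explicit in also spelling out the rightmost inequality $\int_{0}^{\kappa w_{2}}\geq 0$, which the paper leaves implicit.
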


By Knight's identity \citep{knight1998limiting}, for any two scalars $w_{1}$ and $w_{2}$,
\begin{equation*}
\rho_{u}(w_{1}-w_{2})-\rho_{u}(w_{1})=-w_{2}(u-\mathbbm{1}(w_{1}\leq 0))+\int_{0}^{w_{2}}(\mathbbm{1}(w_{1}\leq s)-\mathbbm{1}(w_{1}\leq 0))ds
\end{equation*} 
Let $w_{1}=V_{it}(u)$ and $w_{2}=X_{it}'\Delta_{\beta}+\Delta_{L,it}$ where $\Delta_{\beta}$ and $\Delta_{L}$ are arbitrary fixed $p\times 1$ vector and $N\times T$ matrix, then by construction $\mathbb{E}(-w_{2}(u-\mathbbm{1}(w_{1}\leq 0))|W)=0$. Let $\kappa=1/(3\alpha_{NT})$. By $\alpha_{NT}\geq 1$, $\kappa \in (0,1)$. By $||\Delta_{L}||_{\infty}\leq 2\alpha_{NT}$, by $||\Delta_{\beta}||_{F}\leq \gamma$ and under the event that $\sum_{j=1}^{p}||X_{j}||_{\infty}\gamma\leq \alpha_{NT}$, we have $|X_{it}'\Delta_{\beta}+\Delta_{L,it}|\leq 3\alpha_{NT}$ for all $i$ and $t$, implying that $|\kappa\cdot\left(X_{it}'\Delta_{\beta}+\Delta_{L,it}\right)|\leq 1$ for all $i$ and $t$. Therefore, for all $i$ and $t$, $\kappa\cdot\left(X_{it}'\Delta_{\beta}+\Delta_{L,it}\right)(1\land \delta)\in [-\delta,\delta]$ where $\delta>0$ is defined in Assumption \ref{3.ass2}. Then by Assumption \ref{3.ass2}, Lemma \ref{3.lemA3} and the mean value theorem, the following holds for all $i,t$, and $u\in\mathcal{U}$ almost surely:
\begin{align}
&\mathbb{E}\left(\int_{0}^{X_{it}'\Delta_{\beta}+\Delta_{L,it}}\big(\mathbbm{1}(V_{it}(u)\leq s)-\mathbbm{1}(V_{it}(u)\leq 0)\big)ds\Big|W\right)\notag\\
\geq &\mathbb{E}\left(\int_{0}^{\kappa (X_{it}'\Delta_{\beta}+\Delta_{L,it})(1\land \delta)}\big(\mathbbm{1}(V_{it}(u)\leq s)-\mathbbm{1}(V_{it}(u)\leq 0)\big)ds\Big|W\right)\notag\\
= &\int_{0}^{\kappa(X_{it}'\Delta_{\beta}+\Delta_{L,it})(1\land\delta)}\big(F_{V_{it}(u)|W}(s)-F_{V_{it}(u)|W}(0))\big)ds\notag\\
=&\int_{0}^{\kappa(X_{it}'\Delta_{\beta}+\Delta_{L,it})(1\land \delta)}sf_{V_{it}(u)|W}(\tilde{s}(s))ds\notag\\
\geq & \frac{\kappa^{2}(1\land\delta)^{2}(X_{it}'\Delta_{\beta}+\Delta_{L,it})^{2}\underline{f}}{2}\notag\\
=& \frac{(1\land \delta)^{2}\underline{f}(X_{it}'\Delta_{\beta}+\Delta_{L,it})^{2}}{18\alpha_{NT}^{2}} \notag
\end{align}
where $\underline{f}$ is defined in Assumption \ref{3.ass2} and $\tilde{s}(s)\in [0,1]$ is the mean value.
The desired result is obtained by letting $C_{min}=(1\land \delta)^{2}\underline{f}/18$.
\qed

\subsubsection{Proof of Lemma \ref{3.lem3}}
The main argument of the proof follows the proof of Lemma 5 in \cite{belloni2011l1}. The major difference is that we need to handle the matrix component $\Delta_{L}$.

Let 
\[
\mathcal{A}(\gamma)\coloneqq \sup_{\substack{u\in\mathcal{U}\\(\Delta_{\beta},\Delta_{L})\in\mathcal{R}_{u}\\||\Delta_{\beta}||_{F}^{2}+\frac{1}{NT}||\Delta_{L}||_{F}^{2}\leq\gamma^{2}}} \big|\mathbb{G}_{u}\big(\rho_{u} \big(V_{it}(u)-X_{it}' \Delta_{\beta}-\Delta_{L,it}\big)-\rho_{u} (V_{it}(u))\big)\big|.
\]
where for generic random variables $(Z_{it})_{i,t}$ and a function $f$, recall that $\mathbb{G}_{u}(f(Z_{it}))\coloneqq \sum_{i,t}[f(Z_{it})-\mathbb{E}(f(Z_{it})|W)]/\sqrt{NT}$. Denote its symmetrized version by $\mathbb{G}^{0}(f(Z_{it}))\coloneqq (\sum_{i,t}f(Z_{it})\varepsilon_{it})/\sqrt{NT}$ where $(\varepsilon_{it})_{i,t}$ is a Rademacher sequence independent of $(\{V(u)\}_{u\in (0,1)},W)$. Let $\Omega_{1}$ be the event that $\max_{1\leq j\leq p}||X_{j}||_{F}^{2}\leq C_{X}NT$ where $C_{X}$ is as in Assumption \ref{3.ass1}. Since for any $s>0$,
\begin{align}
\mathbb{P}(\mathcal{A}(\gamma)\geq s)\leq \mathbb{P}(\mathcal{A}(\gamma)\geq s|\Omega_{1})\mathbb{P}(\Omega_{1})+\mathbb{P}(\Omega_{1}^{c})\label{eq.At}
\end{align}
and $\mathbb{P}(\Omega_{1}^{c})\to 0$ under Assumption \ref{3.ass1}, we only need to show that for some $C_{sup}>0$ and $s=C_{sup}\log(NT)\left(\sqrt{p\log(pNT)}\lor\sqrt{\bar{r}(N\lor T)}\right)\gamma$, the conditional probability $\mathbb{P}(\mathcal{A}(\gamma)\geq s|\Omega_{1})$ converges to zero.

Similar to \cite{belloni2011l1} and \cite{chao2015factorisable}, for any fixed $\Delta_{\beta}$ and $\Delta_{L}$ with $||\Delta_{\beta}||_{F}^{2}+\frac{1}{NT}||\Delta_{L}||_{F}^{2}\leq\gamma^{2}$ and any $u\in\mathcal{U}$, we have the following bound on the conditional variance of the process by noting that the check function is a contraction:
\begin{align}
&\textrm{Var}\left(\mathbb{G}_{u}\left(\rho_{u} \left(V_{it}(u)-X_{it}' \Delta_{\beta}-\Delta_{L,it}\right)-\rho_{u} (V_{it}(u))\right)|W\right)\notag\\
\leq & \frac{1}{NT}||\sum_{j=1}^{p}X_{j} \Delta_{\beta,j}+\Delta_{L}||_{F}^{2}\notag\\
\leq &  \frac{2}{NT}||\sum_{j=1}^{p}X_{j} \Delta_{\beta,j}||_{F}^{2}+\frac{2}{NT}||\Delta_{L}||_{F}^{2}\notag\\
\leq & \frac{2p}{NT}\sum_{j=1}^{p}(||X_{j}||_{F}^{2}\Delta_{\beta,j}^{2})+\frac{2}{NT}||\Delta_{L}||_{F}^{2}\notag\\
\leq & 2\left(\frac{p\max_{1\leq j\leq p}||X_{j}||_{F}^{2}}{NT}\lor 1\right)\big(||\Delta_{\beta}||_{F}^{2}+\frac{1}{NT}||\Delta_{L}||_{F}^{2}\big)\notag\\
\leq &2\left(\frac{p\max_{1\leq j\leq p}||X_{j}||_{F}^{2}}{NT}\lor 1\right)\gamma^{2}\label{variance}
\end{align}
Since $\mathbb{E}(\mathbb{G}_{u}\left(\rho_{u} \left(V_{it}(u)-X_{it}' \Delta_{\beta}-\Delta_{L,it}\right)-\rho_{u} (V_{it}(u))\right)|W)=0$ by construction, with inequality \eqref{variance} we can apply the symmetrization lemma for probability, for instance Lemma 2.3.7 in \cite{van1996weak}.
\begin{align}
\mathbb{P}\left(\mathcal{A}(\gamma)>s|\Omega_{1}\right)
=&\mathbb{E}\left[\mathbb{P}\left(\mathcal{A}(\gamma)>s|W\right)|\Omega_{1}\right]\notag\\
\leq&\mathbb{E}\left[ \frac{\mathbb{P}\left(\mathcal{A}^{0}(\gamma)>\frac{s}{4}|W\right)}{1-8\left(pC_{X}\lor 1\right)\gamma^{2}/s^{2}}\Big|\Omega_{1}\right]\notag\\
\leq & 2\mathbb{P}\left(\mathcal{A}^{0}(\gamma)>\frac{s}{4}\big|\Omega_{1}\right)\label{epA}
\end{align}
where $\mathcal{A}^{0}(\gamma)$ is the symmetrized version of $\mathcal{A}(\gamma)$ by replacing $\mathbb{G}_{u}$ with its symmetrized version $\mathbb{G}^{0}$. The equality is by the law of iterated expectation by noting that $\max_{1\leq j\leq p} ||X_{j}||_{F}^{2}$ in the event $\Omega_{1}$ is a function of $W$. The first inequality is by Lemma 2.3.7 in \cite{van1996weak} and Chebyshev's inequality, and by the bound on the conditional variance \eqref{variance} and the definition of $\Omega_{1}$. The last inequality holds because $\gamma\sqrt{\left(pC_{X}\lor 1\right)}/s\to 0$ by the definition of $s$ for any fixed $C_{sup}$. Next we show that $\mathbb{P}(\mathcal{A}^{0}(\gamma)>s/4|\Omega_{1})\to 0$ under our choice of $s$ for some $C_{sup}$.

Consider the random variable $\rho_{u} \left(V_{it}(u)-X_{it}' \Delta_{\beta}-\Delta_{L,it}\right)-\rho_{u} (V_{it}(u))$:
\begin{equation*}
\rho_{u} \left(V_{it}(u)-X_{it}' \Delta_{\beta}-\Delta_{L,it}\right)-\rho_{u} (V_{it}(u))
=-u\cdot(X_{it}' \Delta_{\beta}+\Delta_{L,it})+\delta_{it}\left(X_{it}' \Delta_{\beta}+\Delta_{L,it},u\right)
\end{equation*}
where $\delta_{it}\left(X_{it}' \Delta_{\beta}+\Delta_{L,it},u\right)=\left(V_{it}(u)-X_{it}' \Delta_{\beta}-\Delta_{L,it}\right)_{-}-\left(V_{it}(u)\right)_{-}$. Let
\[
\mathcal{B}^{0}_{1}(\gamma)\coloneqq \sup_{\substack{||\Delta_{
\beta}||_{F}^{2}\leq\gamma^{2}}} \big|\mathbb{G}^{0}\big(X_{it}' \Delta_{\beta}\big)\big|,
\]
\[
\mathcal{B}^{0}_{2}(\gamma)\coloneqq \sup_{\substack{(\Delta_{\beta},\Delta_{L})\in\mathcal{R}_{u}\\
||\Delta_{\beta}||_{F}^{2}+\frac{1}{NT}||\Delta_{L}||_{F}^{2}\leq\gamma^{2}}} \big|\mathbb{G}^{0}\big(\Delta_{L,it}\big)\big|,
\]
and
\[
\mathcal{C}^{0}(\gamma)\coloneqq\sup_{\substack{u\in\mathcal{U}\\(\Delta_{\beta},\Delta_{L})\in\mathcal{R}_{u}\\||\Delta_{
\beta}||_{F}^{2}+\frac{1}{NT}||\Delta_{L}||_{F}^{2}\leq\gamma^{2}}}\big|\mathbb{G}^{0}\big(\delta_{it}\big(X_{it}' \Delta_{\beta}+\Delta_{L,it},u\big)\big)\big|,
\]
then $\mathcal{A}^{0}(\gamma)\leq \mathcal{B}^{0}_{1}(\gamma)+\mathcal{B}^{0}_{2}(\gamma)+\mathcal{C}^{0}(\gamma)$ with probability one. Hence,
\begin{align}
&\mathbb{P}\left(\mathcal{A}^{0}(\gamma)\geq \frac{s}{4}\big|\Omega_{1}\right)\notag\\
\leq& \mathbb{P}\left(\mathcal{B}^{0}_{1}(\gamma)+\mathcal{B}^{0}_{2}(\gamma)+\mathcal{C}^{0}(\gamma)\geq \frac{s}{4}\big|\Omega_{1}\right)\notag\\
\leq &3\max\left\lbrace \mathbb{P}\left(\mathcal{B}^{0}_{1}(\gamma)\geq \frac{s}{4}\big|\Omega_{1}\right),\mathbb{P}\left(\mathcal{B}^{0}_{2}(\gamma)\geq \frac{s}{4}\big|\Omega_{1}\right),\mathbb{P}\left(\mathcal{C}^{0}(\gamma)\geq \frac{s}{4}\big|\Omega_{1}\right)\right\rbrace.\label{epA0}
\end{align}
We now derive upper bounds on $\mathcal{B}^{0}_{1}(\gamma)$, $\mathcal{B}^{0}_{2}(\gamma)$ and $\mathcal{C}^{0}(\gamma)$ respectively.

\textbf{Bound on $\mathcal{B}^{0}_{1}(\gamma)$}. The derivation of the bound on $\mathcal{B}^{0}_{1}(\gamma)$ follows \cite{belloni2011l1} closely. We present the proof here for completeness. For some $K_{1}>0$, by Markov's inequality,
\begin{align*}
&\mathbb{P}\left(\mathcal{B}^{0}_{1}(\gamma)>K_{1}\big|W,\Omega_{1}\right)\\
\leq& \min_{\tau\geq 0}e^{-\tau K_{1}}\mathbb{E}\left[\exp\left(\tau \mathcal{B}_{1}^{0}(\gamma)\right)\big|W,\Omega_{1}\right]\\
\leq &\min_{\tau\geq 0}e^{-\tau K_{1}}\mathbb{E}\left[\exp\left(\tau \sup_{||\Delta_{\beta}||_{F}^{2}\leq\gamma^{2}} ||\Delta_{\beta}||_{1}\cdot \max_{1\leq j\leq p}|\mathbb{G}^{0}(X_{j,it})|\right)\big|W,\Omega_{1}\right]\\
\leq &2p\min_{\tau\geq 0}e^{-\tau K_{1}}\max_{1\leq j\leq p}\mathbb{E}\left[\exp\left(\tau \sqrt{p}\gamma\cdot \mathbb{G}^{0}(X_{j,it})\right)\big|W,\Omega_{1}\right]\\
\leq &2p\min_{\tau\geq 0}e^{-\tau K_{1}}\exp\left(\frac{\tau^{2}p\gamma^{2}C_{X}}{2}\right)
\end{align*}
where the third inequality follows from the fact that $\mathbb{E}[\max_{1\leq j\leq p}\exp(|z_{j}|)]\leq 2p\max_{1\leq j\leq p}\mathbb{E}[\exp(z_{j})]$ for a symmetric random variable $z_{j}$ \citep{belloni2011l1}. The last inequality is by an intermediate step in the proof of Hoeffding's inequality (e.g. \cite{van1996weak} p.100) and by $\Omega_{1}$. Hence, by setting $\tau=K_{1}/(p\gamma^{2}C_{X})$ and $K_{1}=\sqrt{2pC_{X}\log(pNT)}\cdot \gamma$, we have
\begin{equation*}
\mathbb{P}\left(\mathcal{B}^{0}_{1}(\gamma)>K_{1}\big|W,\Omega_{1}\right)\leq 2p\exp\left(-\frac{K_{1}^{2}}{2p\gamma^{2}C_{X}}\right)=\frac{2}{NT}\to 0.
\end{equation*}
Therefore,
\begin{equation}\label{epB1}
\mathbb{P}\left(\mathcal{B}_{1}^{0}>K_{1}\big|\Omega_{1}\right)=\mathbb{E}\left[\mathbb{P}\left(\mathcal{B}_{1}^{0}>K_{1}|W,\Omega_{1}\right)\big|\Omega_{1}\right]\leq \mathbb{E}\left(\frac{2}{NT}\right)\to 0
\end{equation}


\textbf{Bound on $\mathcal{B}^{0}_{2}(\gamma)$}. Recall that $\{\varepsilon_{it}\}_{i,t}$ is the Rademacher sequence in the symmetrized process. Let $\bm{\varepsilon}$ be the $N\times T$ matrix $(\varepsilon_{it})_{i,t}$. Then with probability one,
\begin{align*}
\mathcal{B}^{0}_{2}(\gamma)=&\frac{1}{\sqrt{NT}}\sup_{\substack{(\Delta_{\beta},\Delta_{L})\in\mathcal{R}_{u}\\||\Delta_{
\beta}||_{F}^{2}+\frac{1}{NT}||\Delta_{L}||_{F}^{2}\leq\gamma^{2}}}|\sum_{i,t}\varepsilon_{it}\Delta_{L,it}|\\
=&\frac{1}{\sqrt{NT}}\sup_{\substack{(\Delta_{\beta},\Delta_{L})\in\mathcal{R}_{u}\\||\Delta_{
\beta}||_{F}^{2}+\frac{1}{NT}||\Delta_{L}||_{F}^{2}\leq\gamma^{2}}}|\langle\bm{\varepsilon},\Delta_{L}\rangle|\\
\leq &\frac{1}{\sqrt{NT}}||\bm{\varepsilon}||\cdot \sup_{\substack{(\Delta_{\beta},\Delta_{L})\in\mathcal{R}_{u}\\||\Delta_{
\beta}||_{F}^{2}+\frac{1}{NT}||\Delta_{L}||_{F}^{2}\leq\gamma^{2}}}||\Delta_{L}||_{*}\\
\leq &\frac{1}{\sqrt{NT}}||\bm{\varepsilon}||\cdot \sup_{||\Delta_{
\beta}||_{F}^{2}+\frac{1}{NT}||\Delta_{L}||_{F}^{2}\leq\gamma^{2}}\big(C_{Cone}\sqrt{p(N\land T)\log(pNT)}||\Delta_{\beta}||_{F}+4||\mathcal{P}_{\Phi(u)}\Delta_{L}||_{*}\big)\\
\leq &\frac{1}{\sqrt{NT}}||\bm{\varepsilon}||\cdot\sup_{||\Delta_{
\beta}||_{F}^{2}+\frac{1}{NT}||\Delta_{L}||_{F}^{2}\leq\gamma^{2}}\big(C_{Cone}\sqrt{p(N\land T)\log(pNT)}||\Delta_{\beta}||_{F}+4\sqrt{3\bar{r}}||\Delta_{L}||_{F}\big)\\
\leq &\frac{2}{\sqrt{NT}}||\bm{\varepsilon}||\cdot\left(\left(C_{Cone}\sqrt{p(N\land T)\log(pNT)}\right)\lor \left(4\sqrt{3\bar{r}NT}\right)\right)\gamma
\end{align*}
where the second inequality is by the definition of cone $\mathcal{R}_{u}$ and the third inequality is by equation \eqref{equiorder}. Since $\bm{\varepsilon}$ has i.i.d. mean $0$ entries which are uniformly bounded in magnitude by $1$,  there exists a constant $C_{Sp}>0$ such that for $K_{2}\coloneqq C_{Sp}\left(\left(C_{Cone}\sqrt{p\log(pNT)}\right)\lor \left(4\sqrt{N\lor T}\sqrt{3\bar{r}}\right)\right)\gamma$,
\begin{align}
\mathbb{P}\left(\mathcal{B}^{0}_{2}(\gamma)>K_{2}\big|\Omega_{1}\right)
\leq\mathbb{P}\left(||\bm{\varepsilon}||>C_{Sp}\sqrt{N\lor T}/2\right)\to 0 \label{epB2}
\end{align}
where the convergence is by Corollary 2.3.5 in \cite{tao2012topics}. 

\textbf{Bound on $\mathcal{C}^{0}(\gamma)$}. By $\gamma=o(1)$, it is smaller than one for sufficiently large $N$ and $T$. Then let $\mathcal{U}_{l}=\{u_{1},...,u_{l}\}$ be an $\epsilon$-net of $\mathcal{U}$ where $\epsilon=\gamma$ and $\epsilon l\leq 1$. For any $\bar{u}\in \mathcal{U}$, we have the identity 
\begin{align*}
\delta_{it}(X_{it}'\Delta_{\beta}+
\Delta_{L,it},u)
=&\delta_{it}[X_{it}'(\Delta_{\beta}+\beta_{0}(u)-\beta_{0}(\bar{u}))+
\Delta_{L,it}+L_{0,it}(u)-L_{0,it}(\bar{u}),\bar{u}]\\
&-\delta_{it}[X_{it}'(\beta_{0}(u)-\beta_{0}(\bar{u}))+L_{0,it}(u)-L_{0,it}(\bar{u}),\bar{u}].
\end{align*}
Then by the triangle inequality, with probability one we have\small
\begin{align}
\mathcal{C}^{0}(\gamma)\leq&
\sup_{\substack{u\in\mathcal{U},|u-\bar{u}|<\epsilon,\bar{u}\in\mathcal{U}_{l}\\(\Delta_{\beta},\Delta_{L})\in\mathcal{R}_{u}\\||\Delta_{
\beta}||_{F}^{2}+\frac{1}{NT}||\Delta_{L}||_{F}^{2}\leq\gamma^{2}}}\big|\mathbb{G}^{0}\big(\delta_{it}[X_{it}'(\Delta_{\beta}+\beta_{0}(u)-\beta_{0}(\bar{u}))+
\Delta_{L,it}+L_{0,it}(u)-L_{0,it}(\bar{u}),\bar{u}]\big)\big|\notag\\
& +\sup_{\substack{u\in\mathcal{U},|u-\bar{u}|<\epsilon,\bar{u}\in\mathcal{U}_{l}\\(\Delta_{\beta},\Delta_{L})\in\mathcal{R}_{u}\\||\Delta_{
\beta}||_{F}^{2}+\frac{1}{NT}||\Delta_{L}||_{F}^{2}\leq\gamma^{2}}}\big|\mathbb{G}^{0}\big(\delta_{it}[X_{it}'(\beta_{0}(u)-\beta_{0}(\bar{u}))+L_{0,it}(u)-L_{0,it}(\bar{u}),\bar{u}]\big)\big|\label{eq.Cboundpre}
\end{align}\normalsize
We will proceed by treating $\Delta_{\beta}+\beta_{0}(u)-\beta_{0}(\bar{u})$ and $\beta_{0}(u)-\beta_{0}(\bar{u})$ as new $\Delta_{\beta}$s, and $\Delta_{L}+L_{0}(u)-L_{0}(\bar{u})$ and $L_{0}(u)-L_{0}(\bar{u})$ as new $\Delta_{L}$s. However, they may no longer lie in the ball $||\Delta_{
\beta}||_{F}^{2}+\frac{1}{NT}||\Delta_{L}||_{F}^{2}\leq\gamma^{2}$ and in cone $\mathcal{R}_{u}$. So, we need to first expand these two sets:

Let the event $\Omega_{2}\coloneqq\{||L_{0}(u)'-L_{0}(u)||_{F}/\sqrt{NT}\leq \zeta_{2}|u'-u|,\forall u,u'\in\mathcal{U}\}$. By Assumption \ref{3.ass4}, $\mathbb{P}(\Omega_{2})\to 1$. For $\Delta_{\beta}$, by Assumption \ref{3.ass4}, by $\epsilon=\gamma$ and by $||\Delta_{\beta}||_{F}\leq \gamma$, we have $||\Delta_{\beta}+\beta_{0}(u)-\beta_{0}(\bar{u})||_{F}^{2}\leq 2(1+\zeta_{1}^{2})\gamma^{2}$ and $||\beta_{0}(u)-\beta_{0}(\bar{u})||_{F}^{2}\leq \zeta_{1}^{2}\gamma^{2}$ for all $|u-\bar{u}|\leq \epsilon$. Similarly, for $||\Delta_{L}||_{F}\leq\sqrt{NT}\gamma$, under $\Omega_{2}$, $||\Delta_{L}+L_{0}(u)-L_{0}(\bar{u})||_{F}^{2}/NT\leq 2(1+\zeta_{2}^{2})\gamma^{2}$ while $||L_{0}(u)-L_{0}(\bar{u})||_{F}^{2}/NT\leq \zeta_{2}^{2}\gamma^{2}$ for all $|u-\bar{u}|\leq \epsilon$. Therefore, we need to expand the ball to be $||\Delta_{\beta}||_{F}^{2}+||\Delta_{L}||_{F}^{2}/NT\leq 2(1+\zeta_{1}^{2}+\zeta_{2}^{2})\gamma^{2}$. For simplicity, let $C_{\zeta}\coloneqq 1+\zeta_{1}^{2}+\zeta_{2}^{2}$.

Next, let us expand $\mathcal{R}_{u}$. Since
$\text{rank}(L_{0}(u)-L_{0}(\bar{u}))\leq r(u)+r(\bar{u})\leq 2\bar{r}$,
we have $||L_{0}(u)-L_{0}(\bar{u})||_{*}\leq \sqrt{2\bar{r}}||L_{0}(u)-L_{0}(\bar{u})||_{F}\leq \sqrt{2\bar{r}}\zeta_{2}\sqrt{NT}\gamma$ under $\Omega_{2}$ and by $\epsilon=\gamma$ for all $|u-\bar{u}|\leq \epsilon$. Similarly, for $(\Delta_{\beta},\Delta_{L})\in\mathcal{R}_{u}$ and $||\Delta_{\beta}||_{F}^{2}+||\Delta_{L}||_{F}^{2}/NT\leq 2C_{\zeta}\gamma^{2}$, under $\Omega_{2}$, the following holds for all $u\in\mathcal{U}$ and all $|u-\bar{u}|\leq \epsilon$ with probability one,
\begin{align*}
||\Delta_{L}+L_{0}(u)-L_{0}(\bar{u})||_{*}\leq& ||\Delta_{L}||_{*}+||L_{0}(u)-L_{0}(\bar{u})||_{*}\\
\leq & 4||\mathcal{P}_{\Phi(u)}\Delta_{L}(u)||_{*}+C_{Cone}\sqrt{p\log(pNT)(N\land T)}||\Delta_{\beta}||_{F}+\sqrt{2\bar{r}}\zeta_{2}\sqrt{NT}\gamma\\
\leq & (4\sqrt{6C_{\zeta}}+\sqrt{2}\zeta_{2})\sqrt{NT\bar{r}}\gamma+C_{Cone}\sqrt{p\log(pNT)(N\land T)}\sqrt{2C_{\zeta}}\gamma
\end{align*}
where the second inequality follows from the definition of $\mathcal{R}_{u}$ and from $\Omega_{2}$. Let 
\[\bar{\mathcal{R}}=\left\lbrace\Delta_{L}\in\mathbb{R}^{N\times T}:||\Delta_{L}||_{*}\leq  \left[(4\sqrt{6C_{\zeta}}+\sqrt{2}\zeta_{2})\sqrt{NT\bar{r}}+C_{Cone}\sqrt{p\log(pNT)(N\land T)}\sqrt{2C_{\zeta}}\right]\gamma\right\rbrace.
\]
Therefore, in the intersection of the ball $||\Delta_{\beta}||_{F}^{2}+||\Delta_{L}||_{F}^{2}/NT\leq 2C_{\zeta}\gamma^{2}$ and $\mathcal{R}_{u}$ for all $u\in\mathcal{U}$, the matrices $\Delta_{L}$, $(\Delta_{L}+L_{0}(u)-L_{0}(\bar{u}))$ and $(L_{0}(u)-L_{0}(\bar{u}))$ are all in $\bar{\mathcal{R}}$ for all $u\in\mathcal{U}$ and $|u-\bar{u}|\leq\epsilon$ under $\Omega_{2}$. Hence, under $\Omega_{2}$, inequality \eqref{eq.Cboundpre} implies that,

\begin{equation}
\mathcal{C}^{0}(\gamma)\leq  2\cdot\sup_{\substack{\bar{u}\in\mathcal{U}_{l},\Delta_{L}\in\bar{\mathcal{R}}\\||\Delta_{
\beta}||_{F}^{2}+\frac{1}{NT}||\Delta_{L}||_{F}^{2}\leq 2C_{\zeta}\gamma^{2}}}\big|\mathbb{G}^{0}\big(\delta_{it}(X_{it}'\Delta_{\beta}+\Delta_{L,it},\bar{u})\big)\big|\coloneqq 2\mathcal{C}^{1}(\gamma)\label{eq.Cbound}
\end{equation}

Now consider the following events $\Omega_{3}$ and $\Omega_{4}$:
\begin{align*}
\Omega_{3}\coloneqq& \left\lbrace\sup_{||\Delta_{
\beta}||_{F}^{2}\leq 2C_{\zeta}\gamma^{2}}\big|\mathbb{G}^{0}\big(X_{it}'\Delta_{\beta}\big)\big|\leq 2\gamma\sqrt{C_{\zeta}pC_{X}\log(pNT)}\right\rbrace\\
\Omega_{4}\coloneqq& \Bigg\lbrace\sup_{\substack{\Delta_{L}\in\bar{\mathcal{R}}\\||\Delta_{\beta}||_{F}^{2}+\frac{1}{NT}||\Delta_{L}||_{F}^{2}\leq 2C_{\zeta}\gamma^{2}}}\big|\mathbb{G}^{0}\left(\Delta_{L,it}\right)\big|\\
&\leq C_{Sp}\left[\left(C_{Cone}\sqrt{p\log(pNT)}\sqrt{2C_{\zeta}}\right)\lor \left(\sqrt{N\lor T}(4\sqrt{6C_{\zeta}}+\sqrt{2}\zeta_{2})\sqrt{\bar{r}}\right)\right]\gamma\Bigg\rbrace
\end{align*}
Similar to the derivation of the bounds on $\mathcal{B}_{1}^{0}(\gamma)$ and on $\mathcal{B}^{0}_{2}(\gamma)$, we can show that $\mathbb{P}(\Omega_{3}|\Omega_{1}\cap\Omega_{2})\to 1$ and $\mathbb{P}(\Omega_{4}|\Omega_{1}\cap\Omega_{2})\to 1$. Therefore,
\begin{equation}
\mathbb{P}(\Omega_{3}\cap\Omega_{4}|\Omega_{1}\cap\Omega_{2})=\mathbb{P}(\Omega_{3}|\Omega_{1}\cap\Omega_{2})-\mathbb{P}(\Omega_{3}\cap\Omega_{4}^{c}|\Omega_{1}\cap\Omega_{2})\geq \mathbb{P}(\Omega_{3}|\Omega_{1}\cap\Omega_{2})-\mathbb{P}(\Omega_{4}^{c}|\Omega_{1}\cap\Omega_{2})\to 1\label{eq.Omega23}
\end{equation}
Now we can derive the upper bound on $\mathcal{C}^{0}(\gamma)$. For some $K_{3}>0$, by equation \eqref{eq.Cbound},
\begin{align}
&\mathbb{P}\left(\mathcal{C}^{0}(\gamma)\geq 2K_{3}\big|W,\Omega_{2}\right)\notag\\
\leq & \mathbb{P}\left(\mathcal{C}^{1}(\gamma)\geq K_{3}\big|W,\Omega_{2}\right)\notag\\
=&\mathbb{P}\left(\mathcal{C}^{1}(\gamma)\geq K_{3},\Omega_{3}\cap\Omega_{4}\big|W,\Omega_{2}\right)+\mathbb{P}\left(\mathcal{C}^{1}(\gamma)\geq K_{3},(\Omega_{3}\cap\Omega_{4})^{c}\big|W,\Omega_{2}\right)\notag\\
\leq &\mathbb{P}\left(\mathcal{C}^{1}(\gamma)\geq K_{3}\big|\Omega_{2}\cap\Omega_{3}\cap\Omega_{4},W\right)\mathbb{P}(\Omega_{3}\cap\Omega_{4}\big|W,\Omega_{2})+\mathbb{P}\left((\Omega_{3}\cap\Omega_{4})^{c}\big|W,\Omega_{2}\right)\notag\\
\leq &\mathbb{P}(\Omega_{3}\cap\Omega_{4}\big|W,\Omega_{2})e^{-\tau'  K_{3}}\mathbb{E}\left[e^{\tau' \mathcal{C}^{1}(\gamma)}\big|\Omega_{2}\cap\Omega_{3}\cap\Omega_{4},W\right]+\mathbb{P}\left((\Omega_{3}\cap\Omega_{4})^{c}\big|W,\Omega_{2}\right)\label{eq.C0C1}
\end{align}
where the last line is by Markov's inequality for some $\tau'>0$. For $\mathbb{E}\left[e^{\tau \mathcal{C}^{1}(\gamma)}\big|\Omega_{2}\cap\Omega_{3}\cap\Omega_{4},W\right]$, by $l\leq 1/\epsilon=1/\gamma$, we have \small
\begin{align}
&\mathbb{E}\left[e^{\tau \mathcal{C}^{1}(\gamma)}|\Omega_{2}\cap\Omega_{3}\cap\Omega_{4},W\right]\notag\\
\leq & \frac{1}{\gamma}\max_{\bar{u}\in\mathcal{U}_{l}}\mathbb{E}\left[\exp\left(\tau'\sup_{\substack{\Delta_{L}\in\bar{\mathcal{R}}\\||\Delta_{
\beta}||_{F}^{2}+\frac{1}{NT}||\Delta_{L}||_{F}^{2}\leq 2C_{\zeta}\gamma^{2}}}\big|\mathbb{G}^{0}\left(\delta_{it}(X_{it}'\Delta_{\beta}+\Delta_{L,it},\bar{u})\right)\big|\right)\Bigg |\Omega_{2}\cap\Omega_{3}\cap\Omega_{4},W\right]\notag\\
\leq &\frac{1}{\gamma}\mathbb{E}\left[\exp\big(2\tau'\sup_{\substack{\Delta_{L}\in\bar{\mathcal{R}}\\||\Delta_{
\beta}||_{F}^{2}+\frac{1}{NT}||\Delta_{L}||_{F}^{2}\leq 2C_{\zeta}\gamma^{2}}}\big|\mathbb{G}^{0}\left(X_{it}'\Delta_{\beta}+\Delta_{L,it}\big)\big|\right)\Bigg |\Omega_{2}\cap\Omega_{3}\cap\Omega_{4},W\right]\notag\\
\leq&\frac{1}{\gamma}\mathbb{E}\left[\exp\left(2\tau'\sup_{\substack{||\Delta_{
\beta}||_{F}^{2}\\\leq 2C_{\zeta}\gamma^{2}}}\big|\mathbb{G}^{0}\big(X_{it}'\Delta_{\beta}\big)\big|+2\tau'\sup_{\substack{\Delta_{L}\in\bar{\mathcal{R}}\\\frac{1}{NT}||\Delta_{L}||_{F}^{2}\leq 2C_{\zeta}\gamma^{2}}}\big|\mathbb{G}^{0}\big(\Delta_{L,it}\big)\big|\right)\Bigg |\Omega_{2}\cap\Omega_{3}\cap\Omega_{4},W\right]\notag\\
\leq &\frac{1}{\gamma}\exp\left(2\tau'\left[2\sqrt{2C_{\zeta}p\log(pNT)(2C_{X}\lor C_{Sp}^{2}C_{Cone}^{2})}+C_{Sp}\sqrt{N\lor T}(4\sqrt{6C_{\zeta}}+\sqrt{2}\zeta_{2})\sqrt{\bar{r}}\right]\gamma\right)\label{eq.C0exp}
\end{align}\normalsize
where the second inequality is by Theorem 4.12 of \cite{ledoux1991ineq} and by contractivity of $\delta_{it}(\cdot)$ with $\delta_{it}(0)=0$. The last inequality is by the definition of $\Omega_{3}$ and $\Omega_{4}$.
Let
\[\tau'=\frac{\sqrt{\log\left(NT+1/\gamma\right)}}{2\left[2\sqrt{2C_{\zeta}p\log(pNT)(2C_{X}\lor C_{Sp}^{2}C_{Cone}^{2})}+C_{Sp}\sqrt{N\lor T}(4\sqrt{6C_{\zeta}}+\sqrt{2}\zeta_{2})\sqrt{\bar{r}}\right]\gamma}\]
and 
\begin{align*}
&K_{3}=\\
&2\left[2\sqrt{2C_{\zeta}p\log(pNT)(2C_{X}\lor C_{Sp}^{2}C_{Cone}^{2})}+C_{Sp}\sqrt{N\lor T}(4\sqrt{6C_{\zeta}}+\sqrt{2}\zeta_{2})\sqrt{\bar{r}}\right]\gamma\sqrt{\log\left(NT+1/\gamma\right)}.
\end{align*}
Since $\exp(\sqrt{\log(NT+1/\gamma)})\leq \sqrt{\exp(\log(NT+1/\gamma))}= \sqrt{NT+1/\gamma}$ for large enough $N$ and $T$, substituting equation \eqref{eq.C0exp} into \eqref{eq.C0C1}, we obtain
\begin{align}
&\mathbb{P}\left(\mathcal{C}^{0}(\gamma)\geq 2K_{3}\big|\Omega_{1}\cap\Omega_{2}\right)\notag\\
=&\mathbb{E}\left[\mathbb{P}\left(\mathcal{C}^{0}(\gamma)\geq 2K_{3}\big |W\right)\big|\Omega_{1}\cap\Omega_{2}\right]\notag\\
\leq &\frac{1}{\sqrt{\gamma^{2}NT+\gamma}}\mathbb{E}\left[\mathbb{P}\left(\Omega_{3}\cap\Omega_{4}|W\right)\big|\Omega_{1}\cap\Omega_{2}\right]+\mathbb{E}\left[\mathbb{P}\left((\Omega_{3}\cap\Omega_{4})^{c}|W\right)\big|\Omega_{1}\cap\Omega_{2}\right]\notag\\
=&\frac{1}{\sqrt{\gamma^{2}NT+\gamma}}\mathbb{P}\left(\Omega_{3}\cap\Omega_{4}\big|\Omega_{1}\cap\Omega_{2}\right)+\mathbb{P}\left((\Omega_{3}\cap\Omega_{4})^{c}\big|\Omega_{1}\cap\Omega_{2}\right)\to 0
\end{align}
where the first equality is by the law of iterated expectation since $\max_{1\leq j\leq p}||X_{j}||_{F}^{2}$ and $||L_{0}(u')-L_{0}(u)||_{F}$ in $\Omega_{1}$ and $\Omega_{2}$ are both functions of $W$. Convergence in the last line is by equation \eqref{eq.Omega23} and by $\gamma^{2}NT+\gamma\to \infty$ following the definition of $\gamma$. Therefore, by $\mathbb{P}(\Omega_{1}\cap\Omega_{2})\to 1$,
\begin{align}
&\mathbb{P}\left(\mathcal{C}^{0}(\gamma)\geq 2K_{3}\big|\Omega_{1}\right)\leq\mathbb{P}\left(\mathcal{C}^{0}(\gamma)\geq 2K_{3}\big|\Omega_{1}\cap\Omega_{2}\right)\mathbb{P}(\Omega_{1}\cap\Omega_{2})/\mathbb{P}(\Omega_{1})+\mathbb{P}(\Omega_{2}^{c}|\Omega_{1})\to 0\label{epC}
\end{align}


Now, recall $s=C_{sup}\sqrt{\log(NT)}\left(\sqrt{p\log(pNT)}\lor\sqrt{\bar{r}(N\lor T)}\right)\gamma$ and let
\[
C_{sup}=16\sqrt{2}\left(2\sqrt{2(2C_{X}\lor C_{Sp}^{2}C_{Cone}^{2})C_{\zeta}}+ C_{Sp}\left(4\sqrt{6C_{\zeta}}+\sqrt{2}\zeta_{2}\right)\right).
\]
By $\log(NT+1/\gamma)\leq \log(2NT)\leq 2\log(NT)$, we have $s\geq 4\max\{K_{1},K_{2},2K_{3}\}$. Combining equations \eqref{epA0}, \eqref{epB1}, \eqref{epB2} and \eqref{epC}, we have
\begin{equation}\label{epA0done}
\mathbb{P}\left(\mathcal{A}^{0}(\gamma)\geq\frac{s}{4}\big |\Omega_{1} \right)\geq \mathbb{P}\left(\mathcal{A}^{0}(\gamma)\geq \max\{K_{1},K_{2},2K_{3}\}|\Omega_{1}\right)\to 0.
\end{equation} 
 By equations \eqref{eq.At}, \eqref{epA} and \eqref{epA0done},  we obtain $\mathbb{P}(\mathcal{A}(\gamma)\geq s)\to 0$.\qed


\subsection{Proofs of the Results in Section \ref{3.sec4}}\label{app.sec4}
\subsubsection{Proof of Theorem \ref{3.thm1}}
Let $\Omega_{0}$ be the event that i) $(\hat{\Delta}_{\beta}(u),\hat{\Delta}_{L}(u))\in\mathcal{R}_{u}\cap\mathcal{D}$ and $||L_{0}(u)||_{\infty}\leq \alpha_{NT}$ uniformly in $u\in\mathcal{U}$, ii) inequality \eqref{eq.lowerbound} holds, iii) the smallest singular value of $\sum_{i,t}(X_{it}X_{it}')/NT$ is greater than $C_{\sigma}\sigma_{min}^{2}$ for some constant $C_{\sigma}>0$ and inequality \eqref{eq.rsc} holds, and iv) the uniform bound on the error process $\mathbb{G}_{u}$ in Lemma \ref{3.lem3} holds. By Lemmas \ref{3.lem1} to \ref{3.lem3} and Assumptions \ref{3.ass3} and \ref{ass.id}, the event $\Omega_{0}$ occurs w.p.a.1. It is then sufficient to show that the following event has zero probability under $\Omega_{0}$:
\begin{equation}
\exists u\in\mathcal{U}:||\hat{\Delta}_{\beta}(u)||_{F}^{2}+\frac{1}{NT}||\hat{\Delta}_{L}(u)||_{F}^{2}\geq\gamma^{2}\label{eq.contra}
\end{equation}
where $\gamma=C_{error}\sqrt{\log(NT)}\left (\sqrt{p\log(pNT)/NT}\lor\sqrt{\bar{r}/(N\land T)}\right)$ for some $C_{error}>0$ about which we will be precise later.

Since $\mathcal{R}_{u}$ is a cone and zero is contained in $\mathcal{D}$ which is a convex set, for any $(\Delta_{\beta},\Delta_{L})\in \mathcal{R}_{u}\cap\mathcal{D}$ and any $\eta\in (0,1)$, $(\eta \Delta_{\beta},\eta\Delta_{L})\in \mathcal{R}_{u}\cap\mathcal{D}$. By this observation, by the definition of the estimator \eqref{3.eq3} and $L_{0}(u)\in\mathcal{L}$ for all $u\in\mathcal{U}$, and by convexity of the objective function, equation \eqref{eq.contra} implies that there exists a $u\in\mathcal{U}$ such that
\begin{align}
0> \inf_{\substack{(\Delta_{\beta},\Delta_{L})\in\mathcal{R}_{u}\cap\mathcal{D}\\||\Delta_{\beta}||_{F}^{2}+\frac{1}{NT}||\Delta_{L}||_{F}^{2}=\gamma^{2}}}&\frac{1}{NT}\left[\bm{\rho}_{u} \big(V(u)-\sum_{j=1}^{p} X_{j} \Delta_{\beta,j}-\Delta_{L}\big)-\bm{\rho}_{u} \left(V(u)\right)\right]\notag\\
& +\lambda \left[||L_{0}(u)+\Delta_{L}||_{*}-||L_{0}(u)||_{*}\right]\label{eq.forcontra}\\
=\inf_{\substack{(\Delta_{\beta},\Delta_{L})\in\mathcal{R}_{u}\cap\mathcal{D}\\||\Delta_{\beta}||_{F}^{2}+\frac{1}{NT}||\Delta_{L}||_{F}^{2}=\gamma^{2}}}& \frac{1}{NT} \mathbb{E}\left[\bm{\rho}_{u} \big(V(u)-\sum_{j=1}^{p} X_{j} \Delta_{\beta,j}-\Delta_{L}\big)-\bm{\rho}_{u} (V(u))\Big |W\right]\notag\\
&+\frac{1}{\sqrt{NT}}\mathbb{G}_{u}\left(\rho_{u} \big(V_{it}(u)-X_{it}' \Delta_{\beta}-\Delta_{L,it}\big)-\rho_{u} (V_{it}(u))\right)\notag\\
&+\lambda \left[||L_{0}(u)+\Delta_{L}||_{*}-||L_{0}(u)||_{*}\right]\label{eq.baseproof}
\end{align}

For the conditional expectation, equations \eqref{eq.lowerbound} and \eqref{eq.rsc} and iii) in $\Omega_{0}$ imply that under $\Omega_{0}$, with probability one,
\begin{align}
&\inf_{\substack{(\Delta_{\beta},\Delta_{L})\in\mathcal{R}_{u}\cap\mathcal{D}\\||\Delta_{\beta}||_{F}^{2}+\frac{1}{NT}||\Delta_{L}||_{F}^{2}=\gamma^{2}}}\frac{1}{NT} \mathbb{E}\left[\bm{\rho}_{u} \big(V(u)-\sum_{j=1}^{p} X_{j} \Delta_{\beta,j}-\Delta_{L}\big)-\bm{\rho}_{u} (V(u))\Big|W\right]\notag\\
\geq& \inf_{\substack{(\Delta_{\beta},\Delta_{L})\in\mathcal{R}_{u}\\||\Delta_{\beta}||_{F}^{2}+\frac{1}{NT}||\Delta_{L}||_{F}^{2}=\gamma^{2}}}\frac{C_{min}}{\alpha_{NT}^{2}NT}||\sum_{j=1}^{p}X_{j}\Delta_{\beta,j}+\Delta_{L}||_{F}^{2}\notag\\
\geq& \inf_{||\Delta_{\beta}||_{F}^{2}+\frac{1}{NT}||\Delta_{L}||_{F}^{2}=\gamma^{2}} \frac{C_{min}C_{RSC}}{\alpha_{NT}^{2}NT}\left(\sum_{i,t}\left(X_{it}'\Delta_{\beta}\right)^{2}+||\Delta_{L}||_{F}^{2}\right)\notag\\
\geq& \inf_{||\Delta_{\beta}||_{F}^{2}+\frac{1}{NT}||\Delta_{L}||_{F}^{2}=\gamma^{2}} \frac{C_{min}C_{RSC}}{\alpha_{NT}^{2}}\left(C_{\sigma}\sigma_{min}^{2}||\Delta_{\beta}||_{F}^{2}+\frac{||\Delta_{L}||_{F}^{2}}{NT}\right)\notag\\
\geq& \frac{C_{min}C_{RSC}}{\alpha_{NT}^{2}}\left(C_{\sigma}\sigma_{min}^{2}\land 1\right)\gamma^{2}\label{eq.minor}
\end{align}

For the error process $\mathbb{G}_{u}/\sqrt{NT}$, by iv) in $\Omega_{0}$,
\begin{align}
&\frac{1}{\sqrt{NT}}\sup_{\substack{(\Delta_{\beta},\Delta_{L})\in\mathcal{R}_{u}\\||\Delta_{\beta}||_{F}^{2}+\frac{1}{NT}||\Delta_{L}||_{F}^{2}=\gamma^{2}}}\big|\mathbb{G}_{u}\left(\rho_{u} \big(V_{it}(u)-X_{it}' \Delta_{\beta}-\Delta_{L,it}\big)-\rho_{u} (V_{it}(u))\right)\big|\notag\\
\leq &C_{sup}\sqrt{\log(NT)}\left (\sqrt{\frac{p\log(pNT)}{NT}}\lor\sqrt{\frac{\bar{r}}{N\land T}}\right)\gamma\label{eq.sup}
\end{align}
with probability one under $\Omega_{0}$.

Finally, by the choice of $\lambda$ and by the definition of $\mathcal{R}_{u}$,
\begin{align}
&\sup_{\substack{(\Delta_{\beta},\Delta_{L})\in\mathcal{R}_{u}\\||\Delta_{\beta}||_{F}^{2}+\frac{1}{NT}||\Delta_{L}||_{F}^{2}=\gamma^{2}}}\lambda \big|\left[||L_{0}(u)+\Delta_{L}||_{*}-||L_{0}(u)||_{*}\right]\big|\notag\\
\leq &\sup_{\substack{(\Delta_{\beta},\Delta_{L})\in\mathcal{R}_{u}\\||\Delta_{\beta}||_{F}^{2}+\frac{1}{NT}||\Delta_{L}||_{F}^{2}=\gamma^{2}}}\lambda ||\Delta_{L}||_{*}\notag\\
\leq &\lambda\sup_{||\Delta_{L}||_{F}^{2}\leq NT \gamma^{2}}\big(C_{Cone}\sqrt{p(N\land T)\log(NT)}\gamma+4||\mathcal{P}_{\Phi(u)}\Delta_{L}||_{*}\big)\notag\\
\leq &\lambda\sup_{||\Delta_{L}||_{F}^{2}\leq NT \gamma^{2}}\big(C_{Cone}\sqrt{p(N\land T)\log(NT)}\gamma+4\sqrt{3\bar{r}}||\Delta_{L}||_{F}\big)\notag\\
\leq &\frac{2\sqrt{2C_{X}}}{C_{Cone}}\left(C_{Cone}+ 4\sqrt{3}\right)\left(\sqrt{\frac{p\log(NT)}{NT}}\lor \sqrt{\frac{\bar{r}}{N\land T}}\right)\gamma\notag\\
\leq & C_{sup}\sqrt{\log(NT)}\left (\sqrt{\frac{p\log(pNT)}{NT}}\lor\sqrt{\frac{\bar{r}}{N\land T}}\right)\gamma\label{eq.penalty}
\end{align}
with probability one under $\Omega_{0}$. The last inequality holds for large enough $N$ and $T$ regardless of the exact values of the constants $C_{X}$, $C_{Cone}$ and $C_{sup}$.

Let $C_{error}=2C_{sup}/(C_{min}C_{RSC}(C_{\sigma}\sigma_{min}^{2}\land 1))$. By equations \eqref{eq.minor}, \eqref{eq.sup} and \eqref{eq.penalty}, the right side of equation \eqref{eq.baseproof} is equal to 
\begin{align*}
\gamma\cdot\left( \frac{C_{min}C_{RSC}}{\alpha_{NT}^{2}}\left(C_{\sigma}\sigma_{min}^{2}\land 1\right)\gamma-2C_{sup}\sqrt{\log(NT)}\left (\sqrt{\frac{p\log(pNT)}{NT}}\lor\sqrt{\frac{\bar{r}}{N\land T}}\right)\right)=0,
\end{align*}
with probability one. Hence, under $\Omega_{0}$, the event that inequality \eqref{eq.forcontra} holds and thus the event that inequality \eqref{eq.contra} holds have zero probability. \qed

\subsubsection{Proof of Corollary \ref{cor.sv}}
By Weyl's inequality for singular values, 
\begin{align*}
\max_{k\in\{1,...,N\land T\}}\left\lbrace |\hat{\sigma}_{k}(u)-\sigma_{k}(u)|\right\rbrace\leq ||\hat{\Delta}_{L}(u)||\leq ||\hat{\Delta}_{L}(u)||_{F}.
\end{align*}
with probability one. Hence, $\sup_{u\in\mathcal{U}}\max_{k}\left\lbrace |\hat{\sigma}_{k}(u)-\sigma_{k}(u)|\right\rbrace\leq \sup_{u\in\mathcal{U}} ||\hat{\Delta}_{L}(u)||_{F}$. Equation \eqref{eq.cor} thus follows by plugging in the uniform rate of $||\hat{\Delta}_{L}(u)||_{F}$ obtained in Theorem \ref{3.thm1} and by $\sigma_{r(u)+1}(u)=\cdots=\sigma_{N\land T}(u)=0$.\qed

\subsubsection{Proof of Corollary \ref{cor.rank}}
By the definition of $\hat{r}(u)$, the event $\{\hat{r}(u)=r(u)\}$ is equivalent to $\{\hat{\sigma}_{r(u)}\geq C_{r}\}\cap\{\hat{\sigma}_{r(u)+1}< C_{r}\}$. The latter event,  under the event $\Omega_{sv2}=\{\sigma_{r(u)}\text{ is of the order of } \sqrt{NT}\}$, can be implied by $\Omega_{sv1}\coloneqq\{|\hat{\sigma}_{r(u)}-\sigma_{r(u)}|\leq \sqrt{NT}\gamma,|\hat{\sigma}_{r(u)+1}-0|\leq \sqrt{NT}\gamma\}$  for sufficiently large $N$ and $T$ by the choice of $C_{r}$. The desired result is thus obtained since w.p.a.1, $\Omega_{sv1}$ is true by Corollary \ref{cor.sv} and  $\Omega_{sv2}$ is true by assumption.

\subsection{Proof of Theorem \ref{appx.thmmain} in Appendix \ref{app.second}}\label{app.techsecond}

Since $\mathcal{D}^{(2)}$ is a cone, the main argument in the proof of Theorem \ref{3.thm1} still holds. Meanwhile, Lemmas \ref{3.lem1} and \ref{3.lem3} and the upper bound on the penalty difference in equation \eqref{eq.base} does not depend on $\mathcal{D}$ nor on the lower bound obtained in \ref{step1}. The desired error bound in Theorem \ref{appx.thmmain} thus follows once we establish an inequality similar to equation \eqref{eq.minor} in the proof of Theorem \ref{3.thm1}.

With the covariates, recall $\mathcal{D}^{(2)}=\mathbb{R}^{p}\times \left\lbrace\Delta_{L}\in\mathbb{R}^{N\times T}:||\mathcal{P}_{\Omega}\Delta_{L}||_{F}^{2}\geq C_{sm} ||\Delta_{L}||_{F}^{2}\right\rbrace$. Let $\gamma=C_{error,2}\alpha^{2}_{NT}\sqrt{\log(NT)}((p\log(pNT)/NT)\lor (\bar{r}/(N\land T)))$ for $C_{error,2}=2.5C_{error}$ where $C_{error}$ is the constant in Theorem \ref{3.thm1}. Let $\Omega_{0}'$ be the event that i) $(\hat{\Delta}_{\beta}(u),\hat{\Delta}_{L}(u))\in\mathcal{R}_{u}\cap\mathcal{D}^{(2)} $ uniformly in $u\in\mathcal{U}$, ii) $\sum_{j=1}^{p}||X_{j}||_{\infty}\gamma\leq \alpha_{NT}$ and iii) the smallest singular value of $\sum_{i,t}(X_{it}X_{it}')/NT$ is greater than $C_{\sigma}\sigma_{min}^{2}$ for some constant $C_{\sigma}>0$ and inequality \eqref{eq.rsc} holds. By Lemma \ref{3.lem1}, Assumptions \ref{3.ass3} and \ref{ass.id} and the condition in the theorem, the event $\Omega_{0}'$ occurs w.p.a.1. 

 Under $\Omega_{0}'$, with probability one we have
\begin{align}
&\inf_{\substack{(\Delta_{\beta},\Delta_{L})\in\mathcal{R}_{u}\cap\mathcal{D}^{(2)}\\||\Delta_{\beta}||_{F}^{2}+\frac{1}{NT}||\Delta_{L}||_{F}^{2}=\gamma^{2}}}\frac{1}{NT} \mathbb{E}\left[\bm{\rho}_{u} \big(V(u)-\sum_{j=1}^{p} X_{j} \Delta_{\beta,j}-\Delta_{L}\big)-\bm{\rho}_{u} (V(u))\Big|W\right]\notag\\
=&\inf_{\substack{(\Delta_{\beta},\Delta_{L})\in\mathcal{R}_{u}\cap\mathcal{D}^{(2)}\\||\Delta_{\beta}||_{F}^{2}+\frac{1}{NT}||\Delta_{L}||_{F}^{2}=\gamma^{2}}}\frac{1}{NT} \sum_{i,t}\int_{0}^{\Delta_{L,it}+X_{it}'\Delta_{\beta}}\left(F_{V_{it}(u)|W}(s)-F_{V_{it}(u)|W}(0)\right)ds\notag\\
\geq &\inf_{\substack{(\Delta_{\beta},\Delta_{L})\in\mathcal{R}_{u}\cap\mathcal{D}^{(2)}\\||\Delta_{\beta}||_{F}^{2}+\frac{1}{NT}||\Delta_{L}||_{F}^{2}=\gamma^{2}}}\frac{1}{NT}\sum_{\{i,t:|\Delta_{L,it}|\leq 2\alpha_{NT}\}}\int_{0}^{\Delta_{L,it}+X_{it}'\Delta_{\beta}}\left(F_{V_{it}(u)|W}(s)-F_{V_{it}(u)|W}(0)\right)ds\notag\\
\geq &\inf_{\substack{(\Delta_{\beta},\Delta_{L})\in\mathcal{R}_{u}\cap\mathcal{D}^{(2)}\\||\Delta_{\beta}||_{F}^{2}+\frac{1}{NT}||\Delta_{L}||_{F}^{2}=\gamma^{2}}}\frac{(1\land \delta)^{2}\underline{f}}{18\alpha_{NT}^{2}NT}||\sum_{j=1}^{p}X_{j}\Delta_{\beta,j}+\mathcal{P}_{\Omega}\Delta_{L}||_{F}^{2}\notag\\
\geq & \inf_{\substack{(\Delta_{\beta},\Delta_{L})\in\mathcal{R}_{u}\cap\mathcal{D}^{(2)}\\||\Delta_{\beta}||_{F}^{2}+\frac{1}{NT}||\Delta_{L}||_{F}^{2}=\gamma^{2}}} \frac{(1\land \delta)^{2}\underline{f}}{18\alpha_{NT}^{2}	NT}\left(0.5||\sum_{j=1}^{p}X_{j}\Delta_{\beta,j}+\Delta_{L}||_{F}^{2}-||\mathcal{P}_{\Omega^{\perp}}\Delta_{L}||_{F}^{2}\right)\notag\\
\geq &\inf_{||\Delta_{\beta}||_{F}^{2}+\frac{1}{NT}||\Delta_{L}||_{F}^{2}=\gamma^{2}} \frac{(1\land \delta)^{2}\underline{f}}{18\alpha_{NT}^{2}NT}\left(0.5C_{RSC}||\sum_{j=1}^{p}X_{j}\Delta_{\beta,j}||_{F}^{2}+\left(0.5C_{RSC}+C_{sm}-1\right)||\Delta_{L}||_{F}^{2}\right)\notag\\
\geq &\inf_{||\Delta_{\beta}||_{F}^{2}+\frac{1}{NT}||\Delta_{L}||_{F}^{2}=\gamma^{2}} \frac{0.4C_{RSC}(1\land \delta)^{2}\underline{f}}{18\alpha_{NT}^{2}NT}\left(||\sum_{j=1}^{p}X_{j}\Delta_{\beta,j}||_{F}^{2}+||\Delta_{L}||_{F}^{2}\right)\notag\\
\geq &\frac{0.4C_{RSC}C_{min}}{\alpha_{NT}^{2}}\left(C_{\sigma}\sigma_{min}^{2}\land 1\right)\gamma^{2}
\end{align} 
 where the equality follows the proof of Lemma \ref{lowerbound}. The first inequality holds because every integral in the summation is nonnegative. The second inequality again follows the proof of Lemma \ref{lowerbound}. The third inequality is elementary. The fourth inequality is by the definition of $\mathcal{D}^{(2)}$ in equation \eqref{eq.secondkey} and by the restricted strong convexity condition \eqref{eq.rsc}. The penultimate inequality is by $C_{sm}\to 1$. The last inequality is because the smallest singular value of $\sum_{i,t}(X_{it}X_{it}')/NT$ is greater than $C_{\sigma}\sigma_{min}^{2}$ and by $C_{min}=(1\land \delta)^{2}\underline{f}/18$. The proof is complete by substituting this lower bound into the proof of Theorem \ref{3.thm1}. \qed

\subsection{Proofs of the Lemmas in Appendices \ref{app.sec3} and \ref{app.sec4}}\label{app.techlem}
\subsubsection{Proof of Lemma \ref{3.lemA1}}
Recall that $\Omega_{1}$ is the event that $\max_{1\leq j\leq p}||X_{j}||_{F}^{2}\leq C_{X}NT$. Under Assumption \ref{3.ass1}, $\mathbb{P}(\Omega_{1})\to 1$. Recall that the $(i,t)$-th element in subgradient $\nabla\bm{\rho}_{u}(V(u))$ is
\begin{equation*}
\left(\nabla\bm{\rho}_{u}\left(V(u)\right)\right)_{it}=u\mathbbm{1}(V_{it}(u)>0)+(u-1)\mathbbm{1}(V_{it}(u)<0).
\end{equation*}
with probability one. By Assumption \ref{3.ass1} and by $V(u)=Y-q_{Y|W}(u)$, the elements in $\nabla\bm{\rho}_{u}\left(V(u)\right)$ are independent with mean $0$ conditional on $W$, and are uniformly bounded within $[-1,1]$. We start by proving equation \eqref{3.eqA1}.

\noindent\textbf{Proof of Equation \eqref{3.eqA1}}. Let $M=\sqrt{2C_{X}NT\log(pNT)}$. Note that
\begin{equation}\label{eqA1.1}
\mathbb{P}\left(\sup_{\substack{u\in\mathcal{U}\\1\leq j\leq p}}|\langle \nabla\bm{\rho}_{u}(V(u)),X_{j}\rangle|>M\right)\leq \mathbb{P}\left(\sup_{\substack{u\in\mathcal{U}\\1\leq j\leq p}}|\langle \nabla\bm{\rho}_{u}(V(u)),X_{j}\rangle|>M\bigg|\Omega_{1}\right)\mathbb{P}(\Omega_{1})+\mathbb{P}(\Omega_{1}^{c}).
\end{equation}
Since $\mathbb{P}(\Omega_{1}^{c})\to 0$, it is sufficient to show the conditional probability in equation \eqref{eqA1.1} converges to zero.

Let $\mathcal{U}_{K}=(u_{1},u_{2},...,u_{K})$ be an $\varepsilon$-net of $\mathcal{U}$. Let $\varepsilon= \frac{1}{\sqrt{NT}}$ and $K\varepsilon\leq 1$. By the triangle inequality,
\begin{align*}
&\sup_{\substack{u\in\mathcal{U}\\1\leq j\leq p}}\big|\big\langle \nabla\bm{\rho}_{u}(V(u)),X_{j}\big\rangle\big|\\
\leq & \max_{\substack{u_{k}\in\mathcal{U}_{K}\\1\leq j\leq p}}\big|\big\langle \nabla\bm{\rho}_{u_{k}}(V(u_{k})),X_{j}\big\rangle\big|+\sup_{\substack{|u-u_{k}|\leq \varepsilon,u_{k}\in\mathcal{U}_{K}\\1\leq j\leq p}}\big|\big\langle \nabla\bm{\rho}_{u}(V(u))-\nabla\bm{\rho}_{u_{k}}(V(u_{k})),X_{j}\big\rangle\big|\\
\eqqcolon& \mathcal{E}_{1}+ \mathcal{E}_{2}.
\end{align*}
Then,
\begin{equation}\label{eqA1.2}
\mathbb{P}\left(\sup_{\substack{u\in\mathcal{U}\\1\leq j\leq p}}|\langle \nabla\bm{\rho}_{u}(V(u)),X_{j}\rangle|>M\bigg|\Omega_{1}\right)\leq 2\max\left\lbrace \mathbb{P}\left(\mathcal{E}_{1}>M|\Omega_{1}\right),\mathbb{P}\left(\mathcal{E}_{2}>M|\Omega_{1}\right)\right\rbrace.
\end{equation}

\textbf{Bound on $\mathcal{E}_{1}$}. By $K\leq 1/\varepsilon$, we have
\begin{align}
\mathbb{P}\left(\mathcal{E}_{1}>M|\Omega_{1}\right)
 \leq &\frac{p}{\varepsilon}\max_{\substack{u_{k}\in\mathcal{U}_{K}\\1\leq j\leq p}}\mathbb{P}\left(\big|\big\langle \nabla\bm{\rho}_{u_{k}}(V(u_{k})),X_{j}\big\rangle\big|\geq M\big|\Omega_{1}\right)\notag\\
 =&\frac{p}{\varepsilon}\max_{\substack{u_{k}\in\mathcal{U}_{K}\\1\leq j\leq p}}\mathbb{E}\left[\mathbb{P}\left(\big|\big\langle \nabla\bm{\rho}_{u_{k}}(V(u_{k})),X_{j}\big\rangle\big|\geq M\big|\Omega_{1},W\right)\big|\Omega_{1}\right]\notag\\
\leq & 2p\sqrt{NT}\exp\left( -\frac{M^{2}}{2C_{X}NT}\right)\notag\\
=&\frac{2}{\sqrt{NT}}\to 0\label{eqA1.3}
\end{align}
where the first inequality is by the union bound and the following equality is due to the law of iterative expectation. The penultimate inequality is by Hoeffding's inequality and by $\Omega_{1}$ under $\varepsilon=1/\sqrt{NT}$.

\textbf{Bound on $\mathcal{E}_{2}$}. By definition, the $(i,t)$-th element in $\nabla\bm{\rho}_{u}(V(u))-\nabla\bm{\rho}_{u_{k}}(V(u_{k}))$ is almost surely
\begin{align}
&u\mathbbm{1}(V_{it}(u)>0)+(u-1)\mathbbm{1}(V_{it}(u)<0)-\left[u_{k}\mathbbm{1}(V_{it}(u_{k})>0)+(u_{k}-1)\mathbbm{1}(V_{it}(u_{k})<0)\right]\notag\\
=&(u-u_{k})+\mathbbm{1}(V_{it}(u_{k})<0)-\mathbbm{1}(V_{it}(u)<0)\label{eqA1.decom}
\end{align}
because by continuity of $V_{it}(u)$ at zero (implied by the existence of a positive density function around zero under Assumption \ref{3.ass2}), it equals zero with probability zero. Let $\Xi_{1}$ and $\Xi_{2}$ be two $N\times T$ matrices whose $(i,t)$-th elements are
\begin{align}
&\Xi_{1,it}\coloneqq u-u_{k}\label{Xi1}\\
&\Xi_{2,it}\coloneqq \mathbbm{1}(V_{it}(u_{k})<0)-\mathbbm{1}(V_{it}(u)<0).\label{Xi2}
\end{align} 
Then by equation \eqref{eqA1.decom}, $\nabla\bm{\rho}_{u}(V(u))-\nabla\bm{\rho}_{u_{k}}(V(u_{k}))=\Xi_{1}+\Xi_{2}$. Therefore,
\begin{align}
\mathbb{P}(\mathcal{E}_{2}>M|\Omega_{1})\leq &\mathbb{P}\left(\sup_{\substack{|u-u_{k}|\leq  \varepsilon\\u_{k}\in\mathcal{U}_{K}\\1\leq j\leq p}}|\left\langle \Xi_{1},X_{j}\right\rangle|>M\Bigg|\Omega_{1}\right)+\mathbb{P}\left(\sup_{\substack{|u-u_{k}|\leq \varepsilon\\u_{k}\in\mathcal{U}_{K}\\1\leq j\leq p}}|\left\langle \Xi_{2},X_{j}\right\rangle|>M\Bigg|\Omega_{1}\right)\notag\\
\eqqcolon&\mathbb{P}_{1}+\mathbb{P}_{2}\label{eqA1.4}
\end{align}

We first show $\mathbb{P}_{1}=0$. By the Cauchy-Schwartz inequality,
\begin{align}
\mathbb{P}\left(\sup_{\substack{|u-u_{k}|\leq \varepsilon,u_{k}\in\mathcal{U}_{K}\\1\leq j\leq p}}|\left\langle \Xi_{1},X_{j}\right\rangle|>M\Bigg|\Omega_{1}\right)\leq&\mathbb{P}\left(\sup_{\substack{|u-u_{k}|\leq \varepsilon,u_{k}\in\mathcal{U}_{K}\\1\leq j\leq p}}||\Xi_{1}||_{F}||X_{j}||_{F}>M\Bigg|\Omega_{1}\right)\notag\\ 
\leq & \mathbb{P}\left(\varepsilon \sqrt{C_{X}}NT>M\right)\notag\\
=&  \mathbb{P}\left(\sqrt{C_{X}NT}>\sqrt{2C_{X}\log(pNT)NT}\right)\notag\\
=&0\label{eqA1.5}
\end{align}
for large enough $N$ and $T$. The second inequality is by the definition of $\Xi_{1}$ and $\Omega_{1}$. 	The penultimate equality is by $\varepsilon=1/\sqrt{NT}$. 

Now we show that $\mathbb{P}_{2}$ converges to zero. Let $\Xi_{2}^{(1)}$ and $\Xi_{2}^{(2)}$ be two $N\times T$ matrices whose $(i,t)$-th elements are
\begin{align}
&\Xi_{2,it}^{(1)}(u_{k}) \coloneqq \mathbbm{1}(V_{it}(u_{k})<0)-\mathbbm{1}(V_{it}(u_{k}-\varepsilon)<0)\label{Xi21}\\
&\Xi_{2,it}^{(2)}(u_{k})\coloneqq \mathbbm{1}(V_{it}(u_{k})<0)-\mathbbm{1}(V_{it}(u_{k}+\varepsilon)<0)\label{Xi22}
\end{align}
Consider a generic element in $\Xi_{2}$: $\Xi_{2,it}\coloneqq\mathbbm{1}(V_{it}(u_{k})<0)-\mathbbm{1}(V_{it}(u)<0)$. By $V_{it}(u)=Y_{it}-q_{Y_{it}|W}(u)$, $V_{it}(u)$ is strictly decreasing in $u$ almost surely. Hence, $\mathbbm{1}(V_{it}(u)<0)$ is weakly increasing in $u$ almost surely. Consequently, if $u_{k}-\varepsilon\leq u\leq u_{k}$, then $0\leq \Xi_{2,it}\leq \Xi_{2,it}^{(1)}(u_{k})\leq 1$. Similarly, if $u_{k}+\varepsilon\geq u\geq u_{k}$, then $0\geq \Xi_{2,it}\geq \Xi_{2,it}^{(2)}(u_{k})\geq -1$. The following inequalities thus hold with probability one.
\begin{align*}
\sup_{\substack{|u-u_{k}|\leq \varepsilon,u_{k}\in\mathcal{U}_{K}\\1\leq j\leq p}}\big|\langle \Xi_{2},X_{j}\rangle\big|\leq& \sup_{\substack{u_{k}-\varepsilon\leq u\leq u_{k},u_{k}\in\mathcal{U}_{K}\\1\leq j\leq p}}\big|\langle \Xi_{2},X_{j}\rangle\big|+\sup_{\substack{u_{k}\leq u\leq u_{k}+\varepsilon,u_{k}\in\mathcal{U}_{k}\\1\leq j\leq p}}\big|\langle \Xi_{2},X_{j}\rangle\big|\\
\leq &  \sup_{\substack{u_{k}-\varepsilon\leq u\leq u_{k},u_{k}\in\mathcal{U}_{K}\\1\leq j\leq p}}\big|\langle \Xi_{2},|X_{j}|\rangle\big|+\sup_{\substack{u_{k}\leq u\leq u_{k}+\varepsilon,u_{k}\in\mathcal{U}_{k}\\1\leq j\leq p}}\big|\langle \Xi_{2},|X_{j}|\rangle\big|\\
\leq & \max_{\substack{u_{k}\in\mathcal{U}_{K}\\1\leq j\leq p}}\big|\langle \Xi_{2}^{(1)},|X_{j}|\rangle\big|+ \max_{\substack{u_{k}\in\mathcal{U}_{K}\\1\leq j\leq p}}\big|\langle \Xi_{2}^{(2)},|X_{j}|\rangle\big|
\end{align*}
The first inequality is elementary. To see why the second inequality holds, note that the elements in $\Xi_{2}$ are all nonnegative when $u\in [u_{k}-\varepsilon,u_{k}]$ and are all nonpositive when $u\in [u_{k},u_{k}+\varepsilon]$. So, for a given $\Xi_{2}$ and a given $X_{j}$, the two absolute inner products on the right side of the first inequality increase if we flip the signs of the $X_{j,it}$s so that they also have the same sign. The third inequality then follows because now that elements in both $|X_{j}|$ and in $\Xi_{2}$ have the same signs, the two absolute inner products in the second line increase as the magnitude of any of the elements in $\Xi_{2}$ increases. 

Let us first bound $\max_{u_{k}\in\mathcal{U}_{k},1\leq j\leq p}\big|\langle \Xi_{2}^{(1)},|X_{j}|\rangle\big|$. The expectation of a generic element $\Xi_{2,it}^{(1)}$ in $\Xi_{2}^{(1)}$ satisfies
\begin{align*}
&\mathbb{E}\left(\Xi_{2,it}^{(1)}(u_{k})\big |W\right)\\
=&\mathbb{P}\left(V_{it}(u_{k})< 0\leq V_{it}(u_{k}-\varepsilon)\big |W\right)\\
=&\mathbb{P}\left(q_{Y_{it}|W}(u_{k}-\varepsilon)\leq Y_{it}< q_{Y_{it}|W}(u_{k})\big |W\right)\\
= &\varepsilon
\end{align*}
where the second equality is by the definition of $V_{it}(u)$. Let $\bar{\Xi}_{2}^{(1)}=\mathbb{E}\left(\Xi_{2}^{(1)}(u_{k})\big |W\right)$ be an $N\times T$ matrix whose elements are all equal to $\varepsilon$. Under $\Omega_{1}$, by the Cauchy-Schwartz inequality and by $\varepsilon=1/\sqrt{NT}$, we have $\max_{u_{k}\in\mathcal{U}_{k}, 1\leq j\leq p}\big|\langle\bar{\Xi}_{2}^{(1)},|X_{j}|\rangle\big|\leq \varepsilon  \sqrt{C_{X}}NT=\sqrt{C_{X}NT}$ with probability one. Therefore,
\begin{align}
&\mathbb{P}\left(\max_{\substack{u_{k}\in\mathcal{U}_{k}\\ 1\leq j\leq p}}\big|\langle \Xi_{2}^{(1)},|X_{j}|\rangle\big|>M\Bigg |\Omega_{1}\right)\notag\\
&\leq \mathbb{P}\left(\max_{\substack{u_{k}\in\mathcal{U}_{k}\\ 1\leq j\leq p}}\big|\langle \Xi_{2}^{(1)}-\bar{\Xi}_{2}^{(1)},|X_{j}|\rangle\big|+\max_{\substack{u_{k}\in\mathcal{U}_{k}\\ 1\leq j\leq p}}\big|\langle\bar{\Xi}_{2}^{(1)},|X_{j}|\rangle\big|>M\Bigg |\Omega_{1}\right)\notag\\
&\leq \mathbb{P}\left(\max_{\substack{u_{k}\in\mathcal{U}_{k}\\ 1\leq j\leq p}}\big|\langle \Xi_{2}^{(1)}-\bar{\Xi}_{2}^{(1)},|X_{j}|\rangle\big|>M-\sqrt{C_{X}NT}\Bigg |\Omega_{1}\right)\notag\\
&=\frac{p}{\varepsilon}\max_{\substack{u_{k}\in\mathcal{U}_{k}\\ 1\leq j\leq p}}\mathbb{E}\left[\mathbb{P}\left(\big|\langle \Xi_{2}^{(1)}-\bar{\Xi}_{2}^{(1)},|X_{j}|\rangle\big|>M-\sqrt{C_{X}NT}\Bigg | W\right)\Bigg|\Omega_{1}\right]\notag\\
&\leq 2p\sqrt{NT}\exp\left(-\frac{(M-\sqrt{C_{X}NT})^{2}}{2C_{X}NT}\right)\to 0\label{eqA1.6}
\end{align}
The penultimate equality is by the law of iterated expectation since $\max_{1\leq j\leq p}||X_{j}||_{F}^{2}$ in $\Omega_{1}$ is a function of $W$. The last inequality is by Hoeffding's inequality since conditional on $W$, elements in $(\Xi_{2}^{(1)}-\bar{\Xi}_{2}^{(1)})$ are independent with zero mean and are bounded within $[-1,1]$. 

Finally, we can show that $\mathbb{P}\left(\max_{u_{k}\in\mathcal{U}_{k},1\leq j\leq p}|\langle \Xi_{2}^{(2)},|X_{j}|\rangle|>M |\Omega_{1}\right)\to 0$ as well following exactly the same argument. Combining it with equations \eqref{eqA1.1}, \eqref{eqA1.2}, \eqref{eqA1.3}, \eqref{eqA1.4}, \eqref{eqA1.5} and \eqref{eqA1.6}, we obtain the desired result.

\noindent \textbf{Proof of Equation \eqref{3.eqA2}}.
Again, let $\mathcal{U}_{K}=(u_{1},u_{2},...,u_{K})$ be an $\varepsilon$-net of $\mathcal{U}$ with $\varepsilon K\leq 1$. This time let $\varepsilon= 1/\sqrt{N\lor T}$. By the triangle inequality, we have
\begin{align}
\sup_{u\in\mathcal{U}}||\nabla\bm{\rho}_{u}(V(u))||
\leq&  \max_{u_{k}\in\mathcal{U}_{K}}||\nabla\bm{\rho}_{u_{k}}(V(u_{k}))||+\sup_{|u-u_{k}|\leq \varepsilon,u_{k}\in\mathcal{U}_{K}}|| \nabla\bm{\rho}_{u}(V(u))-\nabla\bm{\rho}_{u_{k}}(V(u_{k}))||\notag\\
\eqqcolon &\mathcal{F}_{1}+\mathcal{F}_{2}\label{eqA2.1}
\end{align}
Let $M=C_{op}\sqrt{N\lor T}$ for some $C_{op}>2$. By equation \eqref{eqA2.1},
\begin{equation}\label{eqA2.2}
\mathbb{P}\left(\sup_{u\in\mathcal{U}}||\nabla\bm{\rho}_{u}(V(u))||>M\right)\leq 2\max\left\lbrace\mathbb{P}\left(\mathcal{F}_{1}>M\right),\mathbb{P}\left(\mathcal{F}_{2}>M\right)\right\rbrace
\end{equation}

\textbf{Bound on $\mathcal{F}_{1}$}. Recall that conditional on $W$, the elements in $\nabla\bm{\rho}_{u}(V(u))$ are independent with mean zero and bounded within $[-1,1]$ for all $u\in\mathcal{U}$ with probability one. Therefore, by $\varepsilon K\leq 1$ and the union bound, there exist universal constants $0<c\leq C_{op}$ and $c'>0$ such that 
\begin{align}
\mathbb{P}\left(\mathcal{F}_{1}>M\right)&\leq \frac{1}{\varepsilon}\max_{u_{k}\in\mathcal{U}_{K}}\mathbb{P}\left(||\nabla\bm{\rho}_{u_{k}}(V(u_{k}))||>M\right)\notag\\
&=\frac{1}{\varepsilon}\max_{u_{k}\in\mathcal{U}_{K}}\mathbb{E}\left[\mathbb{P}\left(||\nabla\bm{\rho}_{u_{k}}(V(u_{k}))||>M|W\right)\right]\notag\\
&\leq c\sqrt{N\lor T}\exp(-C_{op}c'(N\lor T))\to 0 \label{eqA2.3}
\end{align}
where the second line is by the law of iterated expectation. The last inequality follows from Corollary 2.3.5 in \cite{tao2012topics} (p.129) that bounds the spectral norm of a matrix with independent mean zero entries that are bounded in magnitude by 1. 

\textbf{Bound on $\mathcal{F}_{2}$}. 	Similar to the proof of equation \eqref{3.eqA1},
\begin{align}
\sup_{|u-u_{k}|\leq \varepsilon,u_{k}\in\mathcal{U}_{K}}|| \nabla\bm{\rho}_{u}(V(u))-\nabla\bm{\rho}_{u_{k}}(V(u_{k}))||\leq &  \sup_{|u-u_{k}|\leq \varepsilon,u_{k}\in\mathcal{U}_{K}} ||\Xi_{1}||+ \sup_{|u-u_{k}|\leq \varepsilon,u_{k}\in\mathcal{U}_{K}}||\Xi_{2}||\notag\\
\leq & \varepsilon\sqrt{NT}+ \sup_{|u-u_{k}|\leq \varepsilon,u_{k}\in\mathcal{U}_{K}}||\Xi_{2}||\notag\\
=& \sqrt{N\land T}+ \sup_{|u-u_{k}|\leq \varepsilon,u_{k}\in\mathcal{U}_{K}}||\Xi_{2}||\notag
\end{align}
where $\Xi_{1}$ and $\Xi_{2}$ are defined in equations \eqref{Xi1} and \eqref{Xi2} in the proof of equation \eqref{3.eqA1}. The second inequality holds because all the elements in $\Xi_{1}$ are equal to $u-u_{k}$, whose magnitude is bounded by $\varepsilon$ and the spectral norm of a matrix of all ones is equal to $\sqrt{NT}$. The last equality is by $\varepsilon=1/\sqrt{N\lor T}$. Hence,
\begin{equation}\label{eqA2.5}
\mathbb{P}(\mathcal{F}_{2}>M)\leq \mathbb{P}\left(\sup_{|u-u_{k}|\leq \varepsilon,u_{k}\in\mathcal{U}_{K}}||\Xi_{2}||>M-\sqrt{N\land T}\right)
\end{equation}

Now we bound $\sup_{|u-u_{k}|\leq \varepsilon,u_{k}\in\mathcal{U}_{K}}||\Xi_{2}||$. By definition, for a generic matrix $N\times T$ matrix $A$, $||A||\coloneqq\sup_{||x||_{F}=1}||Ax||_{F}$ where $x$ is a $T\times 1$ vector. Suppose all the elements in $A$ have the same sign. Then, the supremum is achieved only if all the elements in $x$ also have the same sign and thus $\sup_{||x||_{F}=1}||Ax||_{F}= \sup_{||x||_{F}=1}\big|\big|A\cdot |x|\big|\big|_{F}$. Meanwhile, for a matrix $B$ whose elements also have the same sign with $|B_{it}|\geq |A_{it}|$ for all $i$ and $t$, we have $\big|\big|A\cdot |x|\big|\big|_{F}\leq\big|\big|B\cdot |x|\big|\big|_{F}$. Therefore,
\begin{equation}
||A||=\sup_{||x||_{F}=1}\big|\big|A\cdot |x|\big|\big|_{F}\leq \sup_{||x||_{F}=1}\big|\big|B\cdot |x|\big|\big|_{F}= \sup_{||x||_{F}=1}||Bx||_{F}=||B||\label{eqA2.key}
\end{equation}
Hence,
\begin{align}
\sup_{|u-u_{k}|\leq \varepsilon,u_{k}\in\mathcal{U}_{K}} ||\Xi_{2}||\leq 
& \sup_{\substack{u_{k}-\varepsilon\leq u\leq u_{k}\\u_{k}\in\mathcal{U}_{k}}}||\Xi_{2}||+\sup_{\substack{u_{k}\leq u\leq u_{k}+\varepsilon\\u_{k}\in\mathcal{U}_{k}}}||\Xi_{2}||\notag\\
\leq &  \max_{u_{k}\in\mathcal{U}_{k}}||\Xi_{2}^{(1)}||+\max_{u_{k}\in\mathcal{U}_{k}}||\Xi_{2}^{(2)}||\label{eqA2.6}
\end{align}
where $\Xi_{2}^{(1)}$ and $\Xi_{2}^{(2)}$ are defined in equations \eqref{Xi21} and \eqref{Xi22} in the proof of equation \eqref{3.eqA1} and do not depend on $u$. To see why the second inequality holds, recall that the elements in $\Xi_{2}$ are all nonnegative when $u<u_{k}$ and all nonpositive when $u>u_{k}$, and in either case, we have $|\Xi_{2,it}|\leq |\Xi_{2,it}^{(\iota)}|,\iota=1,2$ for all $i,t$. Inequality \eqref{eqA2.6} is thus implied by inequality \eqref{eqA2.key}.

Again, let us only derive the bound on $\max_{u_{k}\in\mathcal{U}_{k}}||\Xi_{2}^{(1)}||$ because the bound on $\max_{u_{k}\in\mathcal{U}_{k}}||\Xi_{2}^{(2)}||$ follows the same argument. Recall that matrix $\bar{\Xi}_{2}^{(1)}\coloneqq (\varepsilon)_{i,t}$ is the conditional mean of $\Xi_{2}^{(1)}$ given $W$. Note that $\max_{u_{k}\in\mathcal{U}_{K}}||\bar{\Xi}_{2}^{(1)}||= \varepsilon\sqrt{NT}=\sqrt{N\land T}$ by $\varepsilon=1/\sqrt{N\lor T}$. Recall that $C_{op}>2$. Therefore, there exists universal constant $0<c''\leq (C_{op}-2)$ and $c'''>0$ such that
\begin{align}
&\mathbb{P}\left(\max_{u_{k}\in\mathcal{U}_{k}}||\Xi_{2}^{(1)}||>M-\sqrt{N\land T}\right)\notag\\
\leq& \mathbb{P}\left(\max_{u_{k}\in\mathcal{U}_{k}}||\Xi_{2}^{(1)}-\bar{\Xi}_{2}^{(1)}||+ \max_{u_{k}\in\mathcal{U}_{k}}||\bar{\Xi}_{2}^{(1)}||>C_{op}\sqrt{N\lor T}-\sqrt{N\land T}\right)\notag\\
\leq &\mathbb{P}\left(\max_{u_{k}\in\mathcal{U}_{k}}||\Xi_{2}^{(1)}-\bar{\Xi}_{2}^{(1)}||>C_{op}\sqrt{N\lor T}-2\sqrt{N\land T}\right)\notag\\
 \leq &\sqrt{N\lor T}\max_{u_{k}\in\mathcal{U}_{K}}\mathbb{P}\left(||\Xi_{2}^{(1)}-\bar{\Xi}_{2}^{(1)}||>(C_{op}-2)\sqrt{N\lor T}\right)\notag\\
 =&\sqrt{N\lor T}\max_{u_{k}\in\mathcal{U}_{K}}\mathbb{E}\left[\mathbb{P}\left(||\Xi_{2}^{(1)}-\bar{\Xi}_{2}^{(1)}||>(C_{op}-2)\sqrt{N\lor T}\big |W\right)\right]\notag\\
 \leq & \frac{c''\sqrt{N\lor T}}{\exp\left[c'''(C_{op}-2)(N\lor T)\right]}\to 0\label{eqA2.7}
\end{align}
where the last inequality follows the same argument for equation \eqref{eqA2.3} since the elements in $(\Xi_{2}^{(1)}(u_{k})-\bar{\Xi}_{2}^{(1)}(u_{k}))$ are independent with zero mean and are bounded in $[-1,1]$ conditional on $W$. Similarly we have $\mathbb{P}\left(\max_{u_{k}\in\mathcal{U}_{k}}||\Xi_{2}^{(2)}||>M-\sqrt{N\land T}\right)\to 0$ as well. Combining it with equations \eqref{eqA2.2}, \eqref{eqA2.3}, \eqref{eqA2.5}, \eqref{eqA2.6} and \eqref{eqA2.7}, we have the desired result.\qed

\subsubsection{Proof of Lemma \ref{3.lemA3}}
For any fixed $w_{1}\in \mathbb{R}$, $\mathbbm{1}(w_{1}\leq z)$ is weakly increasing in $z$. So, if $z\geq 0$, we have $\mathbbm{1}(w_{1}\leq z)-\mathbbm{1}(w_{1}\leq 0)\geq 0$. Then, when $w_{2}>0$, for any $\kappa\in (0,1]$, 
\begin{align*}
&\int_{0}^{w_{2}}\big(\mathbbm{1}(w_{1}\leq z)-\mathbbm{1}(w_{1}\leq 0)\big)dz- \int_{0}^{\kappa w_{2}}\big(\mathbbm{1}(w_{1}\leq z)-\mathbbm{1}(w_{1}\leq 0)\big)dz\\
=& \int_{\kappa w_{2}}^{w_{2}}\big(\mathbbm{1}(w_{1}\leq z)-\mathbbm{1}(w_{1}\leq 0)\big)dz\geq 0
\end{align*}

When $w_{2}\leq 0$, note that $\int_{0}^{w_{2}}\big(\mathbbm{1}(w_{1}\leq z)-\mathbbm{1}(w_{1}\leq 0)\big)dz=\int_{w_{2}}^{0}\big(\mathbbm{1}(w_{1}\leq 0)-\mathbbm{1}(w_{1}\leq z)\big)dz$ and $\int_{0}^{\kappa w_{2}}\big(\mathbbm{1}(w_{1}\leq z)-\mathbbm{1}(w_{1}\leq 0)\big)dz=\int_{\kappa w_{2}}^{0}\big(\mathbbm{1}(w_{1}\leq 0)-\mathbbm{1}(w_{1}\leq z)\big)dz$. Now that $z\leq 0$, $\mathbbm{1}(w_{1}\leq 0)-\mathbbm{1}(w_{1}\leq z)\geq 0$. Therefore, 
\begin{align*}
&\int_{0}^{w_{2}}\big(\mathbbm{1}(w_{1}\leq z)-\mathbbm{1}(w_{1}\leq 0)\big)dz- \int_{0}^{\kappa w_{2}}\big(\mathbbm{1}(w_{1}\leq z)-\mathbbm{1}(w_{1}\leq 0)\big)dz\\
=& \int^{\kappa w_{2}}_{w_{2}}\big(\mathbbm{1}(w_{1}\leq 0)-\mathbbm{1}(w_{1}\leq z)\big)dz\geq 0
\end{align*}\qed

\section{Additional Simulation Results}\label{app.simulation}

We present the simulation results under $\phi=0.1$ and $0.3$ in this section. Recall that $\phi$ governs the correlation between the covariates and the common component. From the results, we can see that the patterns we find in Table \ref{tab.n.0.2} and Table \ref{tab.t.0.2} are preserved under these values of $\phi$. The penalized estimator of both the coefficients and the low-rank component converges to the true parameter values as $N$ and $T$ increase. In many cases, the performance of the penalized estimator and the iterative estimator are comparable, and the former sometimes performs better than the latter; for instance when $\phi=0.1,u=0.8$ and $(N,T)=(500,500)$, the penalized estimator has smaller bias in the coefficient estimator under both types of error distributions. Meanwhile, the penalized estimator runs much faster in all specifications. 

Now we compare the differences between $\text{MSE}_{L}$ and $\text{MSE}_{q}$ across different $\phi$s. This difference reflects how well the estimation error for the low-rank matrix can be separated from linear combinations of the covariates. Higher correlation between the covariates and the low-rank matrix results in larger difference between $\text{MSE}_{L}$ and $\text{MSE}_{q}$ because the restricted strong convexity constant $C_{RSC}$ tends to be smaller. From the results, when $\phi=0.1$ (Tables \ref{tab.n.0.1} and \ref{tab.t.0.1}), being the smallest among the three values of $\phi$, we can see that the differences between $\text{MSE}_{L}$ and $\text{MSE}_{q}$ for the penalized estimator are almost negligible. This is because a small $\phi$ results in a larger $C_{RSC}$. When $\phi=0.3$ (Tables \ref{tab.n.0.3} and \ref{tab.t.0.3}), $\text{MSE}_{L}$ becomes much larger than $\text{MSE}_{q}$, although both still shrink towards zero as $N$ and $T$ increase.
    \begin{table}[H]
\centering
\caption{Standard Normal Error. Parameter $\phi=0.1$}\label{tab.n.0.1}
\begin{tabular}{ccccccccccccc}
\hline
\hline
&&\multicolumn{3}{c}{$\text{Bias}^{2}_{\beta}\times 10^{2}$}&  \multicolumn{2}{c}{$\text{Var}_{\beta}\times 10^{4}$} &\multicolumn{2}{c}{$\text{MSE}_{L}$}&\multicolumn{2}{c}{$\text{MSE}_{q}$}&\multicolumn{2}{c}{\begin{tabular}{@{}c@{}}Time\\(second)
\end{tabular}}\\
\cline{2-13}
$u$& $(N,T)$ & Nu & It & Po  & Nu & It &Nu & It   & Nu & It & Nu & It \\
\hline
\multirow{10}{*}{0.2}&$(200,200)$&0.66&0.02&4.7&3.30&2.08&0.17&0.04&0.17&0.04&1&48\tabularnewline
&$(300,300)$&0.49&0.01&4.8&1.55&1.09&0.11&0.03&0.10&0.03&3&142\tabularnewline
&$(400,400)$&0.38&0.007&4.9&0.68&0.48&0.09&0.02&0.07&0.02&9&367\tabularnewline
&$(500,500)$&0.30&0.005&4.9&0.50&0.36&0.07&0.01&0.06&0.01&17&658\tabularnewline
\cline{2-13}
&$(200,300)$&0.68&0.02&4.9&2.07&1.38&0.15&0.03&0.13&0.03&2&78\tabularnewline
&$(300,200)$&0.66&0.02&4.8&2.17&1.24&0.15&0.03&0.13&0.03&2&79\tabularnewline
\cline{2-13}
&$(200,400)$&0.64&0.01&4.8&1.74&0.95&0.14&0.03&0.12&0.03&3&118\tabularnewline
&$(400,200)$&0.63&0.01&4.9&1.55&1.15&0.14&0.03&0.12&0.03&3&120\tabularnewline
\cline{2-13}
&$(200,500)$&0.63&0.01&5.0&1.21&0.79&0.14&0.03&0.11&0.03&4&166\tabularnewline
&$(500,200)$&0.64&0.01&5.0&1.34&0.76&0.14&0.03&0.11&0.03&4&165\tabularnewline
\hline
\multirow{10}{*}{0.5}&$(200,200)$&0.05&0.04&8.3&2.52&2.53&0.27&0.16&0.25&0.16&1&34\tabularnewline
&$(300,300)$&0.03&0.02&8.6&0.95&1.09&0.14&0.14&0.13&0.14&3&88\tabularnewline
&$(400,400)$&0.02&0.007&8.3&0.45&0.60&0.09&0.13&0.08&0.13&7&207\tabularnewline
&$(500,500)$&0.02&0.006&8.4&0.33&0.49&0.07&0.12&0.06&0.12&12&347\tabularnewline
\cline{2-13}
&$(200,300)$&0.04&0.03&8.4&1.30&1.45&0.14&0.15&0.13&0.14&1&59\tabularnewline
&$(300,200)$&0.05&0.03&8.5&1.28&1.54&0.15&0.15&0.13&0.15&1&55\tabularnewline
\cline{2-13}
&$(200,400)$&0.04&0.03&8.7&0.93&1.21&0.11&0.15&0.09&0.14&2&80\tabularnewline
&$(400,200)$&0.04&0.02&8.5&1.03&1.28&0.11&0.15&0.09&0.15&2&72\tabularnewline
\cline{2-13}
&$(200,500)$&0.03&0.02&8.3&0.86&1.08&0.09&0.13&0.08&0.13&3&118\tabularnewline
&$(500,200)$&0.04&0.02&8.3&0.79&1.14&0.10&0.14&0.08&0.14&3&106\tabularnewline
\hline
\multirow{10}{*}{0.8}&$(200,200)$&0.05&0.21&14.0&5.92&5.07&0.82&0.52&0.72&0.46&2&73\tabularnewline
&$(300,300)$&0.02&0.06&13.8&2.36&1.97&0.38&0.26&0.34&0.24&6&167\tabularnewline
&$(400,400)$&0.01&0.03&13.9&1.08&1.17&0.24&0.19&0.22&0.19&17&448\tabularnewline
&$(500,500)$&0.008&0.02&14.1&0.50&0.60&0.17&0.17&0.16&0.16&32&880\tabularnewline
\cline{2-13}
&$(200,300)$&0.04&0.14&14.0&3.90&3.12&0.53&0.37&0.47&0.33&3&111\tabularnewline
&$(300,200)$&0.03&0.09&13.6&3.53&3.22&0.53&0.35&0.47&0.32&3&104\tabularnewline
\cline{2-13}
&$(200,400)$&0.03&0.12&14.0&2.78&2.59&0.45&0.31&0.40&0.28&5&154\tabularnewline
&$(400,200)$&0.03&0.06&13.8&2.68&2.08&0.45&0.30&0.40&0.28&5&152\tabularnewline
\cline{2-13}
&$(200,500)$&0.03&0.10&14.0&1.99&1.93&0.42&0.28&0.38&0.26&7&203\tabularnewline
&$(500,200)$&0.03&0.05&14.0&2.28&1.88&0.42&0.27&0.37&0.26&7&200\tabularnewline
\hline
\multicolumn{13}{l}{\footnotesize \textit{Note}: Columns Nu, It and Po report the results of the nuclear norm penalized estimator proposed in}\tabularnewline
\multicolumn{13}{l}{\footnotesize this paper, the iterative estimator \eqref{estimatorIt} and the pooled estimator \eqref{estimatorPo}, averaged over 100 simulations.}
     \end{tabular}
    \end{table}
    
  \begin{table}[H]
\centering
\caption{Student $t$-Distributed Error (Degree of Freedom$=2$). Parameter $\phi=0.1$}\label{tab.t.0.1}
\begin{tabular}{ccccccccccccc}
\hline
\hline
&&\multicolumn{3}{c}{$\text{Bias}^{2}_{\beta}\times 10^{2}$}&  \multicolumn{2}{c}{$\text{Var}_{\beta}\times 10^{4}$} &\multicolumn{2}{c}{$\text{MSE}_{L}$}&\multicolumn{2}{c}{$\text{MSE}_{q}$}&\multicolumn{2}{c}{\begin{tabular}{@{}c@{}}Time\\(second)
\end{tabular}}\\
\cline{2-13}
$u$& $(N,T)$ & Nu & It & Po  & Nu & It &Nu & It   & Nu & It & Nu & It \\
\hline
\multirow{10}{*}{0.2}&$(200,200)$&1.15&0.04&4.0&6.96&4.57&0.34&0.09&0.35&0.09&1&42\tabularnewline
&$(300,300)$&0.88&0.03&4.3&2.72&1.90&0.22&0.05&0.21&0.05&5&130\tabularnewline
&$(400,400)$&0.65&0.02&4.2&1.39&1.13&0.16&0.04&0.15&0.04&12&363\tabularnewline
&$(500,500)$&0.52&0.01&4.2&0.77&0.74&0.13&0.03&0.12&0.03&24&659\tabularnewline
\cline{2-13}
&$(200,300)$&1.11&0.03&4.2&4.82&3.03&0.27&0.07&0.26&0.07&2&75\tabularnewline
&$(300,200)$&1.11&0.03&4.3&4.44&2.88&0.27&0.07&0.26&0.07&2&72\tabularnewline
\cline{2-13}
&$(200,400)$&1.09&0.03&4.4&3.43&2.35&0.25&0.06&0.23&0.06&3&114\tabularnewline
&$(400,200)$&1.05&0.03&4.2&3.42&2.21&0.25&0.06&0.23&0.06&4&109\tabularnewline
\cline{2-13}
&$(200,500)$&1.01&0.03&4.3&2.47&1.58&0.25&0.05&0.22&0.05&5&158\tabularnewline
&$(500,200)$&1.03&0.03&4.2&2.47&1.52&0.25&0.06&0.22&0.05&4&150\tabularnewline
\hline
\multirow{10}{*}{0.5}&$(200,200)$&0.07&0.05&8.8&2.97&2.42&0.33&0.17&0.30&0.17&1&36\tabularnewline
&$(300,300)$&0.04&0.02&8.8&1.13&1.27&0.17&0.15&0.15&0.14&4&89\tabularnewline
&$(400,400)$&0.03&0.01&8.9&0.57&0.74&0.12&0.14&0.10&0.14&10&222\tabularnewline
&$(500,500)$&0.02&0.006&8.9&0.36&0.49&0.09&0.13&0.08&0.13&20&338\tabularnewline
\cline{2-13}
&$(200,300)$&0.06&0.04&9.0&1.58&1.77&0.18&0.16&0.16&0.16&2&56\tabularnewline
&$(300,200)$&0.06&0.03&9.0&1.69&1.76&0.18&0.17&0.16&0.16&2&54\tabularnewline
\cline{2-13}
&$(200,400)$&0.05&0.03&8.8&1.29&1.40&0.14&0.16&0.11&0.15&3&71\tabularnewline
&$(400,200)$&0.05&0.03&8.8&1.25&1.47&0.14&0.16&0.11&0.15&3&80\tabularnewline
\cline{2-13}
&$(200,500)$&0.05&0.03&8.9&0.98&1.26&0.12&0.15&0.09&0.15&4& 100\tabularnewline
&$(500,200)$&0.05&0.02&8.9&0.90&1.17&0.12&0.16&0.09&0.15&4&104\tabularnewline
\hline
\multirow{10}{*}{0.8}&$(200,200)$&0.07&0.46&14.3&11&11&1.06&0.90&0.93&0.81&2&73\tabularnewline
&$(300,300)$&0.03&0.13&14.3&3.73&3.94&0.51&0.46&0.45&0.44&7&249\tabularnewline
&$(400,400)$&0.02&0.05&13.9&2.00&2.28&0.33&0.29&0.29&0.28&17&514\tabularnewline
&$(500,500)$&0.01&0.03&14.2&0.86&1.07&0.24&0.23&0.22&0.23&33&830\tabularnewline
\cline{2-13}
&$(200,300)$&0.06&0.31&14.2&6.33&6.44&0.68&0.70&0.59&0.63&3&143\tabularnewline
&$(300,200)$&0.06&0.24&14.2&6.11&6.53&0.70&0.60&0.60&0.64&3&136\tabularnewline
\cline{2-13}
&$(200,400)$&0.05&0.25&14.5&4.47&4.65&0.57&0.57&0.49&0.52&5&226\tabularnewline
&$(400,200)$&0.05&0.16&14.0&4.42&4.40&0.56&0.55&0.49&0.51&5&205\tabularnewline
\cline{2-13}
&$(200,500)$&0.05&0.21&14.4&3.15&3.50&0.53&0.50&0.46&0.46&7&291\tabularnewline
&$(500,200)$&0.04&0.11&13.5&3.43&3.50&0.53&0.46&0.46&0.44&7&292\tabularnewline
\hline
\multicolumn{13}{l}{\footnotesize \textit{Note}: Columns Nu, It and Po report the results of the nuclear norm penalized estimator proposed in}\tabularnewline
\multicolumn{13}{l}{\footnotesize this paper, the iterative estimator \eqref{estimatorIt} and the pooled estimator \eqref{estimatorPo}, averaged over 100 simulations.}
     \end{tabular}
    \end{table}

      \begin{table}[H]
\centering
\caption{Standard Normal Error. Parameter $\phi=0.3$}\label{tab.n.0.3}
\begin{tabular}{ccccccccccccc}
\hline
\hline
&&\multicolumn{3}{c}{$\text{Bias}^{2}_{\beta}\times 10^{2}$}&  \multicolumn{2}{c}{$\text{Var}_{\beta}\times 10^{4}$} &\multicolumn{2}{c}{$\text{MSE}_{L}$}&\multicolumn{2}{c}{$\text{MSE}_{q}$}&\multicolumn{2}{c}{\begin{tabular}{@{}c@{}}Time\\(second)
\end{tabular}}\\
\cline{2-13}
$u$& $(N,T)$ & Nu & It & Po  & Nu & It &Nu & It   & Nu & It & Nu & It \\
\hline
\multirow{10}{*}{0.2}&$(200,200)$&4.8&0.19&14.6&5.08&2.03&0.55&0.06&0.31&0.05&1&61\tabularnewline
&$(300,300)$&3.75&0.10&14.8&2.40&0.97&0.42&0.03&0.18&0.03&4&177\tabularnewline
&$(400,400)$&2.99&0.06&14.9&1.51&0.47&0.35&0.02&0.12&0.02&10&474\tabularnewline
&$(500,500)$&2.44&0.04&15.0&1.09&0.31&0.30&0.02&0.09&0.01&20&872\tabularnewline
\cline{2-13}
&$(200,300)$&4.71&0.15&14.9&4.02&1.46&0.52&0.05&0.22&0.03&2&102\tabularnewline
&$(300,200)$&4.73&0.15&14.8&3.84&1.41&0.52&0.05&0.22&0.04&2&102\tabularnewline
\cline{2-13}
&$(200,400)$&4.53&0.13&14.9&3.38&0.99&0.51&0.04&0.20&0.03&3&154\tabularnewline
&$(400,200)$&4.54&0.13&14.8&3.64&1.08&0.51&0.04&0.20&0.03&3&152\tabularnewline
\cline{2-13}
&$(200,500)$&4.52&0.11&15.0&2.86&0.80&0.52&0.04&0.19&0.03&4&219\tabularnewline
&$(500,200)$&4.50&0.11&14.9&2.83&0.79&0.51&0.04&0.19&0.03&4&214\tabularnewline
\hline
\multirow{10}{*}{0.5}&$(200,200)$&0.92&0.34&29.0&5.15&5.05&0.59&0.19&0.28&0.12&1&70\tabularnewline
&$(300,300)$&0.48&0.14&28.7&1.52&3.60&0.31&0.13&0.14&0.10&3&183\tabularnewline
&$(400,400)$&0.35&0.08&28.7&0.73&1.89&0.22&0.10&0.09&0.09&9&473\tabularnewline
&$(500,500)$&0.29&0.05&28.7&0.62&1.75&0.17&0.09&0.07&0.08&16&752\tabularnewline
\cline{2-13}
&$(200,300)$&0.66&0.23&28.8&1.88&4.09&0.37&0.16&0.15&0.11&2&110\tabularnewline
&$(300,200)$&0.65&0.20&28.8&2.18&4.32&0.37&0.16&0.15&0.12&2&101\tabularnewline
\cline{2-13}
&$(200,400)$&0.62&0.20&28.9&1.78&3.95&0.32&0.15&0.11&0.10&3&163\tabularnewline
&$(400,200)$&0.61&0.21&28.7&1.78&3.87&0.32&0.14&0.11&0.10&3&173\tabularnewline
\cline{2-13}
&$(200,500)$&0.59&0.17&29.0&1.27&3.28&0.30&0.14&0.09&0.10&4&232\tabularnewline
&$(500,200)$&0.58&0.18&28.8&1.43&3.26&0.29&0.13&0.09&0.09&4&237\tabularnewline
\hline
\multirow{10}{*}{0.8}&$(200,200)$&0.55&0.58&44.7&6.23&5.52&1.33&0.77&0.73&0.50&2&74\tabularnewline
&$(300,300)$&0.28&0.18&44.0&2.29&2.36&0.64&0.34&0.35&0.26&7&184\tabularnewline
&$(400,400)$&0.17&0.09&43.7&1.11&1.36&0.38&0.23&0.22&0.19&19&440\tabularnewline
&$(500,500)$&0.10&0.05&43.4&0.63&0.88&0.26&0.19&0.16&0.17&36&1000\tabularnewline
\cline{2-13}
&$(200,300)$&0.41&0.33&44.1&4.93&4.57&0.90&0.51&0.48&0.36&4&114\tabularnewline
&$(300,200)$&0.47&0.32&44.3&3.76&3.43&0.93&0.51&0.48&0.36&4&115\tabularnewline
\cline{2-13}
&$(200,400)$&0.40&0.29&44.0&2.77&3.68&0.77&0.42&0.40&0.30&6&177\tabularnewline
&$(400,200)$&0.38&0.22&44.0&2.83&3.26&0.77&0.40&0.40&0.30&6&159\tabularnewline
\cline{2-13}
&$(200,500)$&0.37&0.23&43.6&2.39&2.46&0.72&0.36&0.38&0.27&8&232\tabularnewline
&$(500,200)$&0.37&0.20&43.3&2.37&2.45&0.72&0.36&0.38&0.27&8&202\tabularnewline
\hline
\multicolumn{13}{l}{\footnotesize \textit{Note}: Columns Nu, It and Po report the results of the nuclear norm penalized estimator proposed in}\tabularnewline
\multicolumn{13}{l}{\footnotesize this paper, the iterative estimator \eqref{estimatorIt} and the pooled estimator \eqref{estimatorPo}, averaged over 100 simulations.}
     \end{tabular}
    \end{table}
    
      \begin{table}[H]
\centering
\caption{Student $t$-Distributed Error (Degree of Freedom$=2$). Parameter $\phi=0.3$}\label{tab.t.0.3}
\begin{tabular}{ccccccccccccc}
\hline
\hline
&&\multicolumn{3}{c}{$\text{Bias}^{2}_{\beta}\times 10^{2}$}&  \multicolumn{2}{c}{$\text{Var}_{\beta}\times 10^{4}$} &\multicolumn{2}{c}{$\text{MSE}_{L}$}&\multicolumn{2}{c}{$\text{MSE}_{q}$}&\multicolumn{2}{c}{\begin{tabular}{@{}c@{}}Time\\(second)
\end{tabular}}\\
\cline{2-13}
$u$& $(N,T)$ & Nu & It & Po  & Nu & It &Nu & It   & Nu & It & Nu & It \\
\hline
\multirow{10}{*}{0.2}&$(200,200)$&5.58&0.41&13.4&7.48&4.20&0.68&0.15&0.48&0.11&1&58\tabularnewline
&$(300,300)$&4.56&0.20&13.5&3.30&1.85&0.52&0.07&0.28&0.06&4&171\tabularnewline
&$(400,400)$&3.74&0.11&13.5&2.20&0.95&0.44&0.05&0.20&0.04&13&466\tabularnewline
&$(500,500)$&3.21&0.07&13.7&1.45&0.71&0.39&0.04&0.16&0.03&27&881\tabularnewline
\cline{2-13}
&$(200,300)$&5.38&0.30&13.6&5.31&2.48&0.61&0.11&0.33&0.08&3&99\tabularnewline
&$(300,200)$&5.43&0.31&13.6&5.63&2.95&0.62&0.11&0.33&0.08&2&93\tabularnewline
\cline{2-13}
&$(200,400)$&5.24&0.24&13.7&4.86&2.16&0.60&0.09&0.29&0.07&3&153\tabularnewline
&$(400,200)$&5.19&0.24&13.6&4.51&1.95&0.59&0.09&0.29&0.07&3&147\tabularnewline
\cline{2-13}
&$(200,500)$&5.19&0.22&13.7&4.20&1.67&0.60&0.08&0.28&0.06&5&212\tabularnewline
&$(500,200)$&5.04&0.22&13.5&4.32&1.79&0.59&0.08&0.27&0.06&5&208\tabularnewline
\hline
\multirow{10}{*}{0.5}&$(200,200)$&1.21&0.38&29.5&7.58&6.24&0.75&0.22&0.35&0.15&1&68\tabularnewline
&$(300,300)$&0.66&0.17&29.5&1.76&3.57&0.40&0.15&0.18&0.11&4&176\tabularnewline
&$(400,400)$&0.44&0.09&29.4&0.95&2.66&0.27&0.12&0.12&0.10&11&441\tabularnewline
&$(500,500)$&0.35&0.05&29.5&0.62&1.85&0.21&0.11&0.09&0.10&22&666\tabularnewline
\cline{2-13}
&$(200,300)$&0.86&0.25&29.5&2.70&4.65&0.48&0.18&0.18&0.12&2&107\tabularnewline
&$(300,200)$&0.86&0.28&29.5&3.27&4.86&0.47&0.17&0.18&0.12&2&111\tabularnewline
\cline{2-13}
&$(200,400)$&0.81&0.21&29.9&2.12&4.13&0.41&0.17&0.13&0.12&3&150\tabularnewline
&$(400,200)$&0.80&0.24&29.8&2.59&4.69&0.41&0.16&0.13&0.11&3&161\tabularnewline
\cline{2-13}
&$(200,500)$&0.72&0.20&29.5&1.65&3.53&0.36&0.15&0.11&0.11&5&229\tabularnewline
&$(500,200)$&0.70&0.19&29.5&2.03&4.57&0.35&0.15&0.11&0.11&4&220\tabularnewline
\hline
\multirow{10}{*}{0.8}&$(200,200)$&0.82&1.89&45.1&11&20&1.80&1.60&0.95&0.91&2&88\tabularnewline
&$(300,300)$&0.47&0.44&46.7&4.08&7.58&0.91&0.63&0.47&0.47&7&276\tabularnewline
&$(400,400)$&0.27&0.17&45.7&2.03&2.66&0.55&0.35&0.30&0.30&20&489\tabularnewline
&$(500,500)$&0.19&0.10&45.7&1.10&1.67&0.39&0.26&0.22&0.23&39&788\tabularnewline
\cline{2-13}
&$(200,300)$&0.66&0.97&45.3&7.16&9.50&1.19&1.04&0.59&0.70&4&148\tabularnewline
&$(300,200)$&0.63&0.86&45.2&6.72&9.41&1.17&1.00&0.59&0.69&4&146\tabularnewline
\cline{2-13}
&$(200,400)$&0.58&0.66&45.3&5.10&7.45&1.02&0.82&0.50&0.58&6&252\tabularnewline
&$(400,200)$&0.60&0.59&45.5&4.22&8.92&1.03&0.79&0.50&0.57&5&207\tabularnewline
\cline{2-13}
&$(200,500)$&0.59&0.56&45.7&4.12&5.58&0.98&0.69&0.47&0.50&8&329\tabularnewline
&$(500,200)$&0.59&0.45&45.9&4.06&5.10&0.98&0.66&0.47&0.50&8&296\tabularnewline
\hline
\multicolumn{13}{l}{\footnotesize \textit{Note}: Columns Nu, It and Po report the results of the nuclear norm penalized estimator proposed in}\tabularnewline
\multicolumn{13}{l}{\footnotesize this paper, the iterative estimator \eqref{estimatorIt} and the pooled estimator \eqref{estimatorPo}, averaged over 100 simulations.}
     \end{tabular}
    \end{table}
\end{appendices}
\bibliography{/Users/Ecthelion/Documents/Econometrics/RobustPCA/paper/travel/bibqr}
\end{document}